\newcommand{\qstate}[1]{\ensuremath{\vert #1 \rangle}}
\newcommand{\vsep}{\ensuremath{\ \vert\ }}
\newcommand{\myFrameworkName}{\textit{QECV}}
\newcommand{\langname}{\textit{QECV-Lang}}
\newcommand{\assnname}{\textit{QECV-Assn}}
\newcommand{\svar}{\ensuremath{\mathcal{s}}}
\newtheorem{theorem}{Theorem}[section]
\newtheorem{proposition}[theorem]{Proposition}
\newtheorem{definition}[theorem]{Definition}
\newtheorem{lemma}[theorem]{Lemma}
\newtheorem{example}[theorem]{Example}
\newtheorem{program}{Program}[section]
\newcounter{cnt}
\newcommand\showcnt{\addtocounter{cnt}{1}\thecnt}
\newcommand{\conf}[1]{\ensuremath{\langle #1 \rangle}}
\newcommand{\denot}[1]{\ensuremath{[\![ #1 ]\!]}}
\newcommand{\ket}[1]{\ensuremath{\vert #1 \rangle}}
\newcommand{\bra}[1]{\ensuremath{\langle #1 \vert}}
\newcommand{\qif}[3]{\ensuremath{\textbf{if}\ M[#1]\ \textbf{then}\ #2\ \textbf{else}\ #3\ \textbf{end}}}
\newcommand{\stabnum}{\ensuremath{w}}
\newcommand{\qwhilelang}{quantum \textbf{while}-language}
\newcommand{\postpone}[1]{{Details are postponed to Appendix~{#1}}}
\newcommand{\myquad}{{\color{white}-}}
\newcommand{\qifn}[3]{\ensuremath{\textbf{if}\ M[#1]\ \textbf{then}\\\myquad #2\ \\\textbf{else}\\\myquad#3\ \\\textbf{end}}}
\newcommand{\prog}{\text{Prog}}
\newcommand{\projector}{P}
\newcommand{\myfontsize}{9pt}
\newcommand{\mylinesize}{9pt}
\newcommand{\myproof}[1]{{\noindent\textit{Proof}. #1 \hfill \qed}}
\newcommand{\nothmskip}{}
\title{\myFrameworkName: A Verification Framework for Quantum Error Correction Codes}         %
\author{%
	Anbang Wu \\
	Department of Computer Science\\
	University of California, Santa Barbara \\
	\texttt{anbang@ucsb.edu} \\
	\And
	Gushu Li \\
	Department of Electrical \& Computer Engineering\\
	University of California, Santa Barbara \\
	\texttt{gushuli@ece.ucsb.edu} \\
	\AND
	Hezi Zhang \\
	Department of Computer Science \\
	University of California, Santa Barbara \\
	\texttt{hezi@ucsb.edu} \\
	\AND
	Gian Giacomo Guerreschi \\
	Intel Labs \\
	Santa Clara, California \\
	\texttt{gian.giacomo.guerreschi@intel.com}
	\AND
	Yuan Xie \\
	Department of Electrical \& Computer Engineering\\
	University of California, Santa Barbara \\
	\texttt{yuanxie@ucsb.edu} \\
	\AND
	Yufei Ding \\
	Department of Computer Science\\
	University of California, Santa Barbara \\
	\texttt{yufeiding@cs.ucsb.edu} \\
} 
\begin{document}
	
	\maketitle

\begin{abstract}

Quantum Error Correction (QEC) is essential for the functioning of large-scale fault-tolerant quantum computers, and its implementation is a very sophisticated process involving both quantum and classical hardware. Formulating and verifying the decomposition of logical operations into physical ones is a challenge in itself.
In this paper, we propose {\myFrameworkName}, a verification framework that can efficiently verify the formal correctness of stabilizer codes, arguably the most important class of QEC codes.
{\myFrameworkName} first comes with a concise language, {\langname},  where stabilizers are treated as a first-class object, to represent QEC programs.
Stabilizers are also used as predicates in our new assertion language, {\assnname}, as logical and arithmetic operations of stabilizers can be naturally defined.
We derive a sound quantum Hoare logic proof system with a set of inference rules for {\myFrameworkName} to efficiently reason about the correctness of QEC programs.
We demonstrate the effectiveness of {\myFrameworkName} with both theoretical complexity analysis and in-depth case studies of two well-known stabilizer QEC codes, the repetition code and the surface code.

\end{abstract}

\twocolumn
\section{Introduction}

Quantum error correcting (QEC) codes~\cite{Fowler2012SurfaceCT, Chamberland2020BuildingAF, Chamberland2020TopologicalAS} are vital for implementing fault-tolerant quantum computation and overcoming the noise present in quantum hardware~\cite{Preskill2018QuantumCI, Holmes2020NISQBQ}. Quantum device vendors are exploring various quantum error correction codes to boost the error tolerance of quantum computation. For example, Google exploits the repetition code~\cite{nielsen2002quantum} to suppress errors in their Sycamore device~\cite{google50296}, IBM extends the surface code~\cite{Fowler2012SurfaceCT} to their low-degree superconducting quantum computers~\cite{Chamberland2020}, and Amazon utilizes the concatenated cat code~\cite{Chamberland2020BuildingAF} to build a fault-tolerant qubit.

A central concept in QEC code design is that of a ``stabilizer''~\cite{Gottesman1997StabilizerCA}. The term refers to a quantum operator which expresses the correlations present among the physical qubits forming the logical qubit. Operations to encode logical states or detect and correct errors can be derived once the stabilizers of the QEC code are provided.
In a stabilizer code, these primitive operations over logical qubits consist of quantum programs, one for each primitive.
As an example, Fowler et al.~\cite{Fowler2012SurfaceCT} developed a series of programs on the surface code to implement the primitive operations (e.g., a logical X gate, H gate, and CNOT gate) necessary for universal fault-tolerant quantum computation.
While executing these primitives, any stabilizer code implementation requires frequent measurements of the physical qubits to detect possible hardware errors and, thus, apply the appropriate correction operation.

When analyzing the correctness of a stabilizer code, there are two key aspects that need to be considered:
1) \textit{the correctness of the logical operation:} The stabilizer code must implement the desired logical operation over the logical qubits by applying several physical operations over the consituent physical qubits. %
2) \textit{the capability of error correction:} 
When hardware errors happen, there exist protocols for error decoding and correction which are based on the information extracted by measurements in QEC codes. %

To the best of our knowledge, there is no formal verification framework for QEC codes yet.
Previous works on quantum error correction~\cite{Fowler2012SurfaceCT, Chamberland2020TopologicalAS, Lao2020FaulttolerantQE, Chao2019FlagFE, Noh2020FaulttolerantBQ} demonstrate the correctness of the proposed QEC protocol by numerical simulation on QEC programs. 
However, this approach does not provide formal proof for the correctness of QEC codes.
We asked ourselves the question:

\begin{center}
    \textit{Can one formally verify QEC codes}\\
    \textit{using existing verification frameworks}\\
    \textit{for general quantum programs since QEC codes} \\
	\textit{are effectively a kind of quantum program?}
\end{center}

In this vein, one well-developed method for quantum program verification~\cite{ Wu2019FullstateQC,Zhou2019AnAQ, Li2020ProjectionbasedRA} is to use dynamic techniques such as quantum simulation. 
This category of methods can accurately characterize the quantum state evolution of small quantum programs but can not scale up to large quantum programs with more than 50 qubits due to the exponential computation overhead~\cite{Wu2019FullstateQC}.
This poor scalability of dynamic methods makes it inefficient for the verification of QEC codes since  
a reasonably fault-tolerant logical qubit would inevitably involve many physical qubits~\cite{Fowler2012SurfaceCT}.

Another type of verification works~\cite{Ying2012FloydhoareLF, Ying2018ReasoningAP, Unruh2019QuantumHL, DHondt2006QuantumWP,Selinger2004TowardsAQ, Feng2020QuantumHL, Feng2021VerificationOD}, exploits static analysis techniques to reason about quantum programs. 
These works all naturally incur exponential computation overhead since they need to track the evolution of some Hermitian matrices, which are of dimension $O(4^n)$ for a $n$-qubit system. 
Yu and Palsberg~\cite{Yu2021QuantumAI} recently proposed a computationally efficient quantum abstract interpretation technique to reason about the correctness for certain kinds of assertions.
Yet, trading in accuracy is not suitable for the verification of QEC codes which requires exact correctness.

Thus, our answer to the question above is:

\begin{center}
    \textit{No, adopting general verification frameworks }\\
    \textit{sacrifices either scalability or accuracy. } 
\end{center}

To this end, we build a formal verification framework crafted for quantum stabilizer codes to squeeze out the best verification efficiency without compromising accuracy.
Our approach rests on a central idea: while realizing scalable verification for a general quantum program is hard, it might be possible 
to efficiently verify the QEC codes
with a delicate separation between the hard and easy parts in the verification process.    
We observe
that most parts in verifying QEC codes turn out to fall into the easy region 
because they can be efficiently processed by preserving the high-level stabilizer information. 
In particular, stabilizers provide a compact description for QEC codes~\cite{Gottesman1997StabilizerCA}. Major components in QEC codes, e.g., error channels, Clifford gates, and parity measurements can all be described within the stabilizer formalism. Besides, it only takes $O(n^2)$ complexity to emulate Clifford operations on stabilizers~\cite{nielsen2002quantum}. Using stabilizers as predicates, we can potentially avoid unnecessary exponential computation overhead  in general quantum program reasoning.

We first propose a concise QEC programming language, {\langname}, where stabilizers are treated as first-class objects. 
This allows {\langname} to represent in an intuitive and compact form different operations in the QEC implementation, ranging from encoding to decoding and to error correction.
We develop operational semantics and denotational semantics for {\langname}, which lays the foundation for building up the syntax-directed verification system. 
One key enabler for our semantics design is the separation between quantum states of the physical qubits and the information captured by the stabilizers. 
It describes the former with partial density matrices and treats the latter as a classic program state. 
It significantly simplifies the computations associated with the stabilizers by avoiding the direct description of how the stabilizers are measured and instead focusing on how the high-level information is used in the decoding stage.

We further develop a new assertion language, named {\assnname}, in which the predicates are defined by stabilizers.
To enhance the logical expressive power, we introduce not only the standalone stabilizers, but also the arithmetic and logical expressions of them for expressing assertions.
The key insight behind such a design is to form a universal state space for verification, as a standalone stabilizer could not represent the whole state space. 
We also remark that, despite our {\langname} and {\assnname} are crafted for QEC codes, their design allows for broader applicability. Any quantum programs and quantum predicates that could be expressed in the {\qwhilelang}~\cite{Ying2012FloydhoareLF} and Hermitian-based predicates~\cite{DHondt2006QuantumWP} can also be %
expressed in our languages. With such an elegant property, our languages could potentially serve as the common foundation of both general quantum programs and QEC designs. It will therefore avoid the dilemma of choosing
between a general but less effective language or a domain-specific but more effective language.

Together with {\langname} and {\assnname}, we further establish a sound quantum Hoare logic for QEC programs. 
This proof system can demonstrate exponential time and space saving for most QEC operations (e.g., state preparation, Pauli gates, and error detection) when, in a real QEC program, the predicate formulae is commutable with the stabilizer variable in our quantum Hoare logic.
Even for the most challenging verification of the logical T gate implementation, our proof system may still have this strong advantage, depending on the actual T gate implementation of the target QEC code.

We give both theoretical analysis and detailed case studies for evaluating the proposed framework. 
We first compare {\myFrameworkName} with the vanilla quantum Hoare logic~\cite{Ying2012FloydhoareLF} and quantum \textbf{while}-language~\cite{Ying2012FloydhoareLF} in terms of the complexity when describing and verifying the quantum stabilizer codes. 
We then give very detailed, step-by-step case studies of two well-known QEC codes. This allows the reader to familiarize the concepts behind our framework and its usage for verifying the correctness of quantum stabilizer codes.

To summarize, our major contributions are as follows:
\begin{itemize}
    \item We propose {\langname}, a concise  programming language for QEC codes and give a full specification of its syntax and operational/denotational semantics. 
    \item We formulate a new assertion language {\assnname}. It is the first effort that exploits stabilizers for building universal assertions on quantum states.
    \item We develop a sound quantum Hoare logic framework based on {\assnname} and {\langname} with a set of inference rules to verify the correctness of QEC programs. 
    \item  We demonstrate the effectiveness of our framework with both theoretical complexity analysis and in-depth case studies of two well-known stabilizer QEC codes.
\end{itemize}

\section{Background}
In this section, we introduce the background for our work.
We summarize our key notations in Table~\ref{tab:notations}.
We do not cover the basics of quantum computing (e.g.,  density operator, unitary transformation) and recommend~\cite{nielsen2002quantum} for reference.

\begin{table}[b]
\caption{Notation used in this paper.}\label{tab:notations}
\resizebox{0.45\textwidth}{!}{
\begin{tabular}{p{0.2\linewidth} | p{0.75\linewidth}}
$q$, $\bar{q}$ & qubits, a set of qubits, respectively; \\\hline
$\ket{\psi}$, $\ket{\phi}$, $\ket{0}$, $\ket{1}$, $\ket{+}$, $\ket{-}$ & pure quantum states; \\\hline
$\rho$, $\ket{\psi}\bra{\psi}$ & density operators; \\\hline
$U$ & unitary transformations; \\\hline
$O, M$ & Observable; Use $M$ to stress measurement;\\\hline
$\mathcal{H}$ & the Hilbert space of quantum states;  \\\hline
$\mathcal{D(H)}$ & the set of partial density operators on $\mathcal{H}$; \\
\end{tabular} }
\end{table}

\subsection{Quantum Error Correction and Stabilizer}

Most QEC codes consist of three stages: encoding, decoding, and error correction. The encoding protocol projects an unprotected state of the data qubits into the subspace generated by the logical states. The decoding protocol detects potential errors by performing parity measurements on data qubits. Finally, the correction protocol removes the errors by driving the quantum state back to the logical subspace.

\noindent\textbf{Stabilizer}. 
The stabilizer formalism proposed by Gottesman~\cite{Gottesman1997StabilizerCA} provides a unifying description of many QEC codes. 
Given a $n$-qubit state $\ket{\psi}$ and a Pauli string $s \in \otimes^n \{I,X,Y,Z\}$, %
 we say that $s$ is a \textit{stabilizer} of $\ket{\psi}$, or $\ket{\psi}$ is \textit{stabilized} by $s$, if $s\ket{\psi} = \ket{\psi}$.
 When the stabilized state $\ket{\psi}$ is clear in the context, we will simply say that $s$ is a stabilizer, without referring to the stabilized state. 
We then can use multiple stabilizers to naturally identify a subspace, which is the intersection of the stabilizers' projection subspaces, to represent the logical states.
Moreover, stabilizers can also be observables and they can be measured to ascertain whether the state of the data qubits is in the correct subspace. 
These measurements are called \textbf{stabilizer measurements}. 
Another advantage of the stabilizer formalism is that it can describe standard quantum error channels on stabilized states.
Once the stabilizers are determined, the QEC encoding, decoding, and error correction can be easily derived.

Due to the centrality of stabilizers in QEC codes, including them as fundamental concepts of the verification framework is a promising direction. 
Benefiting from the expressiveness of the stabilizer formalism, we believe that our verification framework for quantum stabilizer codes will be applicable to many existing QEC codes and will help designing novel implementations of fault-tolerant operations. 

In the writing time of this paper, we also notice that Rand, Sundaram, Singhal and Lackey~\cite{Rand2021StaticAO, Rand2021ExtendingGT, Rand2021GottesmanTF} develop an elegant type-checking system for 
general quantum programs based on the stabilizer formalism. %
Despite that we share same high-level insights in utilizing stabilizer formalism~\cite{Gottesman1997StabilizerCA}, we differ significantly in the overall optimization goal and the entire design framework. 
In addition, their work only consider quantum circuits and cannot deal with branch statements (e.g., if and while). 
These statements are indispensable for QEC programs. 
In contrast, we can handle branch statements by incorporating stabilizer variables in the design of quantum predicate logic. Last but not least, this paper also develops a compact language for QEC programs while their work follows the vanilla quantum circuit language~\cite{nielsen2002quantum}.

\subsection{Quantum Program Language}
The {\qwhilelang} proposed by Ying~\cite{Ying2012FloydhoareLF} provides a universal description of purely-quantum programs without classical variables. It focuses on characterizing basic quantum program structures
and its syntax is defined in Backus-Naur form as follows:

\begin{tcolorbox}[colback=yellow!10!white,
                  colframe=white!20!black]
\begin{grammar}
\label{equ:qwhile}
  \let\syntleft\relax
  \let\syntright\relax
<$\prog$> $\Coloneqq$ \textbf{skip}
\vsep $q \coloneqq \qstate{0}$
\vsep $\bar{q} \coloneqq U[\bar{q}]$
\vsep $\prog_1;\prog_2$
\alt $\textbf{case}\ M[\bar{q}] = \overline{m \to \prog_m}\ \textbf{end}$
\alt $\textbf{while}\  M[\bar{q}]=1\ \textbf{do}\ \prog_1\ \textbf{done}$
\end{grammar}
\end{tcolorbox}
Here $\prog$ plays the role of a statement of the {\qwhilelang}, but can also indicates the full program when seen as a sequence of statements. In the expressions above, $q$ denotes a quantum variable and $\bar{q}$ represents a quantum register associated with a finite number of quantum variables. The language constructs above are explained as follows: 
\setcounter{cnt}{0}
(\showcnt)  \textbf{skip} does nothing; 
(\showcnt) $q \coloneqq \qstate{0}$ prepare quantum variable $q$ in state $\qstate{0}$; 
(\showcnt) $\bar{q} \coloneqq U[\bar{q}]$ perform unitary operation $U$ on the quantum registers $\bar{q}$; 
(\showcnt) $\prog_1;\prog_2$ is the sequencing of statements; 
(\showcnt) $\textbf{case}\ M[\bar{q}] = \overline{m \to \prog_m}\ \textbf{end}$ measures the quantum variables in $\bar{q}$ with semi-positive Hermitian operators $M=\{M_0,M_1, \cdots, M_m\}$ and executes program $\prog_m$ if the measurement outcome is $m$; 
(\showcnt) $\textbf{while} \; M[\bar{q}]=1\ \textbf{do}\ \prog_1\ \textbf{done}$ measures qubits $\bar{q}$ and executes $\prog_1$ if the measurement outcome is 1. If the measurement outcome is 0, the while loop terminates. Here $M$ is assumed to have only two possible outcomes, $m=0,1$.

The semantics of {\qwhilelang} is developed by assuming partial density operators as quantum program states. 
The details  can be found in ~\cite{Ying2012FloydhoareLF}.

\subsection{Quantum Hoare Logic and Quantum Predicates}
\textit{Quantum Hoare logic}~\cite{Ying2012FloydhoareLF} provides a syntax-directed proof system for reasoning about quantum program correctness. The basic computing unit of quantum Hoare logic is the \textit{Hoare tripe}, which is in the form of $\{A\}\prog\{B\}$. Here $A$ and $B$ are \textit{quantum predicates}, $\prog$ is the quantum program. $A$ is often called \textit{precondition} while $B$ is called \textit{post-condition}. The general meaning of the quantum Hoare tripe is that, if the input state satisfies $A$, then the output state of $\prog$ satisfies $B$. The exact mathematical interpretation of the quantum Hoare tripe depends on the type of the predicates used.

A general quantum predicate is a Hermitian operator $O$~\cite{Ying2012FloydhoareLF} where $0 \le Tr(O\rho) \le 1, \forall \rho \in \mathcal{D(H)}$. %
A quantum state $\rho$ satisfies the quantum predicate $O$ depending on the value of $Tr(O\rho)$, which represents the expectation value of $O$ on state $\rho$. 
In practice, measuring $Tr(O\rho)$ can be very time-consuming . 
This can be avoided by restricting the predicate from a general Hermitian operator to a projection operator $\projector$ with the property that $\projector^2 = I$.
We say that a quantum state $\rho$ satisfies predicate $\projector$ (denoted by $\rho \models \projector$) if $\projector\rho = \rho$. %
A projection operator $\projector$ can also be described by its subspace with eigenvalue +1, namely $S_{\projector}= \{ \ket{\psi} s.t. \ket{\psi} = \projector\ket{\psi} \}$ in the qubits' Hilbert space $\mathcal{H}$. Birkhoff and Neumann~\cite{Birkhoff1936TheLO} define a quantum logic on the set of subspaces in $\mathcal{H}$ in which, for example, the logical {\bf and} corresponds to the intersection of subspaces. This construction induces logic operations %
on projection operators by considering the equivalent operation on the associated subspaces.

\section{{\langname}}
In this section, we introduce {\langname}, a concise language for QEC Programs. We define its syntax, operational semantics, and denotational semantics.

\subsection{Syntax}\label{subsect: syntax}
We restate the notation for quantum variables as follows:
Define $\text{qVar}$ as the set of quantum variables,  $q$ as a metavariable ranging over quantum variables, and $\bar{q}$ to be a quantum register associated with a finite set of distinct quantum variables. We denote the state space of $q$ by $\mathcal{H}_q$ which is a two-dimensional Hilbert space spanned by the computational basis states $\{\qstate{0}, \qstate{1}\}$. The state space of $\bar{q}$ is the tensor product of Hilbert spaces
$\mathcal{H}_{\bar{q}} = \otimes_{q\in \bar{q}}\mathcal{H}_q$.

Logical operations in QEC codes are often associated with changes in the set of stabilizer measurements. For example, the surface code~\cite{Fowler2012SurfaceCT} frequently turns on and turn off specific stabilizer measurement circuits to implement logical gates. Besides, the outcomes of stabilizer measurements act as signals for error correction. By introducing a stabilizer variable, which  represents a stabilizer measurement circuit without the need of specifying its actual implementation, we can greatly simplify the description of QEC programs. %
We define the notations for the stabilizer variable as follows:

Define $S$ as the set of stabilizers on $\text{qVar}$, $s$ as an individual stabilizer in $S$, $\text{sVar}$ as the set of stabilzer variables, and $\svar$ as a metavariable ranging over $\text{sVar}$. To avoid $S$ being uncountable, we assume that every $s \in S$ only involves a finite number of qubits. The range of values for the stabilizer variable $\svar$ is $S\cup -S \cup iS \cup -iS$, where $i$ is the imaginary unit.

We define the syntax of {\langname} as follows:
\begin{tcolorbox}[colback=yellow!10!white,
                  colframe=white!20!black]
\begin{grammar}

\label{equ:qeclang}
  \let\syntleft\relax
  \let\syntright\relax
<$\prog$> $\Coloneqq$ \textbf{skip}
\vsep $q \coloneqq \qstate{0}$
\vsep $\bar{q} \coloneqq U[\bar{q}]$
\vsep $\svar \coloneqq s_e^u$
\alt $\prog_1;\prog_2$
\alt $\textbf{if}\ M[\svar,\bar{q}]\ \textbf{then}\  \prog_1\ \textbf{else}\ \prog_0\ \textbf{end}$
\alt $\textbf{while}\  M[\svar,\bar{q}]\ \textbf{do}\ \prog_1\ \textbf{done}$ %

<$s_e^u$> $\Coloneqq$ $\pm s$ | $\pm i\,s$ | $\pm \svar$
\end{grammar}
\end{tcolorbox}

\setcounter{cnt}{0}
The proposed language constructs consisting of instructions as follows: 
(\showcnt)  \textbf{skip} does nothing; 
(\showcnt) $q \coloneqq \qstate{0}$ resets quantum variable $q$ to ground state $\qstate{0}$; 
(\showcnt) $\bar{q} \coloneqq U[\bar{q}]$ perform unitary operation $U$ on quantum register $\bar{q}$; 
(\showcnt) $\svar \coloneqq s_e^u$ assigns a unary stabilizer expression $s_e^u$ to the stabilizer variable $\svar$;
(\showcnt) $\prog_1;\prog_2$ is the sequencing of programs; 
(\showcnt) $\textbf{if}\ M[\svar,\bar{q}]\ \textbf{then}\  \prog_1\ \textbf{else}\ \prog_0\ \textbf{end}$ perform the stabilizer measurement represented by $\svar$ on qubits $\bar{q}$ (or in short, measures $\svar$ on qubits $\bar{q}$)
measures qubits in $\bar{q}$ with the stabilizer $\svar$ and executes program $\prog_1$ if the measurement outcome is $1$. If the measurement outcome is $-1$, i.e. \textbf{the measured state is in the -1 eigenspace of} $\svar$, we first execute $\svar=-\svar$ which flips the sign of $\svar$, then execute $\prog_0$. 
We can see (6) as a short-term for $\textbf{if}\ M[\svar,\bar{q}]=1\ \textbf{then}\ \prog_1\ \textbf{else}\ \svar \coloneqq -\svar ;\ \prog_0\ \textbf{end}$.
(\showcnt) $\textbf{while}\  M[\svar,\bar{q}]\ \textbf{do}\ \prog_1\ \textbf{done}$ measures $\svar$ on qubits $\bar{q}$, and perform $\prog_1$ if the measurement outcome is 1. If the measurement outcome is -1, the sign of $\svar$ will be flipped automatically, and then the while loop terminates.

The language constructs above are similar to those of the quantum \textbf{while}-language, except the part associated with stabilizer variables. Stabilizer variables can be used to describe operations on stabilizers.
For example, to turn off one stabilizer measurement circuit in a QEC program, we can simply set $\svar = {\rm I}$. The stabilizer variable can also serve to inform the error correction procedure. Every time we detect one or more stabilizer variables with a negative sign (this may happen after a stabilizer measurement), 
the decoder knows that at least one error affected the physical circuit. It proceeds to identifying the specific error and applies the corresponding correction. We define the error correction protocol as a function over stabilizer variables as follows:
\begin{definition}[Error correction protocol] Define \\
$\textbf{correct}(\svar_0,\svar_1,\cdots)$ as an error decoding and correction protocol by measuring $\svar_0,\svar_1,\cdots$.
\end{definition}

As an example, we present a snippet of {\langname} code that corresponds to  one stabilizer measurement of the surface code, where the parity qubit $s$ is connected to four data qubits $\{q_0,q_1,q_2,q_3\}$ with one Z-type stabilizer. Figure~\ref{fig:one-stabilizer}(a) shows the target stabilizer measurement circuit, while panel (b) shows an error correction program based on the stabilizer measurement in (a).  %
The basic idea of the program in Figure~\ref{fig:one-stabilizer}(b) is that, if the state of $q_0q_1q_2q_3$ is not stabilized by the stabilizer variable $\svar_0$, then some errors have happened and the QEC program should correct them.

Comparing to the \qwhilelang, {\langname} avoids the explicit introduction of parity qubits and the  implementation of stabilizer measurement circuits. Instead, it simply provides the stabilizer to measure as the value of variable $\svar$.
This approach makes {\langname} programs very flexible and independent from the specific implementation of stabilizer measurements. The latter would, for example, depend on architectural properties like the underlying hardware connectivity~\cite{Chamberland2020TopologicalAS}.

While being optimized for QEC codes, {\langname} contains all the program constructs necessary to describe general quantum programs. Actually, all programs in {\qwhilelang} can be translated into {\langname} by setting the stabilizer variable to a Pauli Z operator when qubits need to be measured.

\begin{figure}
    \centering
    \begin{subfigure}[b]{0.25\textwidth}
         \centering
         \includegraphics[width=\textwidth]{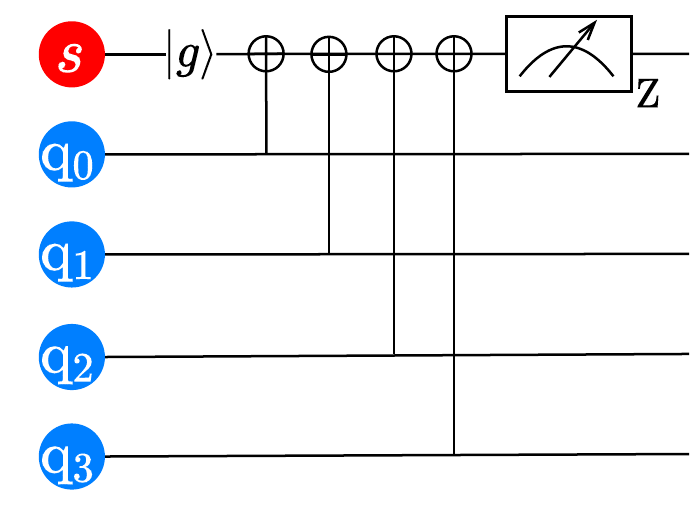}%
         \subcaption{}
    \end{subfigure}
    \begin{subfigure}[b]{0.20\textwidth}
    \begingroup \fontsize{\myfontsize}{\mylinesize}
    \addtolength{\jot}{-4pt}
    \begin{align*}
        & \svar_0 \coloneqq Z_{0}Z_{1}Z_{2}Z_{3} \\ 
        & \textbf{if}\ M[\svar_0, q_3q_2q_1q_0]\ \textbf{then} \\
        & \quad \textbf{skip} \\
        & \textbf{else} \\
        & \quad \text{// correct error....}\\
        & \quad \cdots \\
        & \quad \text{// Recover signal} \\
        & \quad \svar_0 \coloneqq -\svar_0\ 
    \end{align*}%
    \endgroup
    \subcaption{}
    \end{subfigure}
    \caption{An example for {\langname}. 
    (a) is the stabilizer measurement for $Z_{0}Z_{1}Z_{2}Z_{3}$. (b) is an error correction program associated with (a). The detailed error correction operation depends the error correction protocol and are omitted here.  
        }
    \label{fig:one-stabilizer}%
\end{figure}

\begin{figure*}
\begingroup \fontsize{\myfontsize}{\mylinesize}
\begin{tcolorbox}[colback=yellow!10!white,
                  colframe=white!20!black]
\begin{equation*}
\begin{aligned}[c]
    \text{(Skip)}& & &\infer[]{\conf{\mathbf{skip}, (\rho, \sigma)} \to \conf{E, (\rho, \sigma)}}{} \\
    \text{(Initialization)}& & &\infer[]{ \conf{q:=\qstate{0}, (\rho, \sigma)} \to \conf{E, (\rho_0^q, \sigma)}}{}\\
    \text{(Unitary)}& & &\infer[]{\conf{\bar{q}:=U[\overline{q}], (\rho, \sigma)}\to \conf{E, (U\rho U^\dagger, \sigma)}}{}\\
     \text{(Sequence E)}& & &\infer[]{\conf{E;\prog_2, (\rho, \sigma)}\to \conf{\prog_2, (\rho,\sigma)}}{}\\
\end{aligned}
\begin{aligned}[c]
    \text{(Stabilizer exp)} & & & \infer[]{\conf{c\cdot s, \sigma} \to c\cdot s}{}
    \quad \infer[]{\conf{\pm\svar, \sigma} \to \pm\sigma(\svar)}{}, \ c \in \{1,-1,i,-i\}\\
    \text{(Assignment)} & & &\infer[]{\conf{\svar\coloneqq s_e^u, (\rho, \sigma)} \to \conf{E, (\rho, \sigma[c\cdot  s/\svar])}}{\conf{s_e^u, \sigma}\to c\cdot s,\ \ } \\
    \text{(Sequence)} & & &\infer[]{\conf{\prog_1;\prog_2, (\rho,\sigma)} \to \conf{\prog_1';\prog_2,(\rho',\sigma')}}{\conf{\prog_1, (\rho,\sigma)} \to \conf{\prog_1', (\rho',\sigma')}}\\
\end{aligned}
\end{equation*}
\begin{align*}
    \text{(If -1)} & & &\infer[]{\conf{\textbf{if}\ M[\svar,\bar{q}]\ \textbf{then}\  \prog_1\ \textbf{else}\ \prog_0\ \textbf{end}, (\rho, \sigma)} \to \conf{\prog_0, (M_0\rho M_0^\dagger, \sigma[-\sigma(\svar)/\svar])} }{}, M_0 = \frac{I-\svar}{2} \\ %
    \text{(If 1)}& & &\infer[]{\conf{\textbf{if}\ M[\svar,\bar{q}]\ \textbf{then}\  \prog_1\ \textbf{else}\ \prog_0\ \textbf{end}, (\rho, \sigma)} \to \conf{\prog_1, (M_1\rho M_1^\dagger, \sigma)} }{}, M_1 = \frac{I+\svar}{2} \\
    \text{(While -1)}& & &\infer[]{\conf{\textbf{while}\ M[\svar,\bar{q}]\ \textbf{do}\ \prog_1 \ \textbf{done}, (\rho, \sigma)} \to \conf{E, (M_0\rho M_0^\dagger, \sigma[-\sigma(\svar)/\svar])} }{} \\
    \text{(While 1)}& & &\infer[]{\conf{\textbf{while}\ M[\svar,\bar{q}]\ \textbf{do}\ \prog_1\ \textbf{done}, (\rho, \sigma)} \to \conf{\prog_1;\textbf{while}\ M[\svar,\bar{q}]\ \textbf{do}\ \prog_1\ \textbf{done}, (M_1\rho M_1^\dagger, \sigma)}}{}%
\end{align*}
\end{tcolorbox}
\endgroup
    \caption{Operational semantics of \langname. Notations:  $\rho_0^q = \ket{0}_q\bra{0}\rho \ket{0}_q \bra{0} + \ket{0}_q\bra{1}\rho \ket{1}_q \bra{0}$.  $\sigma(\svar)$ means to evaluate $\svar$ in state $\sigma$, while $\sigma[s/\svar]$ means to replace the evaluation of $\svar$ with $s$ in the context $\sigma$. Recall that $E$ indicates the empty program. %
    }
    \label{fig:qec-lang-op}%
\end{figure*}
\begin{figure*}
\begingroup \fontsize{\myfontsize}{\mylinesize}
\begin{tcolorbox}[colback=yellow!10!white,
                  colframe=white!20!black]
\begin{equation*}
\begin{aligned}
    \denot{\textbf{skip}}(\rho, \sigma) & = (\rho, \sigma) \\
    \denot{q \coloneqq \qstate{0}}(\rho, \sigma) & = (\rho_0^q, \sigma) \\
    \denot{\bar{q} \coloneqq U[\bar{q}]}(\rho, \sigma) & = (U\rho U^\dagger, \sigma)
\end{aligned}
\hspace{3cm}
\begin{aligned}
\denot{s}\sigma & = s \qquad \denot{\pm\svar}\sigma = \pm\sigma(\svar)\\
\denot{\svar\coloneqq s_e^u}(\rho, \sigma) & = (\rho, \sigma[\denot{s_e^u}\sigma/\svar]) \\%\ if\  \denot{s_e^u}\sigma = s \\
\denot{\prog_1;\prog_2}(\rho, \sigma) & = \denot{\prog_2}(\denot{\prog_1}(\rho,\sigma))
\end{aligned}
\end{equation*}%
\begin{align*}
    \denot{\textbf{if}\ M[\svar,\bar{q}]\ \textbf{then}\  \prog_1\ \textbf{else}\ \prog_0\ \textbf{end}}(\rho, \sigma) & =  \denot{\prog_1}(M_1\rho M_1^\dagger, \sigma) +  \denot{\prog_0}(M_0\rho M_0^\dagger, \sigma[-\denot{\svar}\sigma/ \svar])\\ %
    \denot{\textbf{while}\ M[\bar{q}]\ \textbf{do}\ \prog_1\ \textbf{done}}(\rho, \sigma) & =  \sqcup_{k=0}^{\infty} \denot{\textbf{while}^{(k)}}(\rho, \sigma)
\end{align*}
\end{tcolorbox}
\endgroup
    \caption{Denotational semantics of \langname. Notations: $\textbf{while}^{k}$ is the k-th unrolling of \textbf{while}.}
    \label{fig:lang-denote}%
\end{figure*}

\subsection{Operational Semantics}\label{subsect:operational}

The \textit{operational semantics} of the proposed {\langname} are presented in Figure~\ref{fig:qec-lang-op}.
Different from the \qwhilelang, {\langname} denotes the state in QEC programs by the tuple $(\rho, \sigma)$, where $\rho$ is a partial density matrix that describes the current state of $\bar{q}$ , and $\sigma$ represents the current state of stabilizer variables. 
The quantum state $\rho$ can be regarded as a function over quantum variables $q$, and $\rho(q)$ represents the reduced partial density matrix where quantum variables except $q$ are all traced out.
Likewise, $\sigma$ represents a function over stabilizer variables, and $\sigma(\svar)$ is defined to be the current value of $\svar$.
The stabilizer state $\sigma$ is functionally similar to the classical program state~\cite{Winskel1993TheFS}, and 
we can define the substution rule for $\sigma$ in a similar way, which is then used in Figure~\ref{fig:qec-lang-op}.

\begin{definition}[Substitution rule] The substitution rules for stabilizer state $\sigma$ are defined as follows,
\begin{align*}
\sigma[s/\svar_i](\svar_j) &= \begin{cases}
s, \text{ if }\ i = j;\\
\sigma(\svar_j), \textbf{ otherwise} 
\end{cases}.
\end{align*}
\end{definition}

Rules in Figure~\ref{fig:qec-lang-op} are self-explained and represent reformulation of concepts familiar in quantum computing. The notation in these rules follows the convention in programming language research, for example, the expressions over the bar in the inference rules are \textit{premises} while the expression under the bar is \textit{conclusion}.
For pure quantum state operations in Figure~\ref{fig:qec-lang-op}, the operational semantics follows the flow in {\qwhilelang}~\cite{Ying2012FloydhoareLF}.  
For stabilizer related operations, we introduce extra operational semantics for the unary stabilizer expression $s_e^u$, as shown in the top right corner of Figure~\ref{fig:qec-lang-op}. We then process the assignment operation on stabilizer variables with the substitution rule.
When measuring the stabilizer variable $\svar$, its sign will get flipped if the current quantum state is not a +1 eigenstate of  $\svar$. Thus, we include one operation in the ``If -1'' and ``While -1'' rule to take care of the sign flipping on $\svar$.

To illustrate the use of the operational semantics in Figure~\ref{fig:qec-lang-op}, we revisit the program in Figure~\ref{fig:one-stabilizer}(b). We use $(0,\{\})$ to represent the initial state of $(\rho, \sigma)$ before the program.

\begin{example}[Error correction experiment]
Assume the initialization $q_3q_2q_1q_0 \coloneqq \ket{0000}$ on data qubits is distorted by noise and data qubits are assigned to be $\ket{0001}$. This may happen when, after the initialization in the logical subspace, a Pauli X error affects one of the qubits. 
For illustration purposes we consider that such error was on qubit $q_0$ and therefore use $q_0\coloneqq Xq_0;$ to correct the error. Notice that, in general, the error correcting protocol $\textbf{correct}(\svar_0, \cdots)$ depends on the outcome of multiple stabilizer measurements.
The program in Figure~\ref{fig:one-stabilizer}(b) then becomes
\begin{align*}
    \prog \equiv \ &q_3q_2q_1q_0 \coloneqq \ket{0001}; \svar_0 \coloneqq Z_{0123} ; \\
    &\textbf{if}\ M[\svar_0, q_3q_2q_1q_0]\ \textbf{then}\ \textbf{skip}\ \\
    &\textbf{else}\ q_0\coloneqq Xq_0; \svar_0 \coloneqq -\svar_0\ \textbf{end}.
\end{align*}
We write $Z_{0}Z_{1}Z_{2}Z_{3}$ as $Z_{0123}$ for simplicity.
Then the evaluation of $\prog$ with the operational semantics proceeds as follows:

\noindent
$
    \conf{\prog, \rho}  = \conf{q_3q_2q_1q_0 \coloneqq \ket{1110}; \svar_0 \coloneqq Z_{0123} ; \textbf{if}\ M[\svar_0, \\q_3q_2q_1q_0]\textbf{then}\ 
    \textbf{skip}\ \textbf{else}\ q_0\coloneqq Xq_0; \svar_0 \coloneqq -\svar_0\ \textbf{end}, (0, \{\})} \\
     \to \conf{\svar_0 \coloneqq Z_{0123} ; \textbf{if}\ M[\svar_0, q_3q_2q_1q_0]\ \textbf{then}\ \textbf{skip}\ 
    \  \textbf{else} \ q_0\coloneqq Xq_0; \svar_0 \coloneqq -\svar_0\ \textbf{end}, (\ket{0001}\bra{0001}, \{\})} \\
     \to \conf{\textbf{if}\ M[\svar_0, q_3q_2q_1q_0]\ \textbf{then}\
    \textbf{skip}\ \textbf{else}\ q_0\coloneqq Xq_0; \svar_0 \coloneqq -\svar_0\ \textbf{end}, (\ket{0001}\bra{0001}, \{\svar_0=Z_{0123}\})} \\
     \to \conf{q_0\coloneqq Xq_0; \svar_0 \coloneqq -\svar_0, (\ket{0001}\bra{0001}, \{\svar_0=-Z_{0123}\})} \\
     \to \conf{\svar_0 \coloneqq -\svar_0, (\ket{0000}\bra{0000}, \{\svar_0=-Z_{0123}\})} \\
     \to \conf{E, (\ket{0000}\bra{0000}, \{\svar_0=Z_{0123}\})}
$
\end{example}

\subsection{Denotational Semantics}\label{subsect:denotational}
The denotational semantics of the {\langname}  is given in Figure~\ref{fig:lang-denote}. The program $\prog$ is denoted as a super-operator $\denot{\prog}$ that acts on $(\rho, \sigma)$. While most rules in Figure~\ref{fig:lang-denote} are  self-explained, the \textbf{while} rule relies a partial order on $(\rho, \sigma)$ to compute the fixed point, i.e., the lowest upper bound ($\sqcup$) for the complete partial ordering (CPO) of  $\textbf{while}^{(k)}$.

To define the partial order on state $(\rho, \sigma)$, we first need to define a partial order on the state of stabilizers. Consider a trivial lattice on the stabilizer set $S$, where $s_i \sqsubseteq s_j$ if $s_i = I$, and $s_i$ and $s_j$ cannot be compared if they are both not identity. We can then define the partial order on $\sigma$ as follows: \\
\centerline{$\sigma_1 \sqsubseteq \sigma_2 \ \textbf{if}\ \sigma_1(\svar) \sqsubseteq \sigma_2(\svar), \forall \svar \in \textbf{sVar},$} \\
which immediately induces a partial order on $(\rho, \sigma)$: \\
\centerline{
$(\rho_1, \sigma_1) \sqsubseteq (\rho_2, \sigma_2) \ \textbf{if}\ \rho_1 \sqsubseteq \rho_2 \ \textbf{and}\ \sigma_1 \sqsubseteq \sigma_2,$} \\
where $\rho_1 \sqsubseteq \rho_2$ means that $Tr(O\rho_1) \le Tr(O\rho_2)$, for any semi-positive observable $O$. %

The reason to select the partial order above is that it strictly ensures the consistency of fault-tolerant computation. If any error happens in one while loop and does not get corrected in time, the stabilizer state $\sigma$ will get some variables flipped. Let the resulted stabilizer state be $\sigma'$. Obviously, there does not exist any $\sigma_1$ s.t. $\sigma \sqsubseteq \sigma_1$ and $\sigma' \sqsubseteq \sigma_1$. 
In this case, we just set $\sqcup_{k=0}^{\infty} \denot{\textbf{while}^{(k)}}(\rho, \sigma)$ to be $\perp$, the bottom element of the CPO~\cite{Winskel1993TheFS}, which does not provide any information for the program.  On the other hand, if all errors get corrected in each loop, the resulted state of \textbf{while} can be calculated coordinate-wisely: \\
\centerline{$(\sqcup_{k=0}^{\infty} \denot{\textbf{while}^{(k)}}\rho, \sqcup_{k=0}^{\infty} \denot{\textbf{while}^{(k)}}\sigma)$}.

We connect the denotational semantics to the operational semantics through the following proposition:

\begin{proposition}[Equivalence of the denotational semantics and the operational semantics]\label{prop:equiv}
For a strict QEC program $\prog$ that corrects errors when the errors appear, we have
$
    \denot{\prog}(\rho, \sigma) \equiv \sum \{  (\rho', \sigma'):\conf{\prog, (\rho, \sigma)} \to^* \conf{E, (\rho', \sigma')} \},
$
where $\to^*$ denotes the reflective, transitive closure of $\to$,  and $\{\cdot\}$ represents a multi-set.
\end{proposition}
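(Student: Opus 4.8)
The plan is to proceed by structural induction on the program $\prog$, mirroring the two sets of semantic rules clause by clause and appealing to the linearity of the super-operators $\denot{\cdot}$ together with a decomposition lemma for multi-step transitions.

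For the atomic statements ($\textbf{skip}$, $q\coloneqq\ket{0}$, $\bar q\coloneqq U[\bar q]$, and $\svar\coloneqq s_e^u$) the claim is immediate: each has exactly one operational transition, landing in $\conf{E,(\rho',\sigma')}$ with $(\rho',\sigma')$ literally the value assigned by the corresponding denotational clause, so both sides of the equation are the singleton multi-set $\{(\rho',\sigma')\}$. First I would record the elementary \emph{decomposition lemma}: $\conf{\prog_1;\prog_2,(\rho,\sigma)}\to^*\conf{E,(\rho',\sigma')}$ holds iff there is an intermediate $(\rho'',\sigma'')$ with $\conf{\prog_1,(\rho,\sigma)}\to^*\conf{E,(\rho'',\sigma'')}$ and $\conf{\prog_2,(\rho'',\sigma'')}\to^*\conf{E,(\rho',\sigma')}$; this follows by induction on the length of the derivation using only the (Sequence) and (Sequence E) rules. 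With this lemma the sequencing case reduces to the induction hypothesis applied to $\prog_1$ and $\prog_2$, after which $\denot{\prog_1;\prog_2}=\denot{\prog_2}\circ\denot{\prog_1}$ and the fact that each $\denot{\prog}$ is a linear, trace non-increasing super-operator in its $\rho$-argument lets me push the outer sum through.

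For the conditional, the rules (If 1) and (If -1) split the computation into the branch on $M_1\rho M_1^\dagger$ with unchanged $\sigma$ and the branch on $M_0\rho M_0^\dagger$ with $\sigma$ updated by the sign flip $\sigma[-\sigma(\svar)/\svar]$; applying the induction hypothesis to $\prog_1$ and $\prog_0$ on these two sub-configurations and adding the resulting multi-sets reproduces exactly the denotational clause. The $\textbf{while}$ case is where the real work lies. Here I would introduce the finite unrollings $\textbf{while}^{(k)}$ (with $\textbf{while}^{(0)}$ the everywhere-$\bot$ map) and prove by induction on $k$ that $\denot{\textbf{while}^{(k)}}(\rho,\sigma)$ equals the sum of $(\rho',\sigma')$ over those operational runs $\conf{\textbf{while}\ M[\svar,\bar q]\ \textbf{do}\ \prog_1\ \textbf{done},(\rho,\sigma)}\to^*\conf{E,(\rho',\sigma')}$ that pass through the loop body at most $k-1$ times; the inductive step uses the (While 1) and (While -1) rules exactly as in the conditional case, plus the sequencing lemma to peel one iteration. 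Taking $k\to\infty$, the right-hand side is the supremum of its finite truncations under the partial order on $(\rho,\sigma)$ defined in this section, and $\sqcup_{k}\denot{\textbf{while}^{(k)}}(\rho,\sigma)$ is that same supremum by definition; matching them requires Scott-continuity of the semantic functionals in the $\rho$-component and the observation that, because $\prog$ is a \emph{strict} QEC program that corrects its errors, the $\sigma$-components produced along any terminating run are mutually compatible in the flat-above-$I$ stabilizer lattice, so the supremum is taken coordinate-wise rather than collapsing to $\perp$.

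I expect the main obstacle to be precisely this last step: reconciling the CPO-theoretic least-fixed-point definition of $\denot{\textbf{while}}$ with the sum over all terminating executions on the operational side. Two technical points need care — first, that the multi-set sum on the right converges, which it does since $M_0,M_1$ are trace non-increasing, so the partial sums are bounded in trace by $1$ and form a monotone chain; and second, that the hypothesis ``$\prog$ corrects errors when they appear'' is exactly what rules out the pathological situation, discussed just before the proposition, in which an uncorrected error leaves two incompatible stabilizer states whose join is $\perp$. Under that hypothesis the stabilizer coordinate is eventually constant along every run, so the $\sigma$-part of the fixed point is well-defined and agrees with the operational sum. Everything else is a routine unwinding of the rules in Figures~\ref{fig:qec-lang-op} and~\ref{fig:lang-denote}.
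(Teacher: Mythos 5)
Your proposal is correct and follows the same route as the paper's own (very terse) argument: structural induction over the program constructs, with all cases except \textbf{while} being routine, and the ``corrects errors when they appear'' hypothesis invoked exactly where you invoke it --- to keep the stabilizer components compatible in the flat lattice so that the least upper bound $\sqcup_k \denot{\textbf{while}^{(k)}}$ is computed coordinate-wise rather than collapsing to $\perp$. You have simply spelled out the details (the sequencing decomposition lemma, the finite unrollings, and the trace-boundedness of the partial sums) that the paper leaves implicit.
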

\myproof{
Except the while loop, other statements can be proved trivially by structural induction. For the while loop, the consistency can be proved with the assumption of a just-in-time error correction.}

\section{{\assnname}}
In this section, 
we first introduce an expressive assertion language \textbf{\assnname} and then derive a Hoare logic to verify QEC programs.
\subsection{Syntax of \assnname}

Stabilizer is a kind of Hermitian operator and can be used  as predicate for QEC programs. 
We observe that  the exponential computational overhead on Hermitian-based predicates may be circumvented by using stabilizers as predicates. In fact arithmetic operations (e.g. addition and multiplication) between stabilizers can be completed within a time polynomial in the number of qubits.
This observation is particularly important for QEC programs in which the majority of logical operations can be described with a few stabilizers and the corresponding predicate transformation can be framed as the multiplication of stabilizers.

However, as predicates, stabilizers are not universal.
There are infinitely many quantum states that are not eigenstates of any non-identity stabilizer, e.g., $\ket{\psi} = \frac{\sqrt{3}}{2}\ket{0} + \frac{1}{2}\ket{1}$. 
Such limitation will cause difficulty in the verification of QEC programs. For example, if we are given some state that is not the +1 eigenstate of any stabilizer (this is possible to happen in future universal fault-tolerant computation), we cannot find any predicate except $I$ to accommodate such state.%
One well-studied way in the quantum information community to address this problem is to use the Pauli expansion of quantum Observable (Hermitian matrices)~\cite{nielsen2002quantum,Wilde2013QuantumIT}: \nothmskip
\begin{lemma}[Pauli expansion]\label{lem:pauli-expansion}
The quantum observable $O$ of a $n$-qubit system can be expressed as a linear combination of Pauli strings:
   $ O = \sum_i w^i\sigma^i_n$,
where $\sigma_n^i \in \{I, X, Y, Z\}^{\otimes n}$ is a length-$n$ Pauli string, and $w_i \in \mathbb{R}$ is its coefficient.
\nothmskip
\end{lemma}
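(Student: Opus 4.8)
The plan is to recognize this as the standard fact that the Pauli strings form a basis for the space of $2^n \times 2^n$ Hermitian matrices, and to prove it by the usual orthogonality-and-dimension-counting argument. First I would establish that the $4^n$ Pauli strings $\sigma_n^i \in \{I,X,Y,Z\}^{\otimes n}$ are pairwise orthogonal under the Hilbert--Schmidt inner product $\langle A, B\rangle = Tr(A^\dagger B)$. This reduces to the single-qubit computation $Tr(\sigma\tau) = 2\,\delta_{\sigma\tau}$ for $\sigma,\tau \in \{I,X,Y,Z\}$, which is immediate from the Pauli multiplication table: the product of two distinct single-qubit Paulis is $\pm i$ times the third and hence traceless, while each Pauli squares to $I$ with trace $2$. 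Since the trace is multiplicative over tensor products, this lifts to $Tr\big(\sigma_n^i\,\sigma_n^j\big) = \prod_{k} Tr\big(\sigma^{(k)}\tau^{(k)}\big) = 2^n\,\delta_{ij}$.

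Next, since the $\sigma_n^i$ are nonzero and pairwise orthogonal they are linearly independent, and there are exactly $4^n$ of them, which equals the complex dimension of the space $M_{2^n}(\mathbb{C})$ of all $2^n \times 2^n$ complex matrices. Hence the $\sigma_n^i$ form a basis of $M_{2^n}(\mathbb{C})$, so every $O$ admits a unique expansion $O = \sum_i c_i\,\sigma_n^i$; pairing both sides with $\sigma_n^j$ and using the orthogonality relation yields the explicit coefficient $c_i = 2^{-n}\,Tr\big(\sigma_n^i\,O\big)$.

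Finally, I would show that if $O$ is Hermitian then each $c_i$ is real: using $O^\dagger = O$, $(\sigma_n^i)^\dagger = \sigma_n^i$, and the conjugation rule $\overline{Tr(A)} = Tr(A^\dagger)$, we get $\overline{c_i} = 2^{-n}\,\overline{Tr(\sigma_n^i O)} = 2^{-n}\,Tr\big(O^\dagger (\sigma_n^i)^\dagger\big) = 2^{-n}\,Tr(O\,\sigma_n^i) = c_i$. Setting $w^i = c_i$ gives the claimed expansion as a real linear combination of Pauli strings. There is no genuine obstacle here, as the statement is a classical fact of linear algebra; the only points requiring care are fixing the $2^n$ normalization consistently throughout and invoking Hermiticity of $O$ precisely at the step where reality of the coefficients is deduced.
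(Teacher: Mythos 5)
Your proof is correct and complete: the Hilbert--Schmidt orthogonality relation $Tr(\sigma_n^i\sigma_n^j)=2^n\delta_{ij}$, the dimension count $4^n=\dim M_{2^n}(\mathbb{C})$, the coefficient formula $w^i=2^{-n}Tr(\sigma_n^i O)$, and the use of Hermiticity to deduce reality of the coefficients are all exactly the standard argument. The paper itself offers no proof of this lemma---it is stated as a well-known fact with citations to standard references---so your write-up supplies the missing details rather than diverging from anything in the paper.
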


The Pauli expansion motivates the following proposition which provides a universal way to deal with arbitrary logical states in QEC programs: 
\nothmskip
\begin{proposition}\label{prop:universal-stabilizer}
$\forall\, \ket{\psi} \in \mathcal{H}_n$, there is a $\projector$ which is a sum of stabilizers (Pauli strings), that satisfies $\projector \ket{\psi} = \ket{\psi}$, and $\projector \ne I$.
\nothmskip
\end{proposition}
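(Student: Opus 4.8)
# Proof Proposal for Proposition \ref{prop:universal-stabilizer}

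The plan is to construct $\projector$ explicitly from the rank-1 projector $\ket{\psi}\bra{\psi}$ by expanding it in the Pauli basis and then renormalizing so that the $+1$ eigenvalue condition holds exactly. First I would apply Lemma~\ref{lem:pauli-expansion} to the Hermitian operator $\ket{\psi}\bra{\psi}$, writing $\ket{\psi}\bra{\psi} = \sum_i w^i \sigma^i_n$ with real coefficients $w^i$ and Pauli strings $\sigma^i_n$. Since $\ket{\psi}\bra{\psi}$ has trace $1$, and $\mathrm{Tr}(\sigma^i_n) = 0$ for every non-identity Pauli string while $\mathrm{Tr}(I^{\otimes n}) = 2^n$, the coefficient of the identity term is exactly $w^0 = 2^{-n}$; in particular it is nonzero. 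Because $\ket{\psi}$ is a unit vector, $\ket{\psi}\bra{\psi}\,\ket{\psi} = \ket{\psi}$, so the operator $\sum_i w^i \sigma^i_n$ already stabilizes $\ket{\psi}$ in the sense required — but it is not literally a sum of Pauli strings with the cosmetic form demanded, and more importantly we must rule out the degenerate case where the only nonzero term is the identity.

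The key steps, in order: (1) invoke Pauli expansion to get $\ket{\psi}\bra{\psi} = \sum_i w^i\sigma^i_n$; (2) observe $w^0 = 2^{-n} \neq 0$ so the identity genuinely appears; (3) argue that at least one \emph{non-identity} Pauli string appears with nonzero coefficient — otherwise $\ket{\psi}\bra{\psi} = 2^{-n} I$, which is impossible for $n \geq 1$ since the left side has rank $1$ and the right side has rank $2^n \geq 2$ (here one uses that $\ket{\psi}$ lives in $\mathcal{H}_n$ with $n \geq 1$, which is implicit; for $n = 0$ the statement is vacuous or degenerate and can be excluded); (4) set $\projector := \ket{\psi}\bra{\psi} = \sum_i w^i \sigma^i_n$ and conclude $\projector\ket{\psi} = \ket{\psi}$ and $\projector \neq I$ directly, since from step (3) $\projector$ has a non-identity Pauli term and its identity coefficient is $2^{-n} \neq 1$. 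If the intended reading of "sum of stabilizers" is a sum of $\pm 1$-eigenstate stabilizers of $\ket{\psi}$ specifically, then instead I would note that each $\sigma^i_n$ with $w^i \neq 0$ need not individually stabilize $\ket{\psi}$, so I would fall back on the weaker but sufficient reading that $\projector$ is a real linear combination of Pauli strings (equivalently a Hermitian operator in Pauli form) fixing $\ket{\psi}$, which is exactly what the subsequent Hoare-logic development needs.

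The main obstacle is purely one of interpretation rather than mathematics: pinning down what "$\projector$ is a sum of stabilizers" means and ensuring the constructed $\projector$ satisfies whatever normalization the assertion language later imposes (e.g. whether $\projector$ must itself be a projector with $\projector^2 = \projector$, or merely a Hermitian predicate). The operator $\ket{\psi}\bra{\psi}$ is in fact a genuine projector, so if projector-hood is required it comes for free; the only real content is the rank argument in step (3) showing the expansion is not trivial, which is a one-line observation. Everything else is immediate from Lemma~\ref{lem:pauli-expansion} and the elementary trace orthogonality of Pauli strings.
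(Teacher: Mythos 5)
Your proof is correct and follows exactly the same route as the paper's: take $\projector = \ket{\psi}\bra{\psi}$, note it fixes $\ket{\psi}$, and expand it in the Pauli basis via Lemma~\ref{lem:pauli-expansion}. Your additional observations (the identity coefficient is $2^{-n}$, and the rank argument ruling out $\ket{\psi}\bra{\psi} = 2^{-n}I$) simply make explicit the non-identity claim that the paper asserts without elaboration.
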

\myproof{
Note that $(\ket{\psi}\bra{\psi})\ket{\psi} = \ket{\psi}$, $\ket{\psi}\bra{\psi}$ is a non-identity observable and by Pauli expansion, it can be represented by a linear combination of Pauli strings, i.e., stabilizers.}

Inspired by Lemma~\ref{lem:pauli-expansion} and Proposition~\ref{prop:universal-stabilizer}, we introduce arithmetic expressions of stabilizers, and define the stabilizer expression $s_e$ as follows, 
\begin{align}
    s_e \Coloneqq s \vsep \lambda_0 s_{e0} + \lambda_1 s_{e1}, \ \lambda_0, \lambda_1 \in \mathbb{C}
\end{align}
where $s$ is a stabilizer. $s_e$ is different from the $s_e^u$ used for stabilizer variables which only consists of unary operations on stabilizers. However, by describing both the stabilizer variable and the predicate within the language of stabilizer, we can easily incorporate the information from stabilizer measurement into predicates.

By Proposition~\ref{prop:universal-stabilizer},
$s_e$ is universal as $\forall\ket{\psi}, \exists s_e\ s.t.\ s_e\ket{\psi} = \ket{\psi}$. For example, the state $\ket{\psi} = \frac{\sqrt{3}}{2}\ket{0} + \frac{1}{2}\ket{1}$ is a +1 eigenstate of  $s_e \coloneqq\,\frac{1}{2}Z + \frac{\sqrt{3}}{2}X$. %
We then formulate the assertion language \textbf{{\assnname}} on QEC programs as follows:
\begin{align}
    A \Coloneqq s_e \vsep A_0 \wedge A_1 \vsep A_0 \vee A_1 \vsep A_0 \Rightarrow A_1.
\end{align}

When $A \coloneqq s_e$, we say a QEC program state $(\rho, \sigma)$ satisfies an assertion $A$ if $A\rho = \rho$ (for $\rho=\ket{\psi}\bra{\psi}$, $A\rho = \rho \Leftrightarrow A\ket{\psi} = \ket{\psi}$) , and $s_e$ is commutable with all stabilizer variables in $\sigma$. We denote this relation by $(\rho,\sigma) \models A$. Requiring $s_e$ to be commutable with stabilizer variables in $\sigma$ is essential for developing the quantum Hoare logic in section~\ref{subsect:partial}.

The semantics of $A_0 \wedge A_1$ and other Boolean expressions can then be derived by structural induction:
\begin{itemize}
    \item $(\rho,\sigma) \models A_1 \wedge A_2$ iff $(\rho,\sigma) \models A_1$ and $(\rho,\sigma) \models A_2$;
    \item $(\rho,\sigma) \models A_1 \vee A_2$ iff $(\rho,\sigma) \models A_1$ or $(\rho,\sigma) \models A_2$;
    \item $(\rho,\sigma) \models (A_1 \Rightarrow A_2)$ iff $((\rho,\sigma) \models A_1) \Rightarrow ((\rho,\sigma) \models A_2)$.
\end{itemize}

If an assertion $A$ is satisfied by any program states $(\rho, \sigma)$, we simply denote such property as $\models A$.

The following lemma presents an important result for {\assnname} that will be frequently utilized in later sections.

\begin{restatable}[Implication rule]{lemma}{implictrule}
	\label{lem:implicitrule}
	For stabilizer expressions,
	\begin{enumerate}
		\item If $(\rho, \sigma) \models s_{e0}$ and $(\rho, \sigma) \models s_{e1}$, we have $(\rho, \sigma) \models s_{e0}s_{e1}$ and $(\rho, \sigma) \models \lambda_0 s_{e0}+\lambda_1 s_{e1}$, where  $\lambda_0+ \lambda_1=1$.
		\item Assume $s_{e0}$ is not singular. If $(\rho, \sigma) \models s_{e0}$ and $(\rho, \sigma) \models s_{e1}s_{e0}$, we have $(\rho, \sigma) \models s_{e1}$.
		\item Assume $(a s_{e0} + bs_{e1})\rho = \rho$, every stabilizer in $\sigma$ is commutable with $s_{e0}$ and $s_{e1}$, and $(\rho, \sigma) \models  s_{e2}$, then
		$(\rho, \sigma) \models a s_{e0} + bs_{e1}s_{e2}$.
	\end{enumerate}
\end{restatable}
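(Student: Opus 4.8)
The plan is to prove each of the three parts separately, working from the definition: $(\rho,\sigma)\models s_e$ means both $s_e\rho=\rho$ and $s_e$ commutes with every stabilizer variable appearing in $\sigma$. So each part splits into an ``eigenstate'' obligation and a ``commutation'' obligation, and I would handle them in that order.

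For part (1), the eigenstate part is immediate: from $s_{e0}\rho=\rho$ and $s_{e1}\rho=\rho$ we get $s_{e0}s_{e1}\rho=s_{e0}\rho=\rho$, and $(\lambda_0 s_{e0}+\lambda_1 s_{e1})\rho=(\lambda_0+\lambda_1)\rho=\rho$ when $\lambda_0+\lambda_1=1$. The commutation part: if a stabilizer $s\in\sigma$ commutes with both $s_{e0}$ and $s_{e1}$, then it commutes with their product and with any linear combination, since commutation is preserved under sums and products. For part (2), the eigenstate obligation is where I would be slightly more careful: from $s_{e0}\rho=\rho$ and $s_{e1}s_{e0}\rho=\rho$ we get $s_{e1}\rho=s_{e1}s_{e0}\rho=\rho$ directly — actually the non-singularity of $s_{e0}$ is not even needed here once we already know $s_{e0}\rho=\rho$; I would double-check whether the hypothesis ``$s_{e0}$ is not singular'' is there to guarantee that $s_{e1}$ is well-defined as ``$(s_{e1}s_{e0})s_{e0}^{-1}$'' in the intended application, and state the argument cleanly either way. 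The commutation part again follows because $s_{e1}=(s_{e1}s_{e0})s_{e0}^{-1}$ (or the argument can be run on products directly) commutes with anything that commutes with $s_{e0}$ and $s_{e1}s_{e0}$.

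Part (3) is the one requiring genuine thought, because the hypothesis is weaker — we are only told $(as_{e0}+bs_{e1})\rho=\rho$, not that $s_{e0}\rho=\rho$ or $s_{e1}\rho=\rho$ individually. The claim is $(\rho,\sigma)\models as_{e0}+bs_{e1}s_{e2}$. For the eigenstate obligation I would write $(as_{e0}+bs_{e1}s_{e2})\rho = as_{e0}\rho + bs_{e1}s_{e2}\rho$ and use $(\rho,\sigma)\models s_{e2}$, i.e.\ $s_{e2}\rho=\rho$, to rewrite $s_{e1}s_{e2}\rho = s_{e1}\rho$; then $as_{e0}\rho+bs_{e1}\rho=(as_{e0}+bs_{e1})\rho=\rho$. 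For the commutation obligation, I would use the hypothesis that every stabilizer in $\sigma$ commutes with $s_{e0}$ and $s_{e1}$, plus the fact that $(\rho,\sigma)\models s_{e2}$ forces $s_{e2}$ to commute with every stabilizer in $\sigma$ as well; then $s_{e1}s_{e2}$ commutes with each such stabilizer, hence so does $as_{e0}+bs_{e1}s_{e2}$.

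The main obstacle I anticipate is purely in part (2) and part (3): being precise about exactly which operator-inverse or well-definedness assumptions are actually invoked, since the statement invokes non-singularity of $s_{e0}$ in (2) but the eigenvalue equation $s_{e1}\rho=\rho$ falls out without it. I would resolve this by noting that for $\rho=\ket{\psi}\bra{\psi}$ the relation $s_e\rho=\rho$ is equivalent to $s_e\ket{\psi}=\ket{\psi}$, and then everything reduces to manipulations of the single vector $\ket{\psi}$, where $s_{e1}s_{e0}\ket{\psi}=\ket{\psi}$ together with $s_{e0}\ket{\psi}=\ket{\psi}$ gives $s_{e1}\ket{\psi}=\ket{\psi}$ with no invertibility needed; the non-singularity hypothesis is then best read as a safeguard ensuring the stabilizer-expression arithmetic stays within the intended algebra. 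All remaining steps are routine algebra on commutators and on the identity $s_e(\ket{\psi}\bra{\psi})=\ket{\psi}\bra{\psi}$.
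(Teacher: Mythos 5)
Your proposal is correct and follows essentially the same route as the paper's proof: direct algebra on $s_e\rho=\rho$ for the eigenstate obligations, and propagation of commutation with the variables in $\sigma$ for the rest. One small correction to your closing remark: the non-singularity of $s_{e0}$ in part (2) is not merely a safeguard — it is exactly what lets you cancel $s_{e0}$ on the right of $\svar s_{e1}s_{e0}=s_{e1}\svar s_{e0}$ to conclude $\svar s_{e1}=s_{e1}\svar$ (equivalently, to apply your own identity $s_{e1}=(s_{e1}s_{e0})s_{e0}^{-1}$), and the commutation half of (2) would fail without it even though the eigenstate half does not use it.
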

\myproof{\postpone{\ref{app:assn}}.}

Rules from classical Boolean predicates can also be used for {\assnname}, such as the rules for disjunction and conjunction. We will use these rules directly without extra description. Especially, the identity operator $I$ represents \textbf{True} and the empty operator $0$ represents \textbf{False} in {\assnname}.

\subsection{Partial Correctness}
\label{subsect:partial}

\begin{figure}
\begingroup\fontsize{\myfontsize}{\mylinesize}
    $\{A\}\textbf{skip}\{A\}\ \hfill\text{(Skip)}$%
    
    $\{A[\ket{0}/q]\} q\coloneqq \ket{0} \{A\} \hfill \text{(Initialization)}$ %
    
    $\{A\} \bar{q}\coloneqq U\bar{q} \{UAU^\dagger\} \hfill\text{(Unitary)}$\\
    \hfill  $U$ is a unitary, but written in the sum of stabilizers.

    $\{A\} \svar\coloneqq \pm\svar \{A\} \hfill \text{(Assignment)}$%
    
    $\{A\} \svar\coloneqq s \{A\} \hfill \text{(Assignment)}$ \\
    \hfill  where $s$ is commutable with $A$, otherwise $\{A\} \svar\coloneqq s \{I\}$.

    $\infer{\{A\}\prog_1;\prog_2\{B\}}{\{A\}\prog_1\{C\}\quad \{C\}\prog_2\{B\}}\hfill \text{(Sequencing)}$%

    $\infer{\{\sum_{i=0}^1 A_i M_i\}\textbf{if}\,M[\svar, \bar{q}]\,\textbf{then}\, \prog_1\,\textbf{else}\, \prog_0\,\textbf{end}\{B\}}{\{A_1 \wedge \svar\}\prog_1\{B\}\quad \{A_0 \wedge -\svar \}\prog_0\{B\}} \hfill \text{(Condition)}$
    \\

    $\infer{\{\sum_{i=0}^1 A_i M_i\} \textbf{while}\, M[\svar, \bar{q}]\, \textbf{do}\, \prog_1\, \textbf{end}\,\{ A_0 \wedge -\svar \} }{\{ A_1 \wedge \svar \}\prog_1\{\sum_{i=0}^1 A_i M_i\} }\hfill \text{(While)}$\\
    
    $\infer{\{A\}\prog\{B\} }{\models (A \Rightarrow A')\quad \{A'\}\prog\{B'\}\quad \models (B' \Rightarrow B)} \hfill \text{(Consequence)}$\\
    
\endgroup
    \caption{Hoare rules for partial correctness assertions when $A \coloneqq s_e$.
    }
    \label{fig:pca}%
\end{figure}

A partial correctness assertion in \assnname\ has the form:
\hfill{} $\{A\} c \{B\}$, where $A, B \in $ \assnname, and $c\in$ \langname. We first present the Hoare logic for partial correctness assertions in which the precondition $A$ is a stabilizer expression $s_e$, as shown in Figure~\ref{fig:pca}. We will extend the Hoare logic to Boolean expressions like $A_1 \wedge A_2$ in Proposition~\ref{lem:bool-assn}. %

The proof rules in Figure~\ref{fig:pca} are syntax-directed and reduce proving a partial correctness assertion of a compound statement to proving partial correctness assertions of its sub-statements. We only explain some rules below, since most rules are self-explained.

In the initialization rule, $(\rho, \sigma)\models A[\ket{0}/q]$ means that $A\rho_0^q = \rho_0^q$ and $A$ commutes with all stabilizer variables in $\sigma$.  This can be seen as the quantum version substitution rule. 
A more useful case of the initialization rule is when all qubits are reset to $\ket{0}$, and for the $n$-qubit system, we have 
\begin{align}
    \{I\}q_{n-1}\cdots q_0\coloneqq \ket{0}^{\otimes n}\{Z_0\wedge Z_1\wedge \cdots \wedge Z_n\}.
\end{align}

In the unitary rule, we represent unitary matrices as the sum of stabilizers in order to utilize the cheap computational cost of stabilizer multiplication. 

The rules for condition and while loop resembles their classical counterparts except the state may be changed by the branching condition. A direct derivative of the Condition rule 
is to make $A_1 = A_0 = A$ as follows,
\begin{lemma}
$\dfrac{\{A \wedge \svar\}\prog_1\{B\}\quad \{A \wedge -\svar \}\prog_0\{B\}}{\{A\}\textbf{if}\,M[\svar, \bar{q}]\,\textbf{then}\, \prog_1\,\textbf{else}\, \prog_0\,\textbf{end}\{B\}}
$.\end{lemma}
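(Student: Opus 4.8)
The plan is to obtain this lemma as an immediate specialization of the Condition rule of Figure~\ref{fig:pca}, taking $A_1 := A$ and $A_0 := A$, followed by a one-line simplification of the resulting precondition. First I would instantiate the Condition rule with these choices: its two premises $\{A_1 \wedge \svar\}\prog_1\{B\}$ and $\{A_0 \wedge -\svar\}\prog_0\{B\}$ become exactly $\{A \wedge \svar\}\prog_1\{B\}$ and $\{A \wedge -\svar\}\prog_0\{B\}$, which are precisely the two hypotheses of the lemma. The rule then yields the triple $\{\sum_{i=0}^1 A M_i\}\,\textbf{if}\,M[\svar,\bar q]\,\textbf{then}\,\prog_1\,\textbf{else}\,\prog_0\,\textbf{end}\,\{B\}$, so all that is left is to show the precondition $\sum_{i=0}^1 A M_i$ may be replaced by $A$.

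For that step I would recall from the operational semantics (Figure~\ref{fig:qec-lang-op}) that $M_1 = (I+\svar)/2$ and $M_0 = (I-\svar)/2$, whence $M_0 + M_1 = I$. Since the stabilizer variable $\svar$ occurs in the store $\sigma$ whenever this if-statement is evaluated, any state $(\rho,\sigma)$ with $(\rho,\sigma)\models A$ forces the stabilizer expression $A$ to commute with $\svar$, and hence with both $M_0$ and $M_1$; consequently each product $A M_i$ is a bona fide predicate, addition distributes, and $\sum_{i=0}^1 A M_i = A M_0 + A M_1 = A(M_0+M_1) = A$. Thus the precondition produced by the Condition rule is literally $A$ and no weakening is required; should one prefer to route the argument through the Consequence rule, the entailments $\models(A \Rightarrow \sum_{i=0}^1 A M_i)$ and $\models(\sum_{i=0}^1 A M_i \Rightarrow A)$ follow from the same computation and close the gap.

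I do not expect a genuine obstacle: the lemma is a corollary of an already-proven rule, and the only mathematical content is the identity $M_0 + M_1 = I$ together with the commutation of $A$ with $\svar$. The single point that merits explicit care is exactly that commutation — checking that $A M_i$ is meaningful as a predicate and that the manipulation $A M_0 + A M_1 = A(M_0+M_1)$ is sound in the predicate algebra of \assnname{} — but this is supplied directly by the clause in the definition of $(\rho,\sigma)\models s_e$ that requires $s_e$ to commute with every stabilizer variable appearing in $\sigma$.
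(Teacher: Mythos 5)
Your proposal is correct and follows exactly the paper's route: the lemma is the Condition rule specialized to $A_1 = A_0 = A$, with the precondition collapsing via $M_0 + M_1 = I$ (the paper's entire proof is precisely this observation). Your additional remark on the commutation of $A$ with $\svar$ only makes explicit a point the paper leaves implicit.
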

\myproof{
Note that $M_0 + M_1 = I$.
}

Likewise, by letting $A_1 = A_0 = A$, we have%
\begin{lemma}
$\dfrac{\{ A \wedge \svar \}\prog_1\{A\} }{\{A\} \textbf{while}\, M[\svar, \bar{q}]\, \textbf{do}\, \prog_1\, \textbf{end}\,\{ A \wedge -\svar \} }$.
\end{lemma}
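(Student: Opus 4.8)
The plan is to derive this lemma as a direct specialization of the While rule in Figure~\ref{fig:pca}, exactly as the preceding Condition-specialization lemma (with proof ``Note that $M_0 + M_1 = I$'') was obtained from the Condition rule. So the whole task is to set $A_1 = A_0 = A$ in the general While rule and check that the precondition and postcondition collapse to the claimed forms.

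\textbf{Step 1: instantiate the general rule.} The general While rule reads
\[
\infer{\{\textstyle\sum_{i=0}^1 A_i M_i\}\ \textbf{while}\ M[\svar,\bar q]\ \textbf{do}\ \prog_1\ \textbf{end}\ \{A_0 \wedge -\svar\}}{\{A_1 \wedge \svar\}\prog_1\{\textstyle\sum_{i=0}^1 A_i M_i\}}.
\]
Setting $A_0 = A_1 = A$, the premise becomes $\{A \wedge \svar\}\prog_1\{A_0M_0 + A_1M_1\}$ and the conclusion's postcondition becomes $A \wedge -\svar$, which already matches the target. It remains to simplify $A_0M_0 + A_1M_1 = A M_0 + A M_1 = A(M_0 + M_1)$.

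\textbf{Step 2: simplify the loop invariant.} Recall $M_1 = \frac{I+\svar}{2}$ and $M_0 = \frac{I-\svar}{2}$, so $M_0 + M_1 = I$, hence $A(M_0+M_1) = A$. Thus the premise reduces to $\{A \wedge \svar\}\prog_1\{A\}$ and the conclusion to $\{A\}\,\textbf{while}\,M[\svar,\bar q]\,\textbf{do}\,\prog_1\,\textbf{end}\,\{A \wedge -\svar\}$, which is exactly the statement. One should note here, as in the writing convention for $A_i M_i$ used throughout Figure~\ref{fig:pca}, that the expression $\sum_i A_i M_i$ is the intended predicate-transformer shorthand, and the identity $M_0 + M_1 = I$ is the only arithmetic fact needed; no commutativity subtlety arises because $A$ is simply pulled out as a common left factor at the level of operator expressions. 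Since every stabilizer expression commutes with itself, the side condition that the (unchanged) invariant $A$ be commutable with $\svar$ after the loop test is consistent with the postcondition $A\wedge -\svar$ being well-formed.

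I do not anticipate a genuine obstacle: this is a one-line corollary whose only content is the algebraic identity $M_0+M_1=I$, mirroring the immediately preceding lemma. The only place requiring a word of care is confirming that substituting $A_0=A_1=A$ is a legitimate instance of the While rule (i.e., that the rule is stated for arbitrary $A_0,A_1$, not with some implicit disjointness constraint) — but Figure~\ref{fig:pca} presents it with free $A_0,A_1$, so the specialization is immediate. Hence the proof is simply: \myproof{Set $A_1 = A_0 = A$ in the While rule of Figure~\ref{fig:pca} and note that $M_0 + M_1 = I$, so $A_0 M_0 + A_1 M_1 = A$.}
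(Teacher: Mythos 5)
Your proof is correct and is exactly the paper's own argument: the paper derives this lemma ``likewise'' from the preceding Condition-rule specialization, i.e., by setting $A_1 = A_0 = A$ in the While rule and using $M_0 + M_1 = I$. Nothing further is needed.
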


The consequence rule is a powerful tool for the verification of QEC programs since it can encode facts of QEC codes into partial correctness assertions. 
The following example demonstrates the usage of the proposed Hoare rules, including the consequence rule:
\begin{example}
Assume $\prog\Coloneqq \svar\coloneqq Z_1; \textbf{if}\ M[\svar, q_1]\ \textbf{then}\\\textbf{skip} \textbf{else}\,q_1\coloneqq Xq_1;q_0\coloneqq Xq_0 \,\textbf{end}$. \\We prove  $\{Z_0Z_1\}\prog\{Z_0\}$ as follows:\\
$\{Z_0Z_1\}\svar\coloneqq Z_1; \{Z_0Z_1\}$ \hfill{} (Assignment) \\
$(Z_0Z_1) M_1 = (Z_0Z_1)\frac{I+Z_1}{2} = Z_0\frac{I+Z_1}{2}$, 
$(Z_0Z_1) M_0 = -Z_0\frac{I-Z_1}{2}$ \\
$\{Z_0Z_1 \wedge Z_1\}\textbf{skip}\{Z_0Z_1 \wedge Z_1\}$ \hfill{}(Skip) \\
$\{Z_0Z_1 \wedge -Z_1\}q_1\coloneqq Xq_1 \{-Z_0Z_1 \wedge Z_1\}$ \hfill{} (Unitary) \\
$\{-Z_0Z_1 \wedge Z_1\} q_0\coloneqq Xq_0 \{Z_0Z_1 \wedge Z_1\}$ \hfill{} (Unitary) \\
$\{Z_0Z_1 \wedge -Z_1\}q_1\coloneqq Xq_1; q_0\coloneqq Xq_0 \{Z_0Z_1 \wedge Z_1\}$ \hfill{} (Sequencing)\\
$\{Z_0Z_1= (Z_0Z_1)M_0+ (Z_0Z_1)M_1\}\textbf{if}\ M[\svar, \bar{q}]\ \textbf{then}\  \textbf{skip}\,\\\textbf{else}\,q_1\coloneqq Xq_1; q_0\coloneqq Xq_0 \,\textbf{end}\{Z_0Z_1 \wedge Z_1\}$ \hfill{} (Condition)\\
Then, $\{Z_0Z_1\}\prog\{Z_0Z_1 \wedge Z_1\}$ \hfill{} (Sequencing)\\
$Z_0Z_1 \wedge Z_1 \Rightarrow Z_0$ \hfill (Implication) \\
With the consequence rule, we have 
$\{Z_0Z_1\}\prog\{Z_0\}$.
\end{example}

Now we extend the Hoare rules in Figure~\ref{fig:pca} to other Boolean assertions in \assnname. \nothmskip
\begin{restatable}[]{proposition}{boolassn}
\label{lem:bool-assn}
	We restate the Hoare rules for classical Boolean assertions as follows, \\
	if $\{A_0\}\prog\{B_0\} \wedge \{A_1\}\prog\{B_1\}$,  $\{A_0\wedge A_1\}\prog\{B_0\wedge B_1\}$; \\
	if $\{A_0\}\prog\{B_0\} \vee \{A_1\}\prog\{B_1\}$,  $\{A_0\vee A_1\}\prog\{B_0\vee B_1\}$; \\
	$\{I\}\prog\{I\}, \{0\}\prog\{B\}$, where $B$ is any assertion, and $0$ represents an empty set of program states. For example, if $s_{e_1}$ and $s_{e_2}$ anti-commutes, $s_{e_1} \wedge s_{e_2} = 0$.
\end{restatable}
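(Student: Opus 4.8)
The plan is to argue entirely at the level of the satisfaction relation, so that each of these rules becomes a short consequence of the compositional clauses already fixed for $\wedge$, $\vee$, and the constants $I$ and $0$. Recall that $\{A\}\prog\{B\}$ is valid precisely when every program state $(\rho,\sigma)$ with $(\rho,\sigma)\models A$ gets sent by $\denot{\prog}$ to a state satisfying $B$ (the same notion of validity used to justify the base rules of Figure~\ref{fig:pca}). So in each case I would fix an arbitrary $(\rho,\sigma)$ in the stated precondition and push the definitions through to the conclusion's postcondition.

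For the conjunction rule, assume the two premise triples $\{A_0\}\prog\{B_0\}$ and $\{A_1\}\prog\{B_1\}$ and take $(\rho,\sigma)\models A_0\wedge A_1$. The clause for $\wedge$ gives $(\rho,\sigma)\models A_0$ and $(\rho,\sigma)\models A_1$; applying the two premises yields $\denot{\prog}(\rho,\sigma)\models B_0$ and $\denot{\prog}(\rho,\sigma)\models B_1$, hence $\denot{\prog}(\rho,\sigma)\models B_0\wedge B_1$. The disjunction rule is the dual argument: from the premise triples and $(\rho,\sigma)\models A_0\vee A_1$ we get $(\rho,\sigma)\models A_0$ or $(\rho,\sigma)\models A_1$, and in either case the corresponding premise places $\denot{\prog}(\rho,\sigma)$ in $B_0$ or in $B_1$, hence in $B_0\vee B_1$. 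No property of $\prog$ beyond validity of the premises is used; the argument is the verbatim classical one, which is why it holds for every QEC program $\prog$.

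For the constants I would observe: every program state satisfies $I$ (since $I\rho=\rho$ for all $\rho$ and $I$ commutes with every stabilizer variable), so the postcondition of $\{I\}\prog\{I\}$ is the weakest possible assertion and the triple holds trivially. Dually, $0$ (the empty operator) is satisfied by no genuine program state --- $0\cdot\rho=\rho$ forces $\rho=0$, which the paper already treats as the empty set of program states --- so the precondition of $\{0\}\prog\{B\}$ is never met and the triple is vacuously valid for every $B$. For the example, if $s_{e_1}$ and $s_{e_2}$ anticommute and some $(\rho,\sigma)$ satisfied both, then
\[ \rho = s_{e_1}\rho = s_{e_1}s_{e_2}\rho = -s_{e_2}s_{e_1}\rho = -s_{e_2}\rho = -\rho , \]
forcing $\rho=0$; hence $s_{e_1}\wedge s_{e_2}$ is equivalent to the constant $0=\textbf{False}$, exactly as the rule states.

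The proof carries essentially no quantum content, so the only care points are semantic bookkeeping. First, one must have fixed the meaning of ``$\denot{\prog}(\rho,\sigma)\models B$'' when $\prog$ contains \textbf{if}/\textbf{while}, since the output is then a formal sum of sub-states living in different stabilizer contexts; but this reading is inherited unchanged from the base system and does not interact with the Boolean connectives. Second, for $\{0\}\prog\{B\}$ to hold for \emph{every} $B$ one genuinely needs the convention that $0$ denotes the empty set of program states --- otherwise the zero density operator could survive $\denot{\prog}$ and fail the commutativity side-condition of a stabilizer-expression postcondition --- so I would make that convention explicit before giving the one-line argument. I expect no further obstacle.
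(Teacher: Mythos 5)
Your proposal is correct and follows essentially the same route as the paper's proof: a direct semantic unfolding of the satisfaction relation for $\wedge$, $\vee$, $I$, and $0$ (the paper phrases the conjunction case via the consequence rule, but that is the same argument), with the $\{0\}\prog\{B\}$ case handled vacuously. Your explicit verification that anticommuting $s_{e_1}, s_{e_2}$ force $\rho = -\rho = 0$ is a small addition the paper leaves implicit, but it does not change the approach.
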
\nothmskip
\myproof{
\postpone{\ref{app:assn}}.}

We prove a lemma for error correction which is frequently used in verification sections later.
\nothmskip
\begin{restatable}[Decoding correctness]{proposition}{decodecorrect}
\label{prop:decodecorrect}
Assume an valid error decoding and correction protocol for $\textbf{correct}$ function. Let $S$ be the set of all active stabilizer measurements in error correction, and  define $ A_S = \wedge_{s_i \in S} s_i$, then \\
$\{I \}\textbf{correct}(\svar_0, \svar_1, \cdots) \{ A_S\}$,\\
$\{A \wedge A_S \}\textbf{correct}(\svar_0, \svar_1, \cdots) \{A \wedge A_S\}$,\\ where $\svar_0, \svar_1, \cdots$ enumerate all elements of $S$.
\end{restatable}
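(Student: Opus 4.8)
The plan is to unfold the definition of $\textbf{correct}(\svar_0,\svar_1,\cdots)$ as a sequence of stabilizer measurements followed by conditional corrections, and then reason about its effect on the program state using the Hoare rules of Figure~\ref{fig:pca}. A \emph{valid} decoding protocol, by definition, has the property that whenever it is run on a state that lies in the joint $+1$ eigenspace of all active stabilizers it leaves that state unchanged (no syndrome, no correction), and whenever it is run on a state with a nontrivial syndrome it applies a Pauli correction that drives the state back into the joint $+1$ eigenspace of $S$ and restores the signs of all the $\svar_i$. So the first step is to write the two assertions we want as statements about the post-state being in $S_{A_S}$, i.e.\ $A_S\rho' = \rho'$ with $A_S = \wedge_{s_i\in S} s_i$, together with the commutation side-condition that all $\svar_i$ appearing in $\sigma'$ are (trivially, being elements of $S$) commutable with $A_S$ since all stabilizers of a code mutually commute.

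For the first Hoare triple $\{I\}\textbf{correct}(\cdots)\{A_S\}$: starting from the trivial precondition $I$ (every state), I would model $\textbf{correct}$ as measuring each $\svar_i$ in turn — each measurement being an $\textbf{if } M[\svar_i,\bar q]$ statement whose two branches either \textbf{skip} or apply a sign-flip followed by the relevant Pauli correction — and then invoke the Condition rule of Figure~\ref{fig:pca} together with the Unitary rule for the Pauli corrections. The key computational fact is that after measuring $\svar_i$, the post-state satisfies either $\svar_i$ or $-\svar_i$ (captured by $A_1\wedge\svar_i$ resp.\ $A_0\wedge -\svar_i$ in the Condition rule), and in the $-1$ branch the protocol applies the correcting Pauli $P_i$ with $P_i(-\svar_i)P_i^\dagger = \svar_i$, restoring $\svar_i$ as a stabilizer of the corrected state; since all $\svar_j$ commute, this correction does not disturb the previously-fixed stabilizers. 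Iterating over all $s_i\in S$ and collecting the post-conditions with the conjunction rule of Proposition~\ref{lem:bool-assn} yields $\{I\}\textbf{correct}(\cdots)\{\wedge_{s_i\in S} s_i\} = \{I\}\textbf{correct}(\cdots)\{A_S\}$.

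For the second triple $\{A\wedge A_S\}\textbf{correct}(\cdots)\{A\wedge A_S\}$: here I would argue that when the precondition already guarantees $A_S\rho=\rho$, i.e.\ the state is in the logical subspace with no error, every stabilizer measurement inside $\textbf{correct}$ returns outcome $+1$, so by the ``If 1'' branch only \textbf{skip} statements execute and neither $\rho$ nor $\sigma$ is modified; hence the auxiliary assertion $A$ (which, by hypothesis in the statement, is commutable with all the $\svar_i$) is preserved verbatim alongside $A_S$. More carefully, one uses the Condition rule with $A_1 = A\wedge A_S$ in the $+1$ branch and observes the $-1$ branch is vacuous because $A_S \wedge -\svar_i = 0$ (an element of $S$ and its negation anti-commute in the relevant sense, giving the empty predicate by Proposition~\ref{lem:bool-assn}), so the Condition rule's $-1$ premise is satisfied trivially with post-condition $A\wedge A_S$; the Consequence rule then cleans up the precondition $\sum_i A_i M_i$ into $A\wedge A_S$.

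The main obstacle I anticipate is making precise what ``valid error decoding and correction protocol'' means as a formal object — the proposition quantifies over an unspecified $\textbf{correct}$ function, so the proof must rest on an axiomatization of its correctness (it maps syndromes to signs-restoring Pauli corrections, and it is the identity on the logical subspace). Once that specification is fixed, the two triples follow by structural induction on the measurement-and-correct loop using exactly the Condition, Unitary, Sequencing, and Consequence rules plus Proposition~\ref{lem:bool-assn}; the only subtlety beyond bookkeeping is tracking that successive corrections commute, which is immediate from the fact that all stabilizers of a stabilizer code pairwise commute.
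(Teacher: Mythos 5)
Your proof of the second triple is essentially the paper's own argument: both hinge on the observation that $A \wedge A_S$ picks out error-free states, that $A \wedge A_S \wedge -\svar_i = 0$, and that the $-1$ branches are therefore vacuous by $\{0\}\prog\{B\}$ from Proposition~\ref{lem:bool-assn} together with the Condition rule. For the first triple the paper is far more terse than you are: it simply declares that a valid correction function projects onto a quiescent state "by definition," i.e.\ it treats $\{I\}\textbf{correct}(\cdots)\{A_S\}$ as part of the axiomatization of validity, which is exactly the obstacle you correctly identify at the end. Your attempt to discharge it operationally is a reasonable elaboration, but one step in it is misstated: you justify that successive corrections do not disturb previously-restored stabilizers because "all stabilizers of a stabilizer code pairwise commute." The correction operators $P_i$ are not stabilizers --- each $P_i$ must \emph{anti-commute} with the fired stabilizer $\svar_i$ in order to flip its sign --- so the relevant fact is that a valid decoder chooses each correction to anti-commute with exactly the fired stabilizers and to commute with all the rest; this is a property of the decoder's validity, not a consequence of the stabilizer group being abelian. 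Since both you and the paper ultimately lean on the definition of "valid" at precisely this point, this does not invalidate your proof, but the justification as written would not survive formalization.
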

\myproof{
\postpone{\ref{app:assn}}.}

Finally, we prove the soundness of Hoare rules in Figure~\ref{fig:pca}. %
\begin{restatable}[Soundness]{theorem}{soundness}
\label{thm:soundness}
The proof system in Figure~\ref{fig:pca} is sound for the partial correctness assertions.
\end{restatable}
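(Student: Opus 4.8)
The plan is to establish soundness by structural induction on the derivation of the Hoare triple $\{A\}\prog\{B\}$, where $A$ is a stabilizer expression $s_e$. The semantic meaning to be verified is: for every program state $(\rho,\sigma)$ with $(\rho,\sigma)\models A$, we have $\denot{\prog}(\rho,\sigma)\models B$ (partial correctness, so termination is not required). The base cases are the axioms (Skip), (Initialization), (Unitary), and the two (Assignment) rules; the inductive cases are (Sequencing), (Condition), (While), and (Consequence). Throughout I would use the denotational semantics of Figure~\ref{fig:lang-denote}, Proposition~\ref{prop:equiv} to relate it to operational behavior when convenient, and the Implication rule (Lemma~\ref{lem:implicitrule}) together with Proposition~\ref{lem:bool-assn} to manipulate the stabilizer-expression predicates.

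First I would handle the axioms. (Skip) is immediate since $\denot{\textbf{skip}}(\rho,\sigma)=(\rho,\sigma)$. For (Unitary), if $A\rho=\rho$ then $(UAU^\dagger)(U\rho U^\dagger)=U\rho U^\dagger$, and commutativity of $UAU^\dagger$ with the (unchanged) stabilizer variables in $\sigma$ follows because $\sigma$ is untouched and $A$ was assumed commutable — wait, one must be careful: the postcondition predicate is $UAU^\dagger$, and we need it commutable with $\sigma$; but $\sigma$ is unchanged and the original commutativity was between $A$ and $\sigma$. This is exactly the kind of subtlety to check, and I expect it is handled by the convention that the stabilizer variables are themselves conjugated implicitly, or by restricting $U$ to Clifford operations on the relevant registers. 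For (Initialization), unfold $\denot{q\coloneqq\ket 0}(\rho,\sigma)=(\rho_0^q,\sigma)$ and use the definition of $A[\ket0/q]$ as given in the text ($A\rho_0^q=\rho_0^q$ plus commutativity), so the triple holds by definition. For the two (Assignment) rules, $\rho$ is unchanged so $A\rho=\rho$ persists; the only content is the commutativity bookkeeping on $\sigma$: assigning $\pm\svar$ does not change which Pauli operator $\svar$ evaluates to up to sign, so commutativity is preserved; assigning a fresh $s$ either preserves commutativity (if $s$ commutes with $A$, postcondition $A$) or we fall back to $\{A\}\svar\coloneqq s\{I\}$, which is trivially sound since every state satisfies $I$.

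Next the inductive cases. (Sequencing) is routine: compose the two inductive hypotheses using $\denot{\prog_1;\prog_2}=\denot{\prog_2}\circ\denot{\prog_1}$. (Consequence) is routine given the semantics of $\models(A\Rightarrow A')$: if $(\rho,\sigma)\models A$ then $(\rho,\sigma)\models A'$, apply the inductive hypothesis to get a state satisfying $B'$, then $\models(B'\Rightarrow B)$ upgrades to $B$. For (Condition), the denotational semantics splits $\denot{\prog}(\rho,\sigma)$ as $\denot{\prog_1}(M_1\rho M_1^\dagger,\sigma)+\denot{\prog_0}(M_0\rho M_0^\dagger,\sigma[-\denot{\svar}\sigma/\svar])$; the key is that if $(\rho,\sigma)\models\sum_i A_iM_i$ then $(M_1\rho M_1^\dagger,\sigma)\models A_1\wedge\svar$ and $(M_0\rho M_0^\dagger,\sigma')\models A_0\wedge-\svar$ — this uses that $M_i=\tfrac{I\pm\svar}2$ are the eigenprojectors, that $A_i$ commutes with $\svar$ (so $A_i M_i$ applied to $\rho$ behaves correctly), and the implication-rule bookkeeping of Lemma~\ref{lem:implicitrule}. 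Then the two inductive hypotheses each land in $B$, and a sum of states each satisfying the projector-style predicate $B$ still satisfies $B$ (linearity of $B\rho=\rho$ and closure of the satisfying set under the relevant addition).

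\textbf{Main obstacle.} The hard part will be the (While) case. Here partial correctness means we must show that if $(\rho,\sigma)\models\sum_i A_iM_i$, then $\denot{\textbf{while}\dots}(\rho,\sigma)=\sqcup_{k}\denot{\textbf{while}^{(k)}}(\rho,\sigma)$ satisfies $A_0\wedge-\svar$. The standard Hoare-logic argument shows each finite unrolling $\denot{\textbf{while}^{(k)}}(\rho,\sigma)$ decomposes into a ``still looping'' part plus a sum of ``exited at some iteration'' parts, each exited part satisfying $A_0\wedge-\svar$ by repeated use of the inductive hypothesis (the loop-body triple $\{A_1\wedge\svar\}\prog_1\{\sum_i A_iM_i\}$), invoking the (Condition)-style splitting at each step. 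The subtlety specific to this framework is the unusual partial order on $(\rho,\sigma)$: one must argue the supremum is taken in the CPO, that the satisfying set of $A_0\wedge -\svar$ is closed under the relevant directed suprema (a Scott-continuity / closedness property of predicates), and — critically — that the stabilizer-variable component $\sigma$ behaves consistently, i.e. the ``strict QEC program / just-in-time correction'' hypothesis from Proposition~\ref{prop:equiv} is what guarantees the $\sigma$-coordinate does not collapse to $\perp$ so that the sup is informative. I would structure this as: (i) prove by induction on $k$ a decomposition lemma for $\textbf{while}^{(k)}$; (ii) show each completed branch satisfies $A_0\wedge-\svar$; (iii) pass to the limit using monotonicity and topological closedness of $\{(\rho,\sigma):(\rho,\sigma)\models A_0\wedge-\svar\}$ in the CPO. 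Steps (i)-(ii) are bookkeeping; step (iii), the limit argument in this particular ordering, is where the real care is needed, and I would lean on the analogous argument in Ying~\cite{Ying2012FloydhoareLF} adapted to the product state space.
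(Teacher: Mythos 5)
Your proposal is correct and follows essentially the same route as the paper's proof: a case analysis over the rules of Figure~\ref{fig:pca}, with the Condition case argued via the eigenprojectors $M_i=\frac{I\pm\svar}{2}$ and commutativity of $A_i$ with $\svar$, and the While case reduced to partial correctness of the finite unrollings $\textbf{while}^{(k)}$. You are in fact somewhat more careful than the paper in two spots it glosses over — the commutativity of $UAU^\dagger$ with the unchanged $\sigma$ in the Unitary case, and the passage to the supremum in the CPO for While — but these are refinements of, not departures from, the paper's argument.
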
\nothmskip
\myproof{
\postpone{\ref{app:assn}}.}

\section{Theoretical Analysis}
In this section, we give a theoretical analysis of %
both the program size  and the computational complexity of our framework for implementing and verifying surface codes~\cite{Fowler2012SurfaceCT, Horsman2012SurfaceCQ}, respectively.

\subsection{Program Size}
We first compare the program size (i.e., the number of statements) when implementing the surface code~\cite{Fowler2012SurfaceCT, Horsman2012SurfaceCQ}, in the qWhile-Lang (i.e., the {\qwhilelang}) and the {\langname} (see row 2-3 of Table~\ref{tab:comp}).
In surface code, we consider two approaches to encode a logical qubit, the planar code and the double defect code (detailed implementation of these codes can be found in \cite{Fowler2012SurfaceCT, Horsman2012SurfaceCQ}).
For the distance-$d$ surface code, the planar version requires $O(d^2)$ data qubits, $O(d^2)$ parity qubits as well as $O(d^2)$ stabilizers. The double-defect version introduces an overhead of a factor 10 in all the three quantities.

As a code size estimation of {\langname}, we only need one statement per stabilizer measurement, whereas the qWhile-Lang requires at least eight gate operations
to describe the circuit measuring a stabilizer~\cite{Fowler2012SurfaceCT}. Thus, {\langname} provides $8\times$ program size compression for the surface code implementations.

\subsection{Verification Complexity of Clifford Gates} %

The defining property of Clifford operations is that, given a Clifford gate $G$ and a stabilizer $s$, $GsG^\dagger$ must also be a stabilizer, i.e. Clifford operations do not increase the number of stabilizers in the assertion. 

By framing both assertions and unitary operations in the language of stabilizers, {\myFrameworkName} can efficiently processes the verification of Clifford operations. The efficiency stems from the low cost of multiplying stabilizers, which is $O(d)$ because the length of the stabilizers for logical states is at most $d$ for a distance-$d$ surface code.
In this way we avoid representing stabilizers as exponentially-large matrices.
Therefore, \myFrameworkName~only incurs $O(d^3)$ computational overhead for the planar surface code and $O(10d^3)$ computational overhead for the double-defect surface code.

However, 
the vanilla quantum Hoare logic in qWhile-Lang can not exploit the property of Clifford operations and the low computational complexity of stabilizer multiplication. 
The Clifford operations are treated like any other unitary operations and the predicate in qWhile-Lang is a Hermitian matrix
of size $O(2^{n_d + n_p}\times 2^{n_d + n_p})$, where $n_d$ is the number of data qubits and $n_p$ is the number of parity qubits. Hoare rules with such predicates incurs at least $O(2^{n_d + n_p}\times 2^{n_d + n_p})$ computational overhead. Thus, the verification with qWhile-Lang requires $O(8d^24^{2d^2})$ time for the planar surface code, and $O(80d^24^{20d^2})$ for the double-defect surface code.
In summary, the proposed language design, assertion design, and the logic proof system can significantly simplify the verification of all Clifford operations of stabilizer codes.

\begin{table}[]
\resizebox{\columnwidth}{!}{
\begin{tabular}{|p{1.8cm}|p{2cm}|p{3cm}|p{2.5cm}|}
\hline
      Metric                &       Method                             & Planar surface code & Double-defect surface code \\ \hline
Statements \#           & qWhile-Lang & $O(8d^2)$           & $O(80d^2)$                 \\ \cline{2-4} 
                                         & {\myFrameworkName}               & $O(d^2)$            & $O(10d^2)$                 \\ \hline
Verification  & qWhile-Lang & $O(8d^24^{2d^2})$     & $O(80d^24^{20d^2})$             \\ \cline{2-4} 
                  Complexity                      & {\myFrameworkName}               & $O(d^3)$            & $O(10d^3)$                 \\ \hline             
\end{tabular}
}
\caption{
Comparison of {{\myFrameworkName}} and  qWhile-Lang on implementing and verifying surface codes. 
}
\label{tab:comp}
\end{table}

\subsection{Verification Complexity of T Gate}

The logical T gate is usually the most challenging problem in quantum program verification in general.
The T gate, loosely speaking, represents a fundamental boundary between classical and quantum computing. 
A quantum program with T gates cannot be efficiently and precisely simulated or verified on a classical computer. 
In \myFrameworkName, the number of stabilizers in our predicate will rapidly increase when the program to be verified has some T gates.

With this being said, we argue that verifying a QEC implementation of \textbf{one} logical T gate could be easier in many cases. The exact verification efficiency would be determined by the amount of the non-Clifford operations involved in the implementation of a logical T gate.
In the surface code, for example, there is only \textbf{one} non-Clifford single-qubit physical  gate~\cite{Fowler2015MinimumWP} for a logical T gate. 
The verification complexity will remain $O(d^2)$ because the number of stabilizer terms in a predicate is still $O(1)$. As such, \myFrameworkName~can still hold the exponential advantage for surface code. We remark again that such advantages come from our stabilizer-centric design in developing the verification framework.

\section{Case Study I: Repetition Code}
In this section and the next section, we give step-by-step case study on two well-known QEC codes to guide through the usage of our framework. 
For each QEC code, we first express its implementation in our {\langname}. %
Then we verify the correctness of the logical operation with our proof system and show that the implemented QEC code can correct local errors on the physical qubits.

We start from the quantum repetition code~\cite{nielsen2002quantum}, which is relatively simple with light error correction overhead. 
This code can correct bit-flip error or phase-flip error, but not when they happen simultaneously. 
The repetition code is mainly deployed on quantum architectures whose underlying physical qubits (e.g., the cat qubit built upon bosonic quantum system~\cite{Chamberland2020BuildingAF}) already have extremely low phase-flip (or bit-flip) error rates. %

\subsection{Quantum Repetition Code}
We consider a three-qubit quantum repetition code to simplify the discussion. At high level, the three-qubit code just encodes one logical qubit with three physical qubits. For example, the $\ket{000}$ state of three physical qubits represents the logical $\ket{0_L}$ state of a logical qubit, and the $\ket{111}$ state represents the logical $\ket{1_L}$ state. 
We first give the circuit diagrams for  primitive operations in quantum repetition code and their code implementations in {\langname} (Figure~\ref{fig:repcode}). 
The interested reader can find detailed explanations of the repetition code design in references~\cite{nielsen2002quantum, google50296}.

Figure~\ref{fig:repcode} (a) (b) (c) gives the implementations of the qubit initialization, the logical X gate $X_L = X_0X_1X_2$, and the logical Z gate $Z_L = Z_0Z_1Z_2$, respectively. 
The logical CNOT gate between two logical qubits can be implemented by imposing CNOT gates on three pairs of physical qubits, as shown in Figure~\ref{fig:repcode} (d).

\subsection{Verification of Logic Operations}

\begin{figure*}
\captionsetup[subfigure]{labelformat=empty}
\begin{subfigure}[b]{0.19\textwidth}
\hspace{-10pt}\includegraphics[height=51pt]{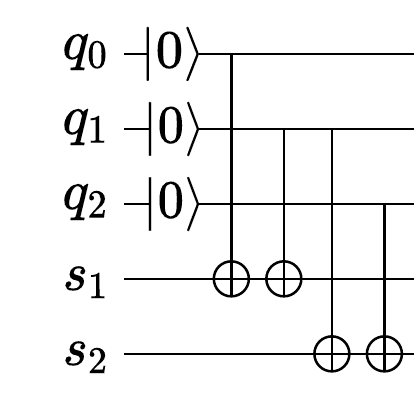}%

\begingroup \fontsize{\myfontsize}{\mylinesize}
$
q_2q_1q_0 \coloneqq \ket{000} \\
\svar_0\coloneqq Z_0Z_1 \\
\svar_1\coloneqq Z_1Z_2 \\
\textbf{correct}(\svar_0, \svar_1)
$
\endgroup
\subcaption{(a) Initialization.}
\end{subfigure} \tikz{\draw[-,black, densely dashed, thick](0,-1.05) -- (0,3.9);}\hspace{5pt}
\begin{subfigure}[b]{0.24\textwidth}
\hspace{0pt}\includegraphics[height=50pt]{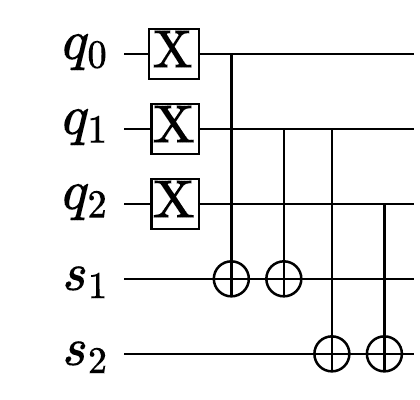}%

\begingroup \fontsize{\myfontsize}{\mylinesize}
$
\svar_0 \coloneqq I \\
\svar_1 \coloneqq I \\
q_2q_1q_0\coloneqq X_2X_1X_0q_2q_1q_0 \\ 
\svar_0 \coloneqq Z_0Z_1 \\
\svar_1 \coloneqq Z_1Z_2
$
\endgroup
\subcaption{(b) Logical X gate.}
\end{subfigure}\tikz{\draw[-,black, densely dashed, thick](0,-1.05) -- (0,3.9);}\hspace{5pt}
\begin{subfigure}[b]{0.24\textwidth}
\hspace{0pt}\includegraphics[height=50pt]{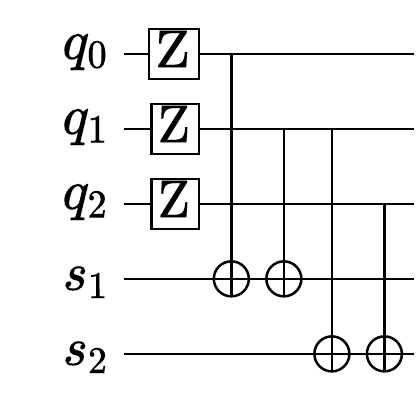}%

\begingroup \fontsize{\myfontsize}{\mylinesize}
$
\svar_0 \coloneqq I \\
\svar_1 \coloneqq I \\
q_2q_1q_0\coloneqq Z_2Z_1Z_0q_2q_1q_0\\ 
\svar_0 \coloneqq Z_0Z_1 \\
\svar_1 \coloneqq Z_1Z_2
$
\endgroup
\subcaption{(c) Logical Z gate.}
\end{subfigure}\tikz{\draw[-,black, densely dashed, thick](0,-1.05) -- (0,3.9);}\hspace{5pt}
\begin{subfigure}[b]{0.24\textwidth}
\hspace{0pt}\includegraphics[height=45pt]{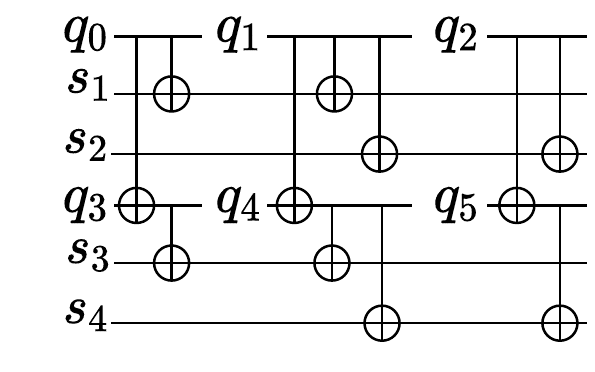}\\
\begingroup \fontsize{\myfontsize}{\mylinesize}
\addtolength{\jot}{-500pt}
$
\svar_0 \coloneqq I; \svar_1 \coloneqq I; \\
\svar_2 \coloneqq I; \svar_3 \coloneqq I; \\
q_0q_3 \coloneqq \text{CX}q_0q_3;\\
q_1q_4 \coloneqq \text{CX}q_1q_4; \\
q_2q_5 \coloneqq \text{CX}q_2q_5; \\
\svar_0 \coloneqq Z_0Z_1; \svar_1 \coloneqq Z_1Z_2 \\
\svar_2 \coloneqq Z_3Z_4; \svar_3 \coloneqq Z_4Z_5 
$
\endgroup
\subcaption{(d) Logical CX gate.}
\end{subfigure}
    \caption{Circuit diagrams for primitive operations in quantum repetition code and their code implementations in {\langname}. $s_1, s_2, s_3, s_4$ are parity qubits. $q_0, q_1, q_2, q_3, q_4, q_5$ are data qubits.
    }
    \label{fig:repcode}%
\end{figure*}

This part prove the correctness of the code segments in Figure~\ref{fig:repcode} with our frameworks. It contains two major steps, defining the predicates for each logical operation and constructing the proof. 

For the initialization operation, the expected behavior is that, for arbitrary input state, the output state should be in the logical state $\ket{0_L}$, which is the simultaneous eigenstate of the logical Z operator $Z_L$ and the stabilizers $Z_0Z_1$ and $Z_1Z_2$. Thus, we set the precondition to $\{I\}$ and the post-condition to $\{Z_L\wedge Z_0Z_1 \wedge Z_1Z_2\}$, as formulated below. \nothmskip
\begin{proposition}[Initialize to $\ket{0}$]
For the program $\prog$ in Figure~\ref{fig:repcode}(a), we have $\{I\}\prog\{Z_L\wedge Z_0Z_1 \wedge Z_1Z_2\}$.
\end{proposition}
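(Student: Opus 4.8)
The plan is to verify $\{I\}\prog\{Z_L\wedge Z_0Z_1\wedge Z_1Z_2\}$ by a straightforward syntax-directed derivation that walks through the four statements of $\prog$ in Figure~\ref{fig:repcode}(a), carrying the stronger invariant $Z_0\wedge Z_1\wedge Z_2$ through the reset and the two stabilizer assignments and only weakening it at the concluding $\textbf{correct}$ call.

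First I would apply the initialization rule of Figure~\ref{fig:pca}; its $n$-qubit special case directly yields $\{I\}\,q_2q_1q_0\coloneqq\ket{000}\,\{Z_0\wedge Z_1\wedge Z_2\}$. Next, each of the assignments $\svar_0\coloneqq Z_0Z_1$ and $\svar_1\coloneqq Z_1Z_2$ preserves this assertion: since $Z_0Z_1$ and $Z_1Z_2$ are tensor products of $I$ and Pauli $Z$, they commute with each of $Z_0$, $Z_1$, $Z_2$, so the commutable branch of the assignment rule applies to every conjunct (we never fall into the $\{I\}$ branch), and combining the three triples with the conjunction rule of Proposition~\ref{lem:bool-assn} gives $\{Z_0\wedge Z_1\wedge Z_2\}\,\svar_j\coloneqq s_j\,\{Z_0\wedge Z_1\wedge Z_2\}$ for $j=0,1$. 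Two uses of the sequencing rule then establish $\{I\}\,q_2q_1q_0\coloneqq\ket{000};\,\svar_0\coloneqq Z_0Z_1;\,\svar_1\coloneqq Z_1Z_2\,\{Z_0\wedge Z_1\wedge Z_2\}$.

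For the final statement I would first note, using the implication rule (Lemma~\ref{lem:implicitrule}, part 1), that $(\rho,\sigma)\models Z_0\wedge Z_1\wedge Z_2$ entails both $(\rho,\sigma)\models Z_0Z_1\wedge Z_1Z_2$ and $(\rho,\sigma)\models Z_0Z_1Z_2=Z_L$, hence $\models\big((Z_0\wedge Z_1\wedge Z_2)\Rightarrow(Z_L\wedge Z_0Z_1\wedge Z_1Z_2)\big)$; all the commutativity side-conditions demanded by $\models$ hold because every operator involved is built from $I$ and $Z$ only. Taking the active stabilizer-measurement set to be $S=\{Z_0Z_1,Z_1Z_2\}$, so $A_S=Z_0Z_1\wedge Z_1Z_2$, the preservation form of Proposition~\ref{prop:decodecorrect} with $A=Z_L$ gives $\{Z_L\wedge A_S\}\,\textbf{correct}(\svar_0,\svar_1)\,\{Z_L\wedge A_S\}$. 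The consequence rule, fed with the implication just derived, upgrades this to $\{Z_0\wedge Z_1\wedge Z_2\}\,\textbf{correct}(\svar_0,\svar_1)\,\{Z_L\wedge Z_0Z_1\wedge Z_1Z_2\}$, and a last application of sequencing with the block above yields the claim. The correctness of the whole derivation rests on Theorem~\ref{thm:soundness} and Proposition~\ref{prop:decodecorrect}.

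This argument is essentially bookkeeping, so there is no single hard step; the points that need attention are (i) distributing the assignment rule over the conjunction $Z_0\wedge Z_1\wedge Z_2$, which requires Proposition~\ref{lem:bool-assn} rather than the single-$s_e$ rule of Figure~\ref{fig:pca}; (ii) discharging, at each step, the commutativity side-conditions that are part of the satisfaction relation $(\rho,\sigma)\models A$ and of the assignment rule; and (iii) invoking Proposition~\ref{prop:decodecorrect} with exactly the pair of stabilizers measured by $\textbf{correct}(\svar_0,\svar_1)$ and checking that the invariant carried forward is strong enough to supply that proposition's precondition $Z_L\wedge A_S$.
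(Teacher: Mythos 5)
Your derivation is correct and follows essentially the same route as the paper's proof: initialization to $Z_0\wedge Z_1\wedge Z_2$, preservation through the two stabilizer assignments, an implication to introduce $Z_0Z_1$ and $Z_1Z_2$, Proposition~\ref{prop:decodecorrect} for the \textbf{correct} call, and the consequence rule to conclude. The only cosmetic difference is that the paper carries the full conjunction $Z_0\wedge Z_1\wedge Z_2\wedge Z_0Z_1\wedge Z_1Z_2$ through \textbf{correct} and weakens to $Z_L\wedge Z_0Z_1\wedge Z_1Z_2$ at the very end, whereas you weaken to the target assertion before invoking \textbf{correct}; both are valid.
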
\nothmskip
\myproof{
For the initialization, $\{I\}q_0q_1q_2 \coloneqq \ket{000}\{Z_0 \wedge Z_1 \wedge Z_2\}$. And  $\{Z_0 \wedge Z_1 \wedge Z_2\}\svar_0\coloneqq Z_0Z_1\{Z_0\wedge Z_1 \wedge Z_2\}$ and $\{Z_0 \wedge Z_1 \wedge Z_2\}\svar_1\coloneqq Z_1Z_2\{Z_0\wedge Z_1 \wedge Z_2\}$. Since $Z_0\wedge Z_1 \wedge Z_2 \Rightarrow Z_0 Z_1 $ and $Z_0\wedge Z_1 \wedge Z_2 \Rightarrow Z_1 Z_2$, we have $Z_0\wedge Z_1 \wedge Z_2 \Rightarrow Z_0\wedge Z_1 \wedge Z_2 \wedge (Z_0 Z_1) \wedge (Z_1 Z_2)$. Then by Proposition~\ref{prop:decodecorrect},
$\{Z_0\wedge Z_1 \wedge Z_2 \wedge (Z_0 Z_1) \wedge (Z_1 Z_2)\}\textbf{correct}(\svar_0,\svar_1)\{Z_0\wedge Z_1 \wedge Z_2 \wedge (Z_0 Z_1) \wedge (Z_1 Z_2)\}$. By the consequence rule, we get $\{I\}\prog\{Z_L\wedge Z_0Z_1 \wedge Z_{1}Z_2\}$ since $Z_0\wedge Z_1 \wedge Z_2 \Rightarrow Z_L$.
}

We then verify the logical X operation.
It is sufficient to verify two cases, the output state $\ket{1_L}$ under the input state $\ket{0}_L$, and vice versa. Arbitrary logical states can be processed as the linear combination of these two cases by taking advantage of the linearity of the logical X operation.
Since $\ket{0}_L$ corresponds to the predicate $Z_L\wedge Z_0Z_1 \wedge Z_1Z_2$, and $\ket{1}_L$ corresponds to the predicate $-Z_L\wedge Z_0Z_1 \wedge Z_1Z_2$, we have the following proposition: \nothmskip
\begin{proposition}[Logical X gate]
For the program $\prog$ in Figure~\ref{fig:repcode}(b), we have $\{Z_L\wedge Z_0Z_1\wedge Z_1Z_2\}\prog\{-Z_L\wedge Z_0Z_1\wedge Z_1Z_2\}$ and $\{-Z_L\wedge Z_0Z_1\wedge Z_1Z_2\}\prog\{Z_L\wedge Z_0Z_1\wedge Z_1Z_2\}$.
\end{proposition}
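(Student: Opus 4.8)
The plan is to verify $\prog$ statement by statement, threading the predicate through the five instructions of Figure~\ref{fig:repcode}(b) by repeated use of the Sequencing rule of Figure~\ref{fig:pca}. Write $Z_L = Z_0Z_1Z_2$ and let $A = Z_L \wedge Z_0Z_1 \wedge Z_1Z_2$ be the precondition of the first claim; note that $A$ is a legitimate {\assnname} assertion because all three conjuncts are products of Pauli-$Z$ operators and hence mutually commute.

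First I would dispatch the four stabilizer-variable assignments $\svar_0 \coloneqq I$, $\svar_1 \coloneqq I$, $\svar_0 \coloneqq Z_0Z_1$, $\svar_1 \coloneqq Z_1Z_2$ using the Assignment rule $\{A\}\svar\coloneqq s\{A\}$. Its side condition requires $s$ to commute with the current predicate: $I$ commutes with everything, and $Z_0Z_1$, $Z_1Z_2$ commute with each of $\pm Z_L$, $Z_0Z_1$, $Z_1Z_2$ since these are all $Z$-type Pauli strings. Hence each of the four assignments leaves the predicate unchanged, and no information is lost to the $\{\cdot\}\svar\coloneqq s\{I\}$ fallback.

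The only substantive step is the transversal gate $q_2q_1q_0 \coloneqq X_2X_1X_0 q_2q_1q_0$. I would apply the Unitary rule to each conjunct separately and recombine with Proposition~\ref{lem:bool-assn}. With $U = X_0X_1X_2$, conjugation by $X$ sends $Z \mapsto -Z$ on each qubit it touches, so $U Z_L U^\dagger = (-Z_0)(-Z_1)(-Z_2) = -Z_L$, while $U(Z_0Z_1)U^\dagger = (-Z_0)(-Z_1) = Z_0Z_1$ and $U(Z_1Z_2)U^\dagger = Z_1Z_2$; only the $Z_L$ conjunct picks up a sign. Composing the five steps via Sequencing yields $\{Z_L \wedge Z_0Z_1 \wedge Z_1Z_2\}\prog\{-Z_L \wedge Z_0Z_1 \wedge Z_1Z_2\}$, the first claim. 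The second claim runs identically, except that the $Z_L$-conjunct of the precondition is $-Z_L$ and $U(-Z_L)U^\dagger = -(-Z_L) = Z_L$, so the sign flips cancel to give $\{-Z_L \wedge Z_0Z_1 \wedge Z_1Z_2\}\prog\{Z_L \wedge Z_0Z_1 \wedge Z_1Z_2\}$.

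I do not expect a genuine obstacle: the proof is pure bookkeeping, and the only care points are (i) tracking the three minus signs produced when conjugating a weight-three $Z$-string by a transversal $X$, and (ii) checking the commutativity side condition of the Assignment rule for the $Z_0Z_1$ and $Z_1Z_2$ assignments. The linearity remark preceding the proposition — that an arbitrary logical input decomposes into the $\ket{0_L}$ and $\ket{1_L}$ cases — is what makes these two Hoare triples suffice to characterize the logical $X$ gate, but it is not invoked inside the proof itself.
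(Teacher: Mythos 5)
Your proposal is correct and follows essentially the same route as the paper's proof, whose entire content is the conjugation computation $X_0X_1X_2\,Z_L\,X_0X_1X_2 = -Z_L$, $X_0X_1X_2\,Z_0Z_1\,X_0X_1X_2 = Z_0Z_1$, $X_0X_1X_2\,Z_1Z_2\,X_0X_1X_2 = Z_1Z_2$ that you carry out in your third paragraph. Your additional bookkeeping for the four stabilizer-variable assignments (checking the commutativity side condition of the Assignment rule) is left implicit in the paper but is a correct and welcome elaboration.
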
 \nothmskip
\myproof{
Note that $X_0X_1X_2 Z_L X_0 X_1X_2 = -Z_1Z_2Z_3= - Z_L$, $X_0X_1\\X_2Z_0Z_1X_0X_1X_2 = Z_0Z_1$, $X_0X_1X_2Z_1Z_2X_0X_1X_2 = Z_1Z_2$.
}

Likewise, for thel logical Z gate, we only need to verify that, the precondition $\{X_L\wedge Z_0Z_1\wedge Z_1Z_2\}$ relates to the post-condition $\{-X_L\wedge Z_0Z_1 \wedge Z_1Z_2\}$, and vice versa. %
\nothmskip
\begin{proposition}[Logical Z gate]
For the program $\prog$ in Figure~\ref{fig:repcode}(c), $\{X_L\wedge Z_0Z_1\wedge Z_1Z_2\}\prog\{-X_L\wedge Z_0Z_1\wedge Z_1Z_2\}$ and $\{-X_L\wedge Z_0Z_1\wedge Z_1Z_2\}\prog\{X_L\wedge Z_0Z_1\wedge Z_1Z_2\}$.
\end{proposition}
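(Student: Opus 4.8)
The plan is to follow exactly the template used for the logical $X$ gate in the preceding proposition, since the program in Figure~\ref{fig:repcode}(c) has the same shape: two trivializing stabilizer assignments $\svar_0 \coloneqq I$, $\svar_1 \coloneqq I$, a transversal physical unitary $q_2q_1q_0 \coloneqq Z_2Z_1Z_0\,q_2q_1q_0$, and then two reinitializing assignments $\svar_0 \coloneqq Z_0Z_1$, $\svar_1 \coloneqq Z_1Z_2$. I will establish the first Hoare triple in detail; the second follows from the identical derivation with one overall sign flipped. As in the $X$-gate case, verifying the two triples on $X_L$-eigenstate predicates suffices by linearity of the logical $Z$ gate.

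First I would dispatch the two leading assignments $\svar_0 \coloneqq I$ and $\svar_1 \coloneqq I$ with the Assignment rule of Figure~\ref{fig:pca}: since $I$ commutes with every stabilizer expression, in particular with $X_L \wedge Z_0Z_1 \wedge Z_1Z_2$, the precondition is carried through unchanged, distributing over the conjunction by Proposition~\ref{lem:bool-assn}. Next, for the transversal unitary $U = Z_0Z_1Z_2$ I would apply the Unitary rule conjunct-by-conjunct (again via Proposition~\ref{lem:bool-assn}). The only computation is the conjugation of each predicate by $U$: because each single-qubit factor satisfies $Z_i X_i Z_i = -X_i$ and $Z_i Z_j Z_i = Z_j$, we get $U X_L U^\dagger = (-1)^3 X_L = -X_L$, $U(Z_0Z_1)U^\dagger = Z_0Z_1$, and $U(Z_1Z_2)U^\dagger = Z_1Z_2$. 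Hence after this statement the assertion is $-X_L \wedge Z_0Z_1 \wedge Z_1Z_2$.

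Finally I would discharge $\svar_0 \coloneqq Z_0Z_1$ and $\svar_1 \coloneqq Z_1Z_2$ with the Assignment rule, where the side condition genuinely has to be checked: the assigned stabilizer must commute with the current assertion. I would verify this explicitly rather than assume it, since a naive reading might fear that $Z_0Z_1$ fails to commute with $X_L$. In fact $Z_0Z_1$ anticommutes with $X_L = X_0X_1X_2$ at qubits $0$ and $1$ and commutes at qubit $2$, so the two sign flips cancel and $Z_0Z_1$ commutes with $X_L$; it trivially commutes with the $Z$-type stabilizers; the same holds for $Z_1Z_2$. Therefore both assignments preserve $-X_L \wedge Z_0Z_1 \wedge Z_1Z_2$. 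Chaining the five steps with the Sequencing rule gives $\{X_L \wedge Z_0Z_1 \wedge Z_1Z_2\}\prog\{-X_L \wedge Z_0Z_1 \wedge Z_1Z_2\}$, and re-running the derivation from $-X_L \wedge Z_0Z_1 \wedge Z_1Z_2$ yields, since $U(-X_L)U^\dagger = X_L$, the companion triple. The main obstacle here is purely bookkeeping: applying the Unitary rule to each conjunct with the correct sign, and confirming the commutation side-condition on the closing stabilizer assignments; no genuinely hard step arises because the gate is Clifford, so all predicate transforms stay single-term stabilizers.
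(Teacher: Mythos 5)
Your proposal is correct and follows the same essential argument as the paper, whose entire proof is the single conjugation computation $Z_0Z_1Z_2\,X_L\,Z_0Z_1Z_2 = -X_L$; you merely make explicit the surrounding Hoare-rule bookkeeping (the assignment side conditions and the preservation of the $Z$-type conjuncts) that the paper leaves implicit. The commutation check for $Z_0Z_1$ against $X_L$ is verified correctly, so no gap remains.
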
 \nothmskip
\myproof{
Note that $Z_0Z_1Z_2X_LZ_0Z_1Z_2  = -X_0X_1X_2 = -X_L$.
}

The verification of the logical CNOT gate involves four preconditions: $Z_{L0}I_{L1}$, $X_{L0}I_{L1}$, $I_{L0}X_{L1}$, and $I_{L0}Z_{L1}$, where $Z_0Z_1 \wedge Z_1Z_2 \wedge Z_3Z_4 \wedge Z_4Z_5$ are omitted for simplicity. These four Pauli strings are able to represent any input state by multiplication and addition. The post-conditions for these four preconditions are $Z_{L0}I_{L1}$, $X_{L0}X_{L1}$, $I_{L0}X_{L1}$, and $Z_{L0}Z_{L1}$. While the first three post-conditions are straightforward to understand, we elaborate on the fourth post-condition. 
The precondition $I_{L0}Z_{L1}$ specifies pure states of the form $(a\ket{0}+b\ket{1})_{L0}\ket{0}_{L1}$. After the CNOT gate, the state becomes $a\ket{00}+b\ket{11}$ which is the +1 eigenstate of $Z_{L0}Z_{L1}$, for arbitrary $a$ and $b$. \nothmskip
\begin{restatable}[Logical CNOT]{proposition}{repcnot}
\label{prop:rep-cnot}
For the program $\prog$ in Figure~\ref{fig:repcode}(d), assume $A_S = Z_0Z_1\wedge Z_1Z_2\wedge Z_3Z_4\wedge Z_4Z_5$, we have\\
$\{Z_{L0} I_{L1}\wedge A_S\}\prog\{Z_{L0}I_{L1}\wedge A_S\}$, $\{X_{L0} I_{L1} \wedge A_S\}\prog\{X_{L0}X_{L1}\wedge A_S\}$, $\{I_{L0} X_{L1}\wedge A_S\}\prog\{I_{L0}X_{L1}\wedge A_S\}$, and $\{I_{L0} Z_{L1}\wedge A_S\}\prog\{\\Z_{L0}Z_{L1}\wedge A_S\}$.
\end{restatable}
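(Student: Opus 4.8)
The plan is to verify each of the four Hoare triples by decomposing the logical CNOT program into its three constituent physical \text{CX} gates acting on the pairs $(q_0,q_3)$, $(q_1,q_4)$, $(q_2,q_5)$, tracking how the relevant Pauli-string predicate transforms under conjugation by each \text{CX}, and then applying the \textbf{Unitary} rule of Figure~\ref{fig:pca} composed via \textbf{Sequencing}. First I would handle the stabilizer-variable bookkeeping: the program begins by setting all $\svar_i \coloneqq I$ and ends by restoring $\svar_0 \coloneqq Z_0Z_1$, etc. By the \textbf{Assignment} rule, assigning $\svar_i \coloneqq I$ preserves any predicate (since $I$ commutes with everything), and restoring $\svar_i \coloneqq Z_{2i}Z_{2i+1}$ preserves the predicate precisely because the postcondition contains $A_S$, whose conjuncts commute with the restored stabilizer values. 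So the whole argument reduces to computing the action of $\text{CX}_{0,3}\,\text{CX}_{1,4}\,\text{CX}_{2,5}$ on each of the relevant Pauli strings and confirming the claimed postconditions, invoking the consequence rule and Proposition~\ref{lem:bool-assn} to distribute over the $\wedge$ conjunctions.

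The core calculations are the standard Clifford conjugation rules for CNOT: $\text{CX}_{c,t}$ sends $X_c \mapsto X_cX_t$, $X_t \mapsto X_t$, $Z_c \mapsto Z_c$, $Z_t \mapsto Z_cZ_t$, and fixes $I$. For the first triple, $Z_{L0}I_{L1} = Z_0Z_1Z_2$; each $\text{CX}_{i,i+3}$ has its control among $\{q_0,q_1,q_2\}$, so $Z_0Z_1Z_2$ is fixed, giving postcondition $Z_{L0}I_{L1}$. For the second, $X_{L0}I_{L1} = X_0X_1X_2$; each control bit $X_i$ maps to $X_iX_{i+3}$, so the product becomes $X_0X_1X_2X_3X_4X_5 = X_{L0}X_{L1}$. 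For the third, $I_{L0}X_{L1} = X_3X_4X_5$ lives entirely on the targets, which are fixed by the CNOTs, giving $I_{L0}X_{L1}$. For the fourth, $I_{L0}Z_{L1} = Z_3Z_4Z_5$; each target bit $Z_{i+3}$ maps to $Z_iZ_{i+3}$, so the product becomes $Z_0Z_1Z_2Z_3Z_4Z_5 = Z_{L0}Z_{L1}$. In each case I also check that the four conjuncts of $A_S$ are individually preserved (each $Z_{2j}Z_{2j+1}$ has both qubits as controls or both as targets of a single $\text{CX}$, hence is fixed), so $A_S$ is carried through by Proposition~\ref{lem:bool-assn}.

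Assembling: for each of the four starting predicates $A$, I get $\{A \wedge A_S\}\,\svar\text{-resets}\,\{A\wedge A_S\}$, then $\{A\wedge A_S\}\,q_0q_3\coloneqq\text{CX}q_0q_3;\,q_1q_4\coloneqq\text{CX}q_1q_4;\,q_2q_5\coloneqq\text{CX}q_2q_5\,\{A'\wedge A_S\}$ by three applications of \textbf{Unitary} and \textbf{Sequencing} (using that conjugation distributes over $\wedge$), then $\{A'\wedge A_S\}\,\svar\text{-restores}\,\{A'\wedge A_S\}$, and finally \textbf{Sequencing} glues the pieces. The details are routine once the conjugation table is in hand; the only mild subtlety is the claim in the discussion preceding the proposition that the precondition $I_{L0}Z_{L1}$ "specifies" all states $(a\ket{0}+b\ket{1})_{L0}\ket{0}_{L1}$ and that these map to $+1$ eigenstates of $Z_{L0}Z_{L1}$ — this is exactly what the Clifford conjugation $\text{CX}: Z_{L1}\mapsto Z_{L0}Z_{L1}$ encodes, so no separate state-by-state argument is needed. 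The main obstacle, such as it is, is simply the care required in the stabilizer-variable side-conditions: one must confirm that at the moment each $\svar_i \coloneqq Z_{2i}Z_{2i+1}$ fires, the current predicate is commutable with $Z_{2i}Z_{2i+1}$ (so the \textbf{Assignment} rule yields the predicate unchanged rather than collapsing to $I$), which is guaranteed here because $A_S$ is a conjunct throughout and its terms are mutually commuting $Z$-type strings commuting with every transformed $A'$.
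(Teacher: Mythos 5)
Your overall route is the same as the paper's: conjugate each logical Pauli operator through the three physical gates $\text{CX}_{0,3}$, $\text{CX}_{1,4}$, $\text{CX}_{2,5}$ using the standard Heisenberg rules, and assemble the result with the Unitary, Sequencing, Assignment, and Consequence rules. Your four main computations ($Z_0Z_1Z_2$ fixed, $X_0X_1X_2\mapsto X_0\cdots X_5$, $X_3X_4X_5$ fixed, $Z_3Z_4Z_5\mapsto Z_0\cdots Z_5$) agree with the paper's.

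There is, however, a concrete error in your treatment of $A_S$. You claim each conjunct $Z_iZ_j$ of $A_S$ is individually fixed because "both qubits are controls or both are targets of a single $\text{CX}$." This is false for the target-side stabilizers: a Pauli $Z$ on a CNOT \emph{target} is not invariant; $\text{CX}_{c,t}$ sends $Z_t\mapsto Z_cZ_t$ (only $Z$ on the control and $X$ on the target are invariant). Consequently $Z_3Z_4\mapsto Z_0Z_1Z_3Z_4$ and $Z_4Z_5\mapsto Z_1Z_2Z_4Z_5$, which is exactly what the paper computes in its final sentence ($\{Z_3Z_4\}\prog\{Z_0Z_1Z_3Z_4\}$, $\{Z_4Z_5\}\prog\{Z_1Z_2Z_4Z_5\}$). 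The conclusion is still recoverable, but it requires an extra step you omitted: since $Z_0Z_1$ and $Z_1Z_2$ \emph{are} preserved, the postcondition contains $Z_0Z_1\wedge Z_0Z_1Z_3Z_4$, and part 2 of Lemma~\ref{lem:implicitrule} (the implication rule) then yields $Z_3Z_4$; likewise for $Z_4Z_5$. Without this patch your argument that "$A_S$ is carried through" does not go through as written, and the same oversight would also undermine your side-condition check for the final $\svar_2\coloneqq Z_3Z_4$ and $\svar_3\coloneqq Z_4Z_5$ assignments, which you justified by appeal to $A_S$ being "a conjunct throughout."
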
 \nothmskip
\myproof{
Note that for control qubit $a$ and target qubit $b$, \\$\text{CNOT}_{ab} = \frac{1}{2}(I + X_b + Z_a - Z_aX_b)$. %
\postpone{\ref{app:rep}}.
}

\subsection{Verification on Noise Injection}
\myFrameworkName~can also reason about the correctness with hardware noise. We assume a minimum weight perfect matching error decoding~\cite{Fowler2015MinimumWP} and correction as follows, \nothmskip
\begin{program}[Quantum repetition code error correction]\label{prog:rep-decoder} For the quantum repetition code in Figure~\ref{fig:repcode}, define the error correction protocol as follows, for $\svar_0 = Z_0Z_1, \svar_1 = Z_1Z_2$, \\ $\textbf{correct}(\svar_0, \svar_1) \Coloneqq \\$
$
\qif{Z_0Z_1, q_0q_1}{\qif{Z_1Z_2, q_1q_2}{\textbf{skip}\\}{q_2\coloneqq Xq_2;\ \svar_1 \coloneqq -\svar_1}\\}{\qif{Z_1Z_2, q_1q_2}{q_0 \coloneqq Xq_0;\ \svar_0 \coloneqq -\svar_0 \\}{q_1\coloneqq Xq_1;\ \svar_0 \coloneqq -\svar_0; \ \svar_1 \coloneqq -\svar_1}}
$
\end{program} \nothmskip

In Figure~\ref{fig:noisyrepcode}(a), we present a noisy logical X gate where an X error occurs on $q_1$. We prove that the expected behavior of the noisy logical X gate is the same as that of the error-free logical X gate with the help of error correction.

\begin{figure}


\captionsetup[subfigure]{labelformat=empty}
\begin{subfigure}[b]{0.19\textwidth}
\begingroup \fontsize{\myfontsize}{\mylinesize}
$
\svar_0 \coloneqq I \\
\svar_1 \coloneqq I \\
q_2q_1q_0 \coloneqq X_2X_1X_0q_2q_1q_0 \\
\text{// an X error on }q_1; \\
q_1 \coloneqq X_1q_1 \\
\svar_0 \coloneqq Z_0Z_1\\
\svar_1 \coloneqq Z_1Z_2 \\
\textbf{correct}(\svar_0, \svar_1);
$
\endgroup
\subcaption{(a) An X error on $q_1$ }
\end{subfigure} \tikz{\draw[-,black, densely dashed, thick](-0.2,-1.05) -- (-0.2,2.75);}
\hspace{2pt}
\begin{subfigure}[b]{0.23\textwidth}
\begingroup \fontsize{\myfontsize}{\mylinesize}
$
\svar_0 \coloneqq I \\
\svar_1 \coloneqq I \\
q_2q_1q_0 \coloneqq X_2X_1X_0q_2q_1q_0 \\
\text{// an Z error on }q_1; \\
q_1 \coloneqq Z_1q_1; \\
\svar_0 \coloneqq Z_0Z_1\\
\svar_1 \coloneqq Z_1Z_2 \\
\textbf{correct}(\svar_0, \svar_1);
$
\endgroup
\subcaption{(b) A Z error on $q_1$}
\end{subfigure}
    \caption{Two noisy logical X gates.  }
    \label{fig:noisyrepcode}%
\end{figure}
\nothmskip
\begin{proposition}\label{prop:rep-noise-x} For the program $\prog$ in Figure~\ref{fig:noisyrepcode}(a), which implements a noisy logical X gate, \\
$\{Z_L \wedge Z_0Z_1 \wedge Z_1Z_2\}\prog\{-Z_L \wedge Z_0Z_1 \wedge Z_1Z_2\}$, \\
and $\{-Z_L \wedge Z_0Z_1 \wedge Z_1Z_2\}\prog\{Z_L \wedge Z_0Z_1 \wedge Z_1Z_2\}$.
\end{proposition}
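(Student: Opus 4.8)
The plan is to run $\prog$ from Figure~\ref{fig:noisyrepcode}(a) forward through the Hoare rules of Figure~\ref{fig:pca}, propagating the precondition one statement at a time, until the only non-routine part is reached, namely the call to $\textbf{correct}(\svar_0,\svar_1)$. For the two Hoare triples in the statement it suffices to treat the first, since every step below is unchanged when $Z_L$ is replaced by $-Z_L$ throughout; I will say so at the end.

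First I would dispatch the straight-line prefix $\svar_0\coloneqq I;\ \svar_1\coloneqq I;\ q_2q_1q_0\coloneqq X_2X_1X_0q_2q_1q_0;\ q_1\coloneqq X_1q_1;\ \svar_0\coloneqq Z_0Z_1;\ \svar_1\coloneqq Z_1Z_2$. The two resets $\svar_j\coloneqq I$ leave the predicate $Z_L\wedge Z_0Z_1\wedge Z_1Z_2$ untouched by the Assignment rule ($I$ commutes with everything). The logical-$X$ unitary acts by conjugation: $X_LZ_LX_L=-Z_L$ (three sign flips) while $X_L(Z_0Z_1)X_L=Z_0Z_1$ and $X_L(Z_1Z_2)X_L=Z_1Z_2$ (two sign flips each), so afterwards the predicate is $-Z_L\wedge Z_0Z_1\wedge Z_1Z_2$. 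The injected error $q_1\coloneqq X_1q_1$ conjugates once more: $X_1(-Z_L)X_1=Z_L$, $X_1(Z_0Z_1)X_1=-Z_0Z_1$, $X_1(Z_1Z_2)X_1=-Z_1Z_2$, giving $Z_L\wedge(-Z_0Z_1)\wedge(-Z_1Z_2)$. The two assignments $\svar_0\coloneqq Z_0Z_1$ and $\svar_1\coloneqq Z_1Z_2$ use the first form of the Assignment rule, since $Z_0Z_1$ and $Z_1Z_2$ commute with $Z_L$ and with $-Z_0Z_1,-Z_1Z_2$; the predicate is preserved. Hence, by Sequencing, the state entering $\textbf{correct}$ satisfies $Z_L\wedge(-Z_0Z_1)\wedge(-Z_1Z_2)$, with $\svar_0=Z_0Z_1$ and $\svar_1=Z_1Z_2$ in the stabilizer state.

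The heart of the argument is to establish $\{Z_L\wedge(-Z_0Z_1)\wedge(-Z_1Z_2)\}\,\textbf{correct}(\svar_0,\svar_1)\,\{-Z_L\wedge Z_0Z_1\wedge Z_1Z_2\}$. Proposition~\ref{prop:decodecorrect} does not apply off the shelf here, because the incoming state lies not in the code space but in the $(-1,-1)$ syndrome sector; instead I would unfold $\textbf{correct}$ as defined in Program~\ref{prog:rep-decoder} and peel off its two nested $\textbf{if}$'s with the $A_0=A_1=A$ specialization of the Condition rule. For the outer $\textbf{if}\ M[Z_0Z_1,q_0q_1]$, the $\textbf{then}$-branch precondition $A\wedge Z_0Z_1$ contains $(-Z_0Z_1)\wedge(Z_0Z_1)$, which equals the empty predicate $0$ (no state lies in both the $+1$ and the $-1$ eigenspace of $Z_0Z_1$), so that branch is discharged by the rule $\{0\}\prog\{B\}$ of Proposition~\ref{lem:bool-assn}; the inner $\textbf{if}\ M[Z_1Z_2,q_1q_2]$ collapses in the same way. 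In the surviving branch, the ``If $-1$'' semantics flips first $\svar_0$ then $\svar_1$ to $-Z_0Z_1$ and $-Z_1Z_2$, and then the body $q_1\coloneqq Xq_1;\ \svar_0\coloneqq-\svar_0;\ \svar_1\coloneqq-\svar_1$ conjugates the predicate by $X_1$ to $(-Z_L)\wedge(Z_0Z_1)\wedge(Z_1Z_2)$ while restoring $\svar_0=Z_0Z_1$ and $\svar_1=Z_1Z_2$; the two sign-flip assignments are licensed because $Z_0Z_1,Z_1Z_2$ commute with the post-body predicate. Reassembling the two Condition applications gives the desired triple for $\textbf{correct}$, and chaining it with the prefix through Sequencing yields $\{Z_L\wedge Z_0Z_1\wedge Z_1Z_2\}\prog\{-Z_L\wedge Z_0Z_1\wedge Z_1Z_2\}$.

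I expect the bookkeeping inside $\textbf{correct}$ to be the only real obstacle: one has to check that in each untaken branch the incoming predicate genuinely reduces to $0$ (this is exactly where the fact that a single $X_1$ error yields the $(-1,-1)$ syndrome gets used), that the automatic sign flips of $\svar_0,\svar_1$ forced by the ``If $-1$'' rule compose correctly with the explicit $\svar_j\coloneqq-\svar_j$ corrections in the body, and that the commutativity side condition on each predicate $s_e$ (that it commute with all current stabilizer variables) is maintained at every intermediate point. Everything else reduces to routine Pauli conjugation. Finally, the second Hoare triple $\{-Z_L\wedge Z_0Z_1\wedge Z_1Z_2\}\prog\{Z_L\wedge Z_0Z_1\wedge Z_1Z_2\}$ follows verbatim with $Z_L$ and $-Z_L$ swapped throughout the computation above.
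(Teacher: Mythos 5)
Your proof is correct and follows essentially the same route as the paper's: forward propagation of the predicate through the Hoare rules, then unfolding $\textbf{correct}$ from Program~\ref{prog:rep-decoder} and discharging the untaken branches via $\{0\}\prog\{B\}$ so that only the $q_1$-correction branch survives. The only cosmetic difference is that the paper first rewrites the precondition as $Z_0\wedge Z_1\wedge Z_2$ by the consequence rule and tracks single-qubit stabilizers, whereas you keep the generators $Z_L$, $Z_0Z_1$, $Z_1Z_2$ throughout; the two descriptions are equivalent.
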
 \nothmskip
\myproof{
We only prove $\{Z_L\wedge Z_0Z_1 \wedge Z_1Z_2\}{\prog}\{-Z_L\wedge Z_0Z_1 \wedge Z_1Z_2\}$ for simplicity.\\
$Z_L \wedge Z_0Z_1 \wedge Z_1Z_2 \Rightarrow Z_0 \wedge Z_1 \wedge Z_2$; \\
$\{Z_0\wedge Z_1 \wedge Z_2\}\svar_0 \coloneqq I \{Z_0\wedge Z_1 \wedge Z_2\}$; \\
$\{Z_0\wedge Z_1 \wedge Z_2\}\svar_1 \coloneqq I \{Z_0\wedge Z_1 \wedge Z_2\}$;
\\
$\{Z_0\wedge Z_1\wedge Z_2\}q_2q_1q_0 \coloneqq X_2X_1X_0q_2q_1q_0\{-Z_0\wedge -Z_1 \wedge -Z_2\}$; \\
$\{-Z_0\wedge -Z_1 \wedge -Z_2\}q_1 \coloneqq X_1q_1\{-Z_0\wedge Z_1 \wedge -Z_2\}$; 
\\
$\{-Z_0\wedge Z_1 \wedge -Z_2\}\svar_0 \coloneqq Z_0Z_1 \{-Z_0\wedge Z_1 \wedge -Z_2\}$;
\\
$\{-Z_0\wedge Z_1 \wedge -Z_2\}\svar_1 \coloneqq Z_1Z_2 \{-Z_0\wedge Z_1 \wedge -Z_2\}$;
\\
For the \textbf{correct} statement, \\$-Z_0\wedge Z_1 \wedge -Z_2\wedge Z_0Z_1 = 0$, $-Z_0\wedge Z_1 \wedge -Z_2 \wedge Z_1Z_2 = 0$, \\and $\{-Z_0\wedge Z_1\wedge -Z_2 \wedge -Z_0Z_1 \wedge -Z_1Z_2\}q_1\coloneqq Xq_1;\ \svar_0\coloneqq -\svar_0;\ \svar_1 \coloneqq -\svar_1\{-Z_0\wedge -Z_1 \wedge -Z_2 \wedge Z_0Z_1 \wedge Z_1Z_2\}$, \\
so $\{-Z_0\wedge Z_1 \wedge -Z_2\}\textbf{correct}(\svar_0, \svar_1)\{-Z_0\wedge -Z_1 \wedge -Z_2 \wedge Z_0Z_1 \wedge Z_1Z_2 \}$. Then by the consequence rule, we get $\{Z_L \wedge Z_0Z_1 \wedge Z_1Z_2\}\prog\{-Z_L \wedge Z_0Z_1 \wedge Z_1Z_2\}$.
}

However, the error correction protocol in Program~\ref{prog:rep-decoder} can not correct Z errors, as shown in the following proposition, \nothmskip
\begin{proposition} For the program $\prog$ in Figure~\ref{fig:noisyrepcode}(b), where a Z error happens on $q_1$, we have \\
$\{X_L \wedge Z_0Z_1 \wedge Z_1Z_2\}\prog\{-X_L \wedge Z_0Z_1 \wedge Z_1Z_2\}$ and $\{-X_L \wedge Z_0Z_1 \wedge Z_1Z_2\}\prog\{X_L \wedge Z_0Z_1 \wedge Z_1Z_2\}$, \\
which is not the desired behaviour of the logical X gate.
\end{proposition}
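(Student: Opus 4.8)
The plan is to follow the template of the proof of Proposition~\ref{prop:rep-noise-x}, but now propagating an $X$-type predicate instead of a $Z$-type one. I will verify only the first triple, $\{X_L \wedge Z_0Z_1 \wedge Z_1Z_2\}\prog\{-X_L \wedge Z_0Z_1 \wedge Z_1Z_2\}$; the second is obtained by flipping every sign. First I would record the commutation facts that keep the intermediate predicates well-formed: $X_L = X_0X_1X_2$ anticommutes with each of $Z_0Z_1$ and $Z_1Z_2$ in exactly two single-qubit slots, hence commutes with both, so $X_L \wedge Z_0Z_1 \wedge Z_1Z_2$ is a nonzero predicate, and the same holds after any sign change of $X_L$.

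Next I would apply the rules of Figure~\ref{fig:pca} line by line. The assignments $\svar_0 \coloneqq I$ and $\svar_1 \coloneqq I$ leave the predicate untouched (identity commutes with everything). For $q_2q_1q_0 \coloneqq X_2X_1X_0\,q_2q_1q_0$, the Unitary rule conjugates the predicate by $X_0X_1X_2$, which fixes $X_L$ and fixes each of $Z_0Z_1$ and $Z_1Z_2$ (the two sign flips cancel), so the predicate is still $X_L \wedge Z_0Z_1 \wedge Z_1Z_2$. The decisive step is the injected error $q_1 \coloneqq Z_1 q_1$: conjugation by $Z_1$ gives $Z_1 X_L Z_1 = -X_L$, while $Z_1(Z_0Z_1)Z_1 = Z_0Z_1$ and $Z_1(Z_1Z_2)Z_1 = Z_1Z_2$, so the predicate becomes $-X_L \wedge Z_0Z_1 \wedge Z_1Z_2$. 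The reassignments $\svar_0 \coloneqq Z_0Z_1$ and $\svar_1 \coloneqq Z_1Z_2$ do not change it, since $Z_0Z_1$ and $Z_1Z_2$ each commute with $-X_L$.

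It remains to handle $\textbf{correct}(\svar_0,\svar_1)$, and this is both the only nontrivial step and the whole point of the proposition. The predicate entering this statement is precisely $A \wedge A_S$ with $A = -X_L$ and $A_S = Z_0Z_1 \wedge Z_1Z_2$, and $A$ commutes with the active stabilizers $Z_0Z_1, Z_1Z_2$; so Proposition~\ref{prop:decodecorrect} gives $\{-X_L \wedge Z_0Z_1 \wedge Z_1Z_2\}\textbf{correct}(\svar_0,\svar_1)\{-X_L \wedge Z_0Z_1 \wedge Z_1Z_2\}$ — intuitively, the $Z_1$ error produces a trivial $Z$-type syndrome, so the minimum-weight decoder of Program~\ref{prog:rep-decoder} does nothing and the spurious sign on $X_L$, which is a genuine logical $Z$ error, is neither detected nor removed. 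Composing with the Sequencing rule yields the claimed triple, and repeating the argument with $X_L$ in place of $-X_L$ gives the other one. The main obstacle is nothing more than making this last observation rigorous through Proposition~\ref{prop:decodecorrect}; once that is in hand, comparing the output $-X_L \wedge Z_0Z_1 \wedge Z_1Z_2$ with the error-free behavior $\{X_L \wedge \cdots\}\prog\{X_L \wedge \cdots\}$ (which holds because $X_0X_1X_2$ commutes with itself) exhibits the failure claimed in the statement.
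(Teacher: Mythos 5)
Your proof is correct and follows essentially the same route the paper intends: the paper's proof is simply ``similar to Proposition~6.6,'' i.e.\ a line-by-line propagation of the predicate through the Hoare rules, which is exactly what you do with the $X$-type predicate in place of the $Z$-type one. Your one refinement --- invoking Proposition~4.9 (decoding correctness) for the \textbf{correct} step because the syndrome $Z_0Z_1 \wedge Z_1Z_2$ is trivial, rather than unfolding the decoder's case analysis as the template proof must --- is a legitimate and cleaner shortcut.
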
 \nothmskip
\myproof{
Similar to the proof in Proposition~\ref{prop:rep-noise-x}.
}

\section{Case Study II: Surface Code}

In this section, we present the verification of the double-defect surface code~\cite{Fowler2012SurfaceCT}. There are two types of stabilizers in surface code, of which one is called `Z-type' stabilizer as it only consists of Pauli Z operators and the other one is called `X-type' stabilizer as it only contains Pauli X operators. These two types of stabilizers together enable high error tolerance of surface code as well as  the ability to correct both bit-flip error and phase-flip error simultaneously. 
Implementing surface code has been pursued by several major quantum computing vendors including IBM~\cite{Chamberland2020TopologicalAS} and Google~\cite{ChenSatzingerAtalayaKorotkovDunsworthSankQui2}.

\subsection{Surface Code}

Double-defect surface code includes a series of logical operations to support fault-tolerant quantum computation, such as 
initialization, measurement, defect enlarging, defect shrinking, logical single-qubit gates (X, Z, H), qubit moving, braiding and the logical CNOT gate. 
In this section, we only verify qubit initialization, qubit moving, logical Pauli gates and logical qubit braiding. The verification of remaining operations is similar to or can be built upon the verified operations.
For example, qubit measurement is the reversing process of qubit initialization, 
and the logical CNOT gate is the concatenation of three braiding operations.

Without loss of generality, we only consider  the distance-3 surface code.
The logical operations shown in Figure~\ref{fig:surfop} is implemented in {\langname} in Figure~\ref{fig:surfcode}. %
We mainly focus on the operations on the X-cut qubit, which is a kind of logical qubit created by disabling X-type stabilizers.
We present the initialization operation in Figure~\ref{fig:surfop}(a) which initializes a X-cut qubit to the logical state $\ket{+_L}$, i.e., the +1 eigenstate of the logical X operator $X_L$ in Figure~\ref{fig:surfop}.
Figure~\ref{fig:surfop}(b) %
shows the logical Z gate $Z_L$. 
We then implements the qubit moving operation and the loigcal H gate shown in Figure~\ref{fig:surfop}(c)(d).  The qubit moving operation 
will not change the logical state. For the logical H gate, we only presents a simplified version which is enough to demonstrate the core idea of the logical H gate in ~\cite{Fowler2012SurfaceCT}. Finally, we include the verification of qubit initialization (to $\ket{0_L}$), the logical X gate  and the braiding operation in  Appendix~\ref{app:surf}. %

\textbf{All proofs in this section will be postponed to Appendix~\ref{app:surf}.}

\begin{figure*}
\captionsetup[subfigure]{labelformat=empty}
\begin{subfigure}[b]{0.295\textwidth}
\includegraphics[width=\textwidth]{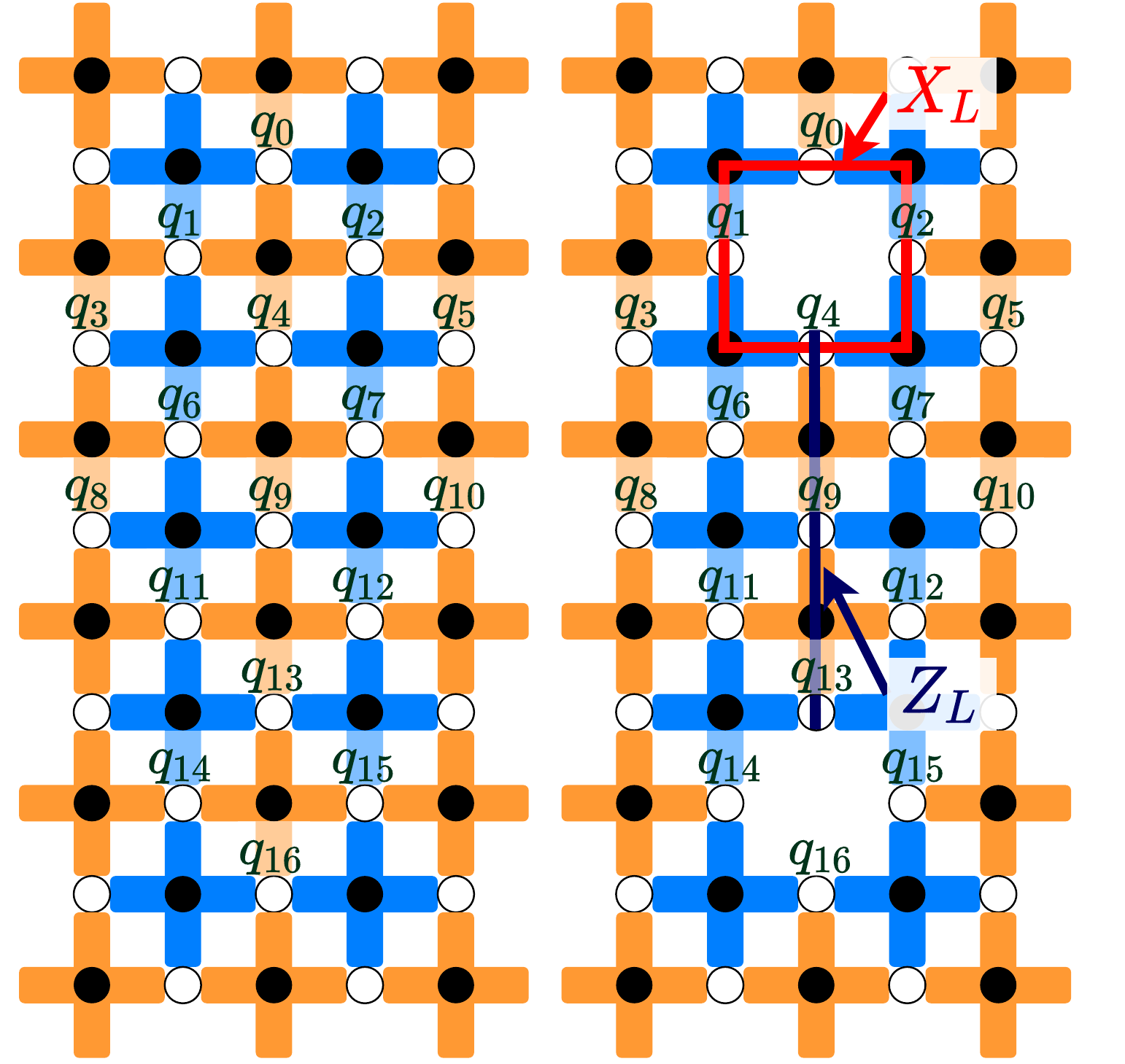}
\subcaption{(a) Initializes $\ket{+_L}$}
\end{subfigure}
\hspace{-7pt}
\tikz{\draw[-,black, densely dashed, thick](0,-1.05) -- (0,4.55);}
\begin{subfigure}[b]{0.152\textwidth}
\includegraphics[width=\textwidth]{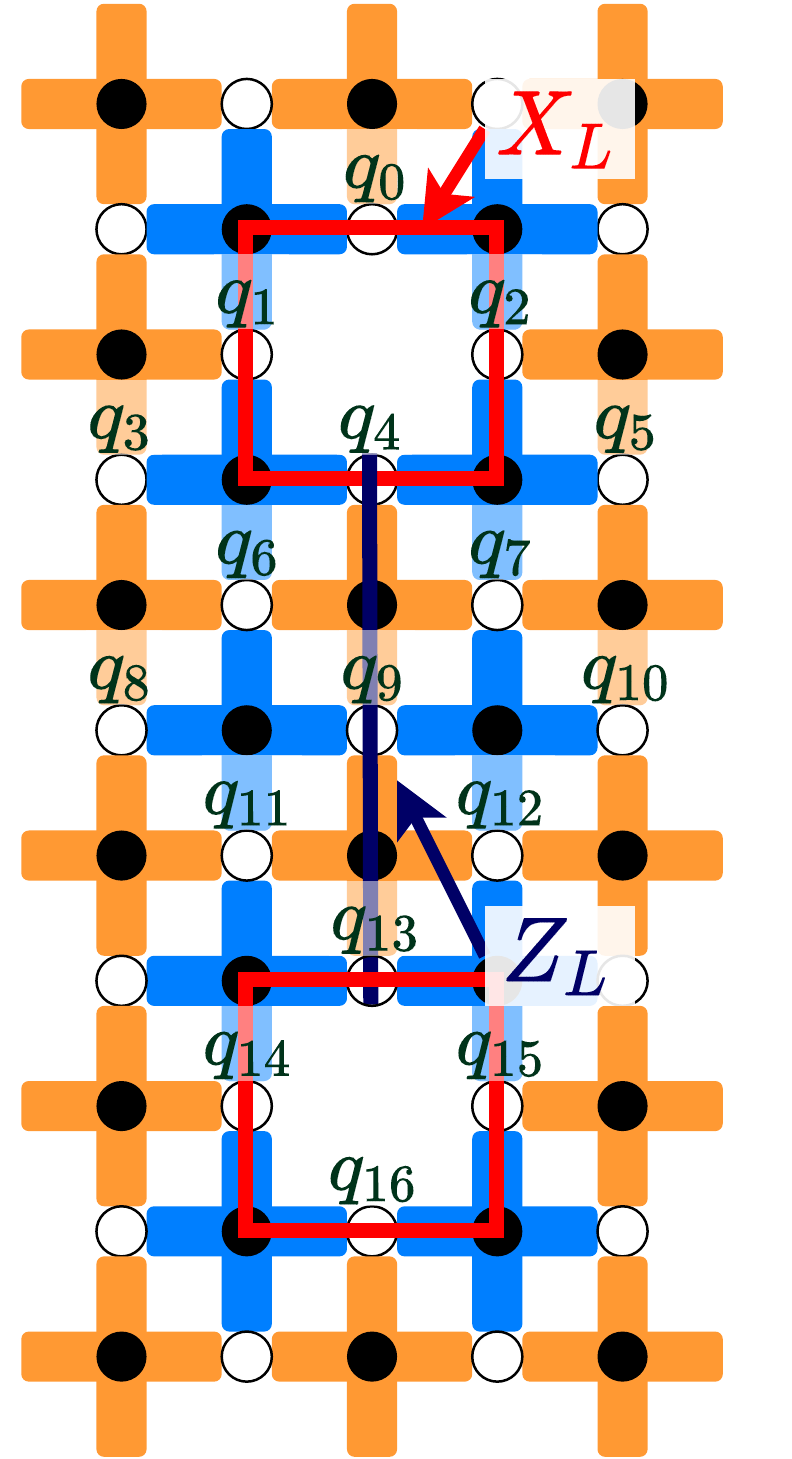}
\subcaption{\hspace{-2.5pt}(b) logical Z gate $Z_L$}
\end{subfigure}
\hspace{-5pt}
\tikz{\draw[-,black, densely dashed, thick](0,-1.05) -- (0,4.55);}
\begin{subfigure}[b]{0.365\textwidth}
\includegraphics[width=\textwidth]{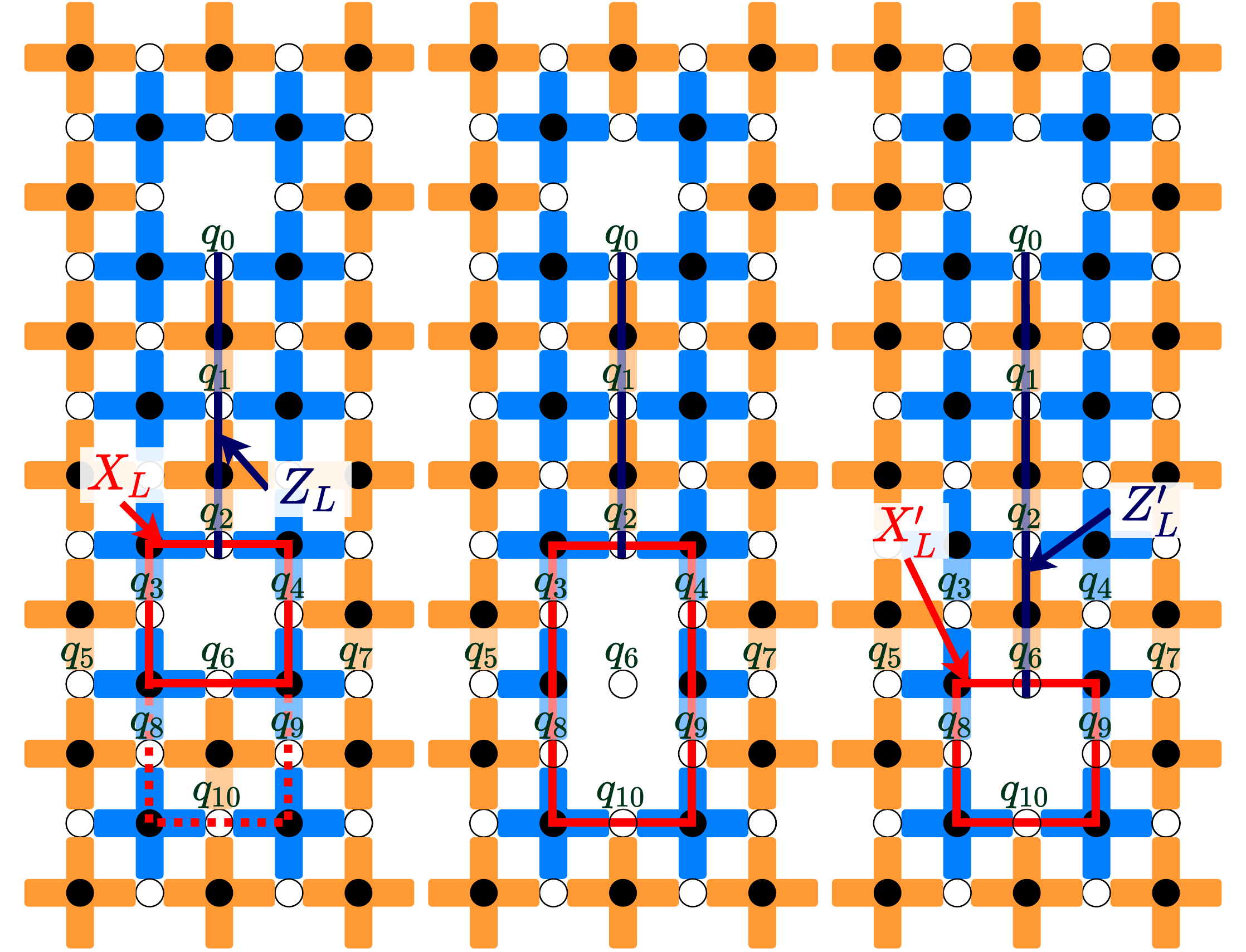}
\subcaption{(c) Vertical qubit moving}
\end{subfigure}
\tikz{\draw[-,black, densely dashed, thick](0,-1.05) -- (0,4.55);}
\hspace{-5pt}
\begin{subfigure}[b]{0.157\textwidth}
\includegraphics[width=\textwidth]{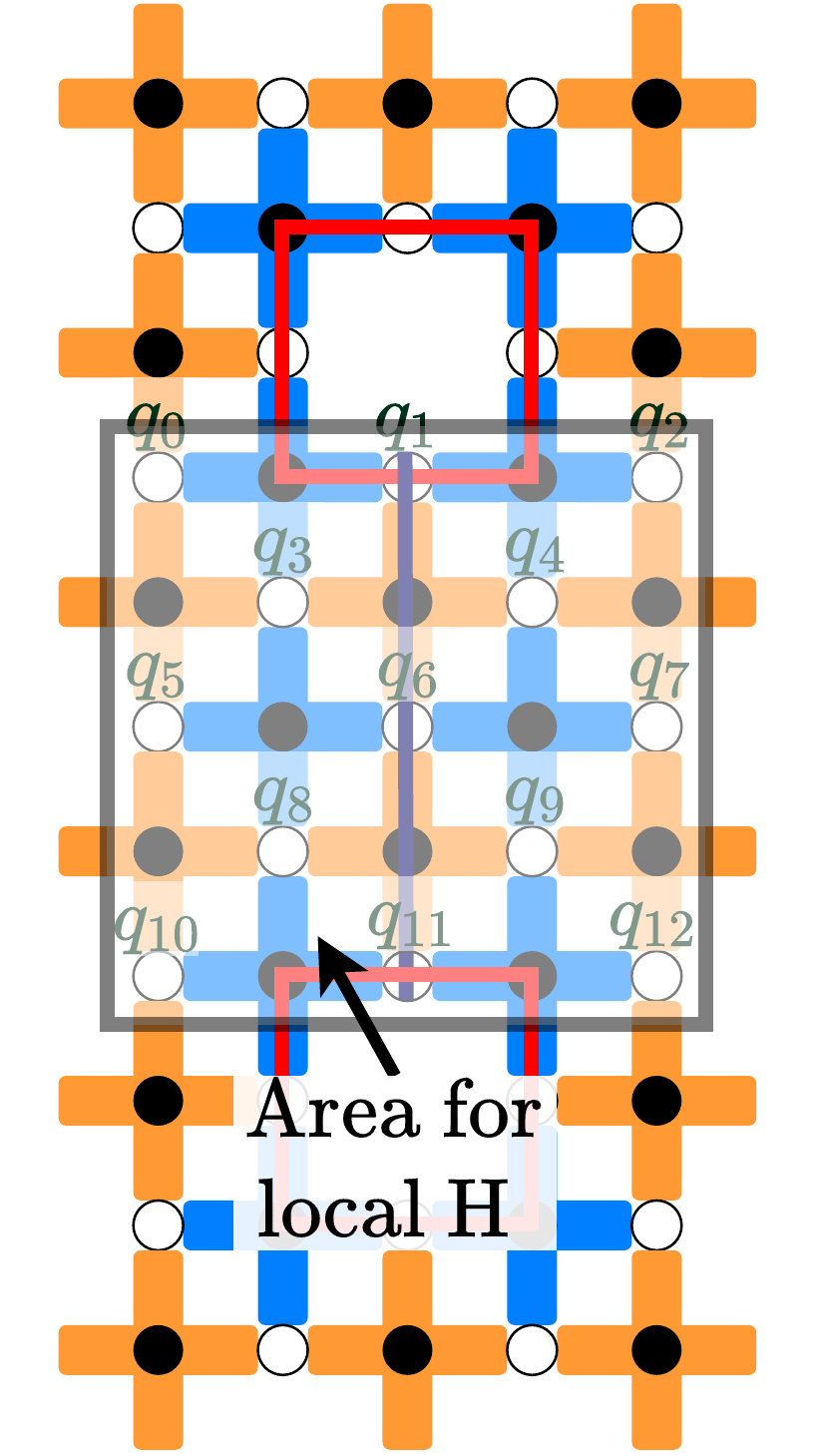}
\subcaption{(d) Logical H gate}
\end{subfigure}
    \caption{Primitive operations in the double-defect surface code . X stabilizers are yellow, and Z stabilizers are blue.}\label{fig:surfop}
\end{figure*}
\begin{figure*}
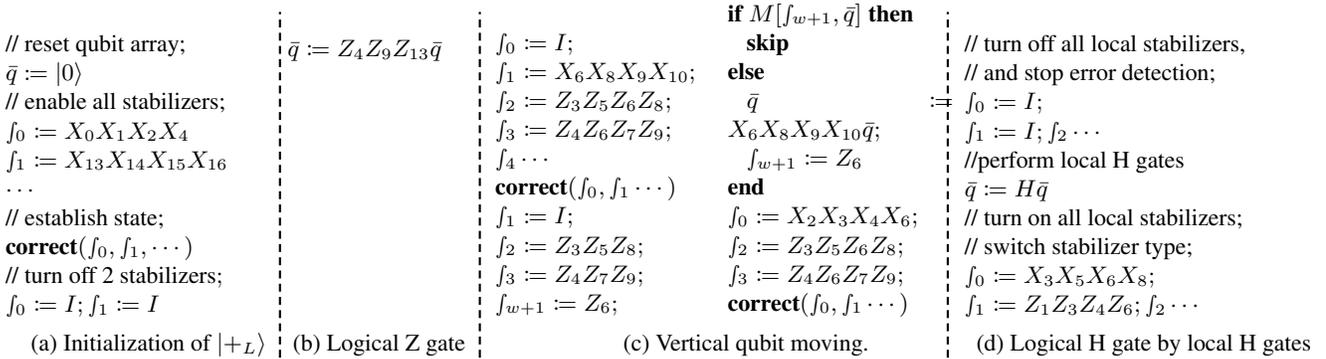

	\captionsetup[subfigure]{labelformat=empty}
	\begin{subfigure}[b]{0.20\textwidth}
		\begin{subfigure}[b]{\textwidth}
		\begingroup
		\fontsize{\myfontsize}{\mylinesize}
			$
			\text{// reset qubit array;} \\
			\bar{q} \coloneqq \ket{0}\\
			\text{// enable all stabilizers;} \\
			\svar_0 \coloneqq X_0X_1X_2X_4 \\
			\svar_1 \coloneqq X_{13}X_{14}X_{15}X_{16}\\
			\cdots \\
			\text{// establish state;} \\
			\textbf{correct}(\svar_0, \svar_1, \cdots) \\
			\text{// turn off 2 stabilizers}; \\
			\svar_0 \coloneqq I;\svar_1 \coloneqq I
			$
		\endgroup
			\subcaption{(a) Initialization of $\ket{+_L}$}
		\end{subfigure}
	\end{subfigure}
	\hspace{-10pt}
	\tikz{\draw[-,black, densely dashed, thick](0,-1.05) -- (0,3.55);}
	\begin{subfigure}[b]{0.15\textwidth}
		\begin{subfigure}[b]{\textwidth}
			\begingroup
			\fontsize{\myfontsize}{\mylinesize}
			$\bar{q}\coloneqq Z_{4}Z_{9}Z_{13} \bar{q}$
		\endgroup\\\\\\\\\\\\\\\\\\
			\subcaption{\hspace{-12pt}(b) Logical Z gate}
		\end{subfigure}
			\end{subfigure}
	\hspace{-14pt}
	\tikz{\draw[-,black, densely dashed, thick](0,-1.05) -- (0,3.55);}
	\hspace{0.5pt}
	\begin{subfigure}[b]{0.35\textwidth}
		\begin{subfigure}[b]{0.45\textwidth}
		\begingroup
			\fontsize{\myfontsize}{\mylinesize}
			$\svar_0 \coloneqq I; \\
			\svar_1 \coloneqq X_{6}X_{8}X_{9}X_{10}; \\
			\svar_2 \coloneqq Z_{3}Z_{5}Z_{6}Z_{8}; \\
			\svar_3 \coloneqq Z_{4}Z_{6}Z_{7}Z_{9}; \\
			\svar_4 \cdots \\
			\textbf{correct}(\svar_0,\svar_1\cdots) $\\
			$
			\svar_1 \coloneqq I; \\ 
			\svar_2 \coloneqq Z_{3}Z_{5}Z_{8}; \\
			\svar_3 \coloneqq Z_{4}Z_{7}Z_{9}; \\
			\svar_{\stabnum +1} \coloneqq Z_6; 
			$
		\endgroup
		\end{subfigure}
		\begin{subfigure}[b]{0.45\textwidth}
			\begingroup
			\fontsize{\myfontsize}{\mylinesize}
			$
			\qifn{\svar_{\stabnum +1},
				\bar{q}}{\textbf{skip}}{\bar{q}\coloneqq X_6X_{8}X_{9}X_{10}\bar{q};\\\myquad \svar_{\stabnum +1} \coloneqq Z_6} \\
			\svar_0 \coloneqq X_{2}X_{3}X_{4}X_{6}; \\
			\svar_2 \coloneqq Z_{3}Z_{5}Z_{6}Z_{8}; \\
			\svar_3 \coloneqq Z_{4}Z_{6}Z_{7}Z_{9}; \\
			\textbf{correct}(\svar_0,\svar_1\cdots)
			$
			\endgroup
		\end{subfigure}
		\subcaption{(c) Vertical qubit moving.}
	\end{subfigure}
	\hspace{-24pt}
	\tikz{\draw[-,black, densely dashed, thick](0,-1.05) -- (0,3.55);}
	\hspace{1pt}
	\begin{subfigure}[b]{0.25\textwidth}
		\begingroup
			\fontsize{\myfontsize}{\mylinesize}
		$
		\text{// turn off all local stabilizers,}\\ \text{// and stop error detection;}\\
		\svar_0 \coloneqq I; \\
		\svar_1 \coloneqq I; \svar_2\cdots \\
		\text{//perform local H gates} \\
		\bar{q} \coloneqq H\bar{q} \\
		\text{// turn on all local stabilizers;}\\\text{// switch stabilizer type;}\\
		\svar_0 \coloneqq X_3X_5X_6X_8; \\
		\svar_1 \coloneqq Z_1Z_3Z_4Z_6; \svar_2\cdots 
		$
		\endgroup
		\subcaption{(d) Logical H gate by local H gates}
	\end{subfigure}
	\caption{Implementation of primitive operations of the double-defect surface code in {\langname}. }
	\label{fig:surfcode}
\end{figure*}

\subsection{Verification of Logic Operations}

As specified by the surface code~\cite{Fowler2012SurfaceCT}, any valid logical state should always be in the +1 eigenspace of all active stabilizers on the surface code array, $S = \{s_0,s_1,\cdots\}$. %
For simplification, we omit the stabilizers that does not involve in proof. For example, %
$(\ket{0_L}\bra{0_L},\sigma)\models Z_L \wedge s_0 \wedge s_1 \cdots$ will be denoted by $(\ket{0_L}\bra{0_L},\sigma)\models Z_L$.
In the cases where we need to stress other active stabilizers in the array, we will have $(\ket{0_L}\bra{0_L},\sigma)\models Z_L \wedge A_S$, where $A_S = \wedge_{s\in S}s$.

We first verify the initialization to  $\ket{+_L}$. 
The expected functionality of the initialization operation is to prepare arbitrary state into a desired state, as shown in the following proposition. The precondition of the partial correctness is just $I$ which allows any state. As for the post-condition,
notice that $\ket{+_L}$ is stabilized by the stabilizer $X_0X_1X_2X_4$ (i.e., $X_L$) and other active stabilizers in $S$. 
The verification of initialization to $\ket{0_L}$ is similar and postponed to Appendix~\ref{app:surf}.
\nothmskip
\begin{proposition}[Initialize $\ket{+}_L$]
For the program $\prog$ in Figure~\ref{fig:surfcode}(a) which initializes a X-cut logical qubit to $\ket{+}_L$, as shown in Figure~\ref{fig:surfop}(a), we have
$\{I\}\prog\{X_0X_1X_2X_4 \wedge A_S\}$.
\end{proposition}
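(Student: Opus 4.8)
The plan is to trace through the program $\prog$ in Figure~\ref{fig:surfcode}(a) statement by statement, applying the Hoare rules of Figure~\ref{fig:pca}, and to lean on Proposition~\ref{prop:decodecorrect} for the \textbf{correct} call, exactly in the style of the repetition-code initialization proof. The program has four phases: (i) reset the full qubit array $\bar q \coloneqq \ket{0}$; (ii) enable all stabilizer measurements by assigning the stabilizer variables $\svar_0 \coloneqq X_0X_1X_2X_4$, $\svar_1 \coloneqq X_{13}X_{14}X_{15}X_{16}$, and so on; (iii) run $\textbf{correct}(\svar_0,\svar_1,\cdots)$ to project into the joint $+1$ eigenspace; (iv) turn off the two stabilizers that create the defect, $\svar_0 \coloneqq I$ and $\svar_1 \coloneqq I$.

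First I would apply the initialization rule (in its $n$-qubit form, $\{I\}\bar q \coloneqq \ket{0}\{Z_0 \wedge Z_1 \wedge \cdots\}$) to get that after phase (i) the state satisfies the conjunction of all single-qubit $Z_i$ predicates. Next, each stabilizer assignment $\svar_j \coloneqq s_j$ in phase (ii) preserves the current predicate provided $s_j$ commutes with it; since every surface-code stabilizer $s_j$ is a product of $Z$'s or a product of $X$'s and the current predicate $\bigwedge_i Z_i$ commutes with all $Z$-type stabilizers, and because $\bigwedge_i Z_i$ actually \emph{implies} each $Z$-type stabilizer (by the implication rule, Lemma~\ref{lem:implicitrule}(1)), the predicate can be strengthened via the consequence rule to include all active $Z$-type stabilizers. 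The subtlety is the $X$-type stabilizers, which anticommute with individual $Z_i$'s: here I would use the consequence rule to first weaken $\bigwedge_i Z_i$ down to $\bigwedge_{j} s_j^{(Z)}$ (the product of $Z$-type stabilizers only) — a legitimate implication since each $Z$-type stabilizer is a product of the $Z_i$'s — after which all remaining $X$-type stabilizer assignments commute with the predicate and are harmless. Then phase (iii): by Proposition~\ref{prop:decodecorrect}, $\{I\}\textbf{correct}(\svar_0,\svar_1,\cdots)\{A_S'\}$ where $A_S' = \bigwedge_{s_i \in S'} s_i$ ranges over \emph{all} active stabilizers at that point, including $X_0X_1X_2X_4$ and $X_{13}X_{14}X_{15}X_{16}$; combined with the input predicate via the second clause of Proposition~\ref{prop:decodecorrect} this yields a post-condition asserting all stabilizers hold. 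Finally phase (iv): the assignments $\svar_0 \coloneqq I$ and $\svar_1 \coloneqq I$ only shrink $\sigma$, so the predicate $X_0X_1X_2X_4 \wedge A_S$ (where $A_S$ now omits $\svar_0,\svar_1$) still holds — and crucially it commutes with every remaining stabilizer variable, which was the side condition needed for $(\rho,\sigma)\models A$. Chaining these with the sequencing rule and a final consequence step gives $\{I\}\prog\{X_0X_1X_2X_4 \wedge A_S\}$.

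The main obstacle I anticipate is the bookkeeping around commutation and the anticommuting $X$-type stabilizers during phase (ii): one must be careful that the predicate is weakened to something commuting with \emph{all} the stabilizer variables being assigned before those assignments, rather than naively carrying $\bigwedge_i Z_i$ through. A secondary point is making sure Proposition~\ref{prop:decodecorrect} is invoked with the correct active-stabilizer set $S'$ (which contains $\svar_0,\svar_1$ at the time of the \textbf{correct} call but not afterwards), so that the $X_0X_1X_2X_4$ term — present after \textbf{correct} but about to be erased from $\sigma$ — is correctly retained in the \emph{quantum} predicate $\rho$ while being dropped from $\sigma$. Everything else reduces to routine applications of the unitary/assignment/sequencing/consequence rules together with Lemma~\ref{lem:implicitrule}, exactly paralleling the repetition-code case.
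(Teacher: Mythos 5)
Your overall route matches the paper's: the heavy lifting is done by Proposition~\ref{prop:decodecorrect} at the \textbf{correct} call, and the final assignments $\svar_0\coloneqq I;\ \svar_1\coloneqq I$ preserve the predicate (with $X_0X_1X_2X_4$ surviving as a conjunct of the quantum predicate even though it is erased from $\sigma$), exactly as in the paper's detailed appendix proof of the analogous $\ket{0_L}$ initialization. Two remarks. First, your phases (i)--(ii) are unnecessary work: because the precondition of the whole program is $I$, you have $\{I\}c\{I\}$ for the entire prefix (Proposition~\ref{lem:bool-assn}), and the first clause of Proposition~\ref{prop:decodecorrect} then gives $\{I\}\textbf{correct}(\svar_0,\svar_1,\cdots)\{A_{S'}\}$ with $A_{S'}$ containing $X_0X_1X_2X_4$ and every other active stabilizer; nothing about $\bigwedge_i Z_i$ needs to be carried through the stabilizer assignments, and the paper's proof of the $\ket{0_L}$ case indeed starts directly at the \textbf{correct} call. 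Your careful weakening to the $Z$-type products is valid but buys you nothing here. Second, the step where you ``combine with the input predicate via the second clause of Proposition~\ref{prop:decodecorrect}'' is not justified as stated: that clause requires the precondition $A\wedge A_S$ with $A_S$ the conjunction of \emph{all} active stabilizers, and before the \textbf{correct} call the all-zeros state does not satisfy the $X$-type stabilizers, so its hypothesis fails. Fortunately this step is also redundant --- the first clause alone already yields every conjunct you need (the $Z$-type stabilizer products are themselves members of $A_{S'}$) --- so deleting it leaves a correct proof.
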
 \nothmskip

The verification of the logical Z gate is similar to that in quantum repetition code and the precondition and post-condition 
can be derived in a similar way.  The verification of logical X gate is similar and postponed to Appendix~\ref{app:surf}.
\nothmskip
\begin{proposition}[Logical Z gate]
	For program $\prog$ in Figure~\ref{fig:surfcode}(b) which implements the logical Z gate in Figure~\ref{fig:surfop}(b), we have $\{X_L\}\prog\{-X_L\}$ and $\{-X_L\}\prog\{X_L\}$, \\
	where $X_L = X_0X_1X_2X_4$.
\end{proposition}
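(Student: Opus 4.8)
The plan is to notice that the program in Figure~\ref{fig:surfcode}(b) is a single unitary statement, $\bar{q} \coloneqq Z_4 Z_9 Z_{13}\,\bar{q}$, so the whole proof reduces to one application of the Unitary rule together with a conjugation computation. First I would recall that the Unitary rule of Figure~\ref{fig:pca} gives $\{A\}\,\bar{q}\coloneqq U\bar{q}\,\{UAU^\dagger\}$; taking $U = Z_4 Z_9 Z_{13}$ (which is already a single Pauli string, hence trivially ``written in the sum of stabilizers'') and $A = X_L = X_0X_1X_2X_4$, it suffices to show $U X_L U^\dagger = -X_L$ and, simultaneously, $U\,A_S\,U^\dagger = A_S$, where $A_S = \wedge_{s\in S}s$ collects the remaining active stabilizers. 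The conjunction is handled by the Boolean extension of the Hoare rules (Proposition~\ref{lem:bool-assn}).

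The identity $U X_L U^\dagger = -X_L$ is routine: $Z_9$ and $Z_{13}$ act on qubits disjoint from $\{0,1,2,4\}$ and commute through, while $Z_4$ anticommutes with the single factor $X_4$, contributing exactly one sign, so $U X_L U^\dagger = -X_0X_1X_2X_4 = -X_L$. The second triple $\{-X_L\}\,\prog\,\{X_L\}$ is then immediate, since conjugation is linear and $U(-X_L)U^\dagger = -(-X_L) = X_L$. As in the repetition-code case, verifying these two $\pm X_L$ triples is enough to pin down the action on an arbitrary logical state, because the $\ket{\pm_L}$ predicates together with $A_S$ characterize the logical subspace and the logical $Z$ gate is linear; no separate linearity argument beyond what is already used for the repetition code is needed.

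The step I would flag as the main obstacle is establishing $U\,A_S\,U^\dagger = A_S$, i.e.\ that the chain $Z_4 Z_9 Z_{13}$ is a \emph{legitimate} representative of the logical $Z_L$ operator for the X-cut qubit of Figure~\ref{fig:surfop}(b). Concretely, from the distance-$3$ layout I would check that (i) $Z_4 Z_9 Z_{13}$ is a product of Pauli $Z$'s and hence commutes with every Z-type stabilizer in $S$, and (ii) each X-type stabilizer in $S$ overlaps this chain in an even number of qubits and therefore commutes with it as well --- geometrically, the chain runs between the two X-cut defects so that it meets every X-stabilizer plaquette an even number of times. Granting this, $U$ fixes each $s \in S$, so $A_S$ is preserved; combining $U X_L U^\dagger = -X_L$ with $U A_S U^\dagger = A_S$ via the Unitary rule and Proposition~\ref{lem:bool-assn} yields $\{X_L \wedge A_S\}\,\prog\,\{-X_L \wedge A_S\}$, which under the stated convention of suppressing untouched stabilizers is exactly $\{X_L\}\,\prog\,\{-X_L\}$ (and symmetrically for the reverse triple). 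The consequence rule is not even required here.
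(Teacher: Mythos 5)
Your proposal is correct and matches the paper's approach: the paper likewise treats the program as a single unitary statement and disposes of it with the one-line conjugation identity (its analogous logical-X proof in the appendix is literally "$(X_L)(Z_L)(X_L)^\dagger = -Z_L$", and the logical-Z case is the mirror image $Z_L X_L Z_L^\dagger = -X_L$ via the single anticommuting overlap on qubit~4). Your extra care about $U A_S U^\dagger = A_S$ is sound but goes beyond what the paper writes down, since the paper's stated convention is to suppress the untouched active stabilizers in these triples.
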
 \nothmskip

To reason about qubit moving, the key is to prove that the logical state is preserved. 
We first represent the current state of data qubits with the stabilizer language. The following lemma shows that there is a one-to-one mapping between the stabilizer expressions and the logical  quantum states. %
\nothmskip
\begin{restatable}{lemma}{statestabilizer}
\label{lemma:state-stabilizer}
For a X-cut qubit state $\ket{\psi}$, if $\ket{\psi} = \alpha \ket{0}_L + \beta \ket{1}_L$ ($\vert\alpha\vert^2 + \vert\beta\vert^2 = 1$), then there is a unique $(a Z_L + b X_L)$ s.t. $(a Z_L + b X_L)\ket{\psi} = \ket{\psi}$, and in this case $a = \frac{\alpha^2 - \beta^2}{\alpha^2 + \beta^2}$ and $b = \frac{2\alpha\beta}{\alpha^2 + \beta^2}$ .

\noindent\textbf{Conversely}, for a X-cut qubit state $\ket{\psi}$, if $(\frac{\alpha^2 - \beta^2}{\alpha^2 + \beta^2} Z_L + \frac{2\alpha\beta}{\alpha^2 + \beta^2} X_L)\\ *\ket{\psi} = \ket{\psi}$, and $\ket{\psi}$ is in the space spanned by $\{\ket{0_L}, \ket{1_L}\}$, then $\ket{\psi} = \alpha \ket{0}_L + \beta \ket{1}_L$, up to a global phase.
\end{restatable}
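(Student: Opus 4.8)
The plan is to reduce the statement to a routine computation inside the two‑dimensional logical subspace $\mathcal{L} = \mathrm{span}\{\ket{0}_L, \ket{1}_L\}$. The first step is to recall the standard structural fact about the double‑defect surface code (Fowler et al.): on $\mathcal{L}$ the logical operators $Z_L$ and $X_L$ act exactly as the Pauli matrices $Z$ and $X$ in the basis $\{\ket{0}_L, \ket{1}_L\}$, i.e. $Z_L\ket{0}_L = \ket{0}_L$, $Z_L\ket{1}_L = -\ket{1}_L$, $X_L\ket{0}_L = \ket{1}_L$, $X_L\ket{1}_L = \ket{0}_L$; in particular they anticommute, so $(aZ_L + bX_L)^2 = (a^2+b^2)I$ on $\mathcal{L}$. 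With this in hand everything becomes $2\times 2$ linear algebra.

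For the forward direction, write $\ket{\psi} = \alpha\ket{0}_L + \beta\ket{1}_L$ and expand $(aZ_L + bX_L)\ket{\psi}$ in the logical basis; comparing coefficients of $\ket{0}_L$ and $\ket{1}_L$ turns $(aZ_L+bX_L)\ket{\psi} = \ket{\psi}$ into the linear system $a\alpha + b\beta = \alpha$ and $b\alpha - a\beta = \beta$ in the unknowns $a,b$. Its coefficient determinant is $\alpha^2 + \beta^2$, so when this is nonzero Cramer's rule gives precisely $a = (\alpha^2-\beta^2)/(\alpha^2+\beta^2)$ and $b = 2\alpha\beta/(\alpha^2+\beta^2)$, and a one‑line check confirms $a^2 + b^2 = 1$, so $aZ_L + bX_L$ is a bona fide involutive stabilizer predicate rather than a generic operator. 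Uniqueness is then immediate: if two such operators both stabilize $\ket{\psi}$, their difference $cZ_L + dX_L$ annihilates $\ket{\psi}\neq 0$, and the same determinant being nonzero forces $c = d = 0$.

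For the converse, set $P = \tfrac{\alpha^2-\beta^2}{\alpha^2+\beta^2}Z_L + \tfrac{2\alpha\beta}{\alpha^2+\beta^2}X_L$. By the anticommutation relation, $P$ restricted to $\mathcal{L}$ is a Hermitian involution with eigenvalues $\pm 1$, hence its $+1$‑eigenspace within $\mathcal{L}$ is one‑dimensional; the forward direction shows that $\alpha\ket{0}_L + \beta\ket{1}_L$ lies in that eigenspace, so any $\ket{\psi}\in\mathcal{L}$ fixed by $P$ must be a scalar multiple of it, which after normalization is exactly $\alpha\ket{0}_L + \beta\ket{1}_L$ up to the global phase that $P$ cannot pin down.

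The part that needs genuine care is the degenerate parameter regime. When $\alpha^2 + \beta^2 = 0$ (for instance $\alpha = 1/\sqrt{2}$, $\beta = i/\sqrt{2}$) the stated $a,b$ are undefined, and in fact no operator of the form $aZ_L + bX_L$ stabilizes $\ket{\psi}$ — one genuinely needs a $Y_L$ component — so the lemma must be read with the implicit hypothesis $\alpha^2 + \beta^2 \neq 0$; I would state this explicitly. One should also verify that in the situations where Lemma~\ref{lemma:state-stabilizer} is actually invoked (qubit moving, etc.) the resulting $a,b$ are real, so that $aZ_L + bX_L$ is a legitimate Hermitian predicate in {\assnname}, and be scrupulous about claiming only equality up to a global phase in the converse, since $P$ determines a ray and not a vector.
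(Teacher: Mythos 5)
Your proof is correct, and the forward direction (expand in the logical basis, solve the $2\times 2$ linear system, read off $a$ and $b$) is essentially what the paper does, though you make the determinant $\alpha^2+\beta^2$ and the Cramer's-rule step explicit where the paper just says ``by solving the equation.'' Your converse argument is genuinely different: the paper takes two states $\alpha_0\ket{0}_L+\beta_0\ket{1}_L$ and $\alpha_1\ket{0}_L+\beta_1\ket{1}_L$ both fixed by the same $aZ_L+bX_L$ and equates the two coefficient formulas to deduce $\alpha_0/\beta_0=\alpha_1/\beta_1$, whereas you observe that $aZ_L+bX_L$ is a traceless involution on the logical subspace (via $a^2+b^2=1$ and the anticommutation of $Z_L$ and $X_L$), so its $+1$-eigenspace is one-dimensional and the conclusion is immediate. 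Your route is cleaner and also more robust: the paper's manipulation divides by $\beta_0$ and $\beta_1$, which breaks down for $\ket{\psi}=\ket{0}_L$, while the eigenspace argument covers that case uniformly (one small correction: for complex $a,b$ the operator is an involution but not necessarily Hermitian; the involution property plus tracelessness is all you actually use, so nothing is lost). Your flag on the degenerate regime $\alpha^2+\beta^2=0$ (e.g.\ $\ket{\psi}=(\ket{0}_L+i\ket{1}_L)/\sqrt{2}$, which requires a $Y_L$ component and is stabilized by no $aZ_L+bX_L$) identifies a real gap in the lemma as stated that the paper's proof silently ignores; adding the hypothesis $\alpha^2+\beta^2\neq 0$ is the right fix, and your remark that the coefficients must be real for $aZ_L+bX_L$ to be a legitimate predicate in \assnname{} is a fair point about how the lemma is later applied.
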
 \nothmskip
We then apply Lemma~\ref{lemma:state-stabilizer} to verify the vertical qubit moving operation in Figure~\ref{fig:surfop}(c). The verification of the horizontal qubit moving is similar. The following proposition confirms that the logical state is not changed since the precondition and the post-condition have equal coefficients w.r.t. the logical X and logical Z operators. 
\nothmskip
\begin{restatable}[Vertical qubit moving]{proposition}{surfvqmov}
For program $\prog$ in Figure~\ref{fig:surfcode}(c) which implements the qubit moving operation in Figure~\ref{fig:surfop}(c) %
, we have 
$\{a Z_L + b X_L\}\prog\{a Z'_L + b X'_L\}$, 
where $Z_L = Z_0Z_1Z_2$, $X_L = X_{2}X_{3}X_{4}X_{6}$, $Z'_L = Z_{0}Z_{1}Z_{2}Z_{6}$, $X'_L = X_{6}X_{8}X_{9}X_{10}$, $a,b \in \mathbb{C}$.
\end{restatable}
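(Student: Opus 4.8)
The plan is to push the predicate $a Z_L + b X_L$ forward through $\prog$ one statement at a time, using the partial-correctness rules of Figure~\ref{fig:pca}, Proposition~\ref{prop:decodecorrect} for the two \textbf{correct} calls, and the consequence rule to insert or drop the active stabilizers $A_S$ that each statement keeps implicit. At every intermediate point the current predicate has the form $a\,R + b\,S \wedge (\text{active stabilizers})$ with $R,S$ the current logical $Z$ and $X$ representatives, and by Lemma~\ref{lemma:state-stabilizer} this pins down a single logical state $\alpha\ket{0_L}+\beta\ket{1_L}$; so the whole argument reduces to bookkeeping that the coefficients $a,b$ are carried unchanged from $(Z_L, X_L)$ to $(Z'_L, X'_L)$. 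As in the logical-$X$ proof, one may either carry $a,b$ symbolically or, by linearity of the arguments, reduce to the two cases $\{Z_L\}$ and $\{X_L\}$; the argument is the same either way.

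Most of the program is routine. The two blocks of stabilizer reassignments ($\svar_0 \coloneqq I$, $\svar_1 \coloneqq X_6X_8X_9X_{10}$, $\svar_2 \coloneqq Z_3Z_5Z_6Z_8$, $\svar_3 \coloneqq Z_4Z_6Z_7Z_9$, $\dots$, then $\svar_1 \coloneqq I$, $\svar_2 \coloneqq Z_3Z_5Z_8$, $\svar_3 \coloneqq Z_4Z_7Z_9$, $\svar_{w+1}\coloneqq Z_6$) act only on the stabilizer state, so by the Assignment rule they preserve the predicate as long as it commutes with each assigned value; these are the usual Pauli-string parity checks (for instance $X_2X_3X_4X_6$ overlaps $Z_3Z_5Z_6Z_8$ in the two anticommuting positions $3,6$ and $Z_4Z_6Z_7Z_9$ in the positions $4,6$, hence commutes with both, and it commutes with the all-$X$ string $X_6X_8X_9X_{10}$; the all-$Z$ string $Z_0Z_1Z_2$ is disjoint from $\{6,8,9,10\}$). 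The single unitary $\bar{q} \coloneqq X_6X_8X_9X_{10}\bar{q}$ in the $-1$ branch is handled by the Unitary rule: conjugation by $X_6X_8X_9X_{10}$ fixes $Z_0Z_1Z_2$ (disjoint support) and fixes $X_2X_3X_4X_6$ (both all-$X$). The two \textbf{correct} statements are dispatched by Proposition~\ref{prop:decodecorrect}.

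The crux is the conditional that measures $\svar_{w+1} = Z_6$ on $\bar{q}$ (skip on outcome $+1$; apply $\bar{q} \coloneqq X_6X_8X_9X_{10}\bar{q}$ then $\svar_{w+1}\coloneqq Z_6$ on outcome $-1$), because $Z_6$ commutes with $Z_L$ but anticommutes with the old $X_L = X_2X_3X_4X_6$, so this is where the logical frame genuinely moves. I would apply the Condition rule and show both branches reach the common post-condition $\{a Z'_L + b X'_L\}$. Two ingredients are needed: first, that the state entering the conditional --- having been prepared by the earlier enlargement ($\svar_1\coloneqq X_6X_8X_9X_{10}$ and the enlarged $\svar_2,\svar_3$) and \textbf{correct} --- satisfies enough extra stabilizers that $a Z_L + b X_L$ can be rewritten, via the implication rule (Lemma~\ref{lem:implicitrule}), as an equivalent predicate in the moved operators $Z'_L, X'_L$; second, that on the $-1$ outcome the sign flip of $\svar_{w+1}$ together with the correction $X_6X_8X_9X_{10}$ --- which maps the $-1$ eigenspace of $Z_6$ back to the $+1$ eigenspace and, by the conjugation computed above, leaves the logical content untouched --- lands in exactly the state produced by the $+1$ outcome. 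I expect this branch-merging step --- establishing the stabilizer equivalence between $(Z_L, X_L)$ and $(Z'_L, X'_L)$ in the enlarged frame and checking the two measurement outcomes agree after correction --- to be the main obstacle; the closing reassignments ($\svar_0\coloneqq X_2X_3X_4X_6$, $\svar_2,\svar_3$ restored) and the final \textbf{correct} then yield $\{a Z'_L + b X'_L\}$ by the Assignment rule and Proposition~\ref{prop:decodecorrect}.
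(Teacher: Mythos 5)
Your plan matches the paper's proof in all essentials: forward the predicate through the assignments and the \textbf{correct} calls, and at the crux use the implication rule with the newly enabled stabilizer $X_6X_8X_9X_{10}$ to replace $X_L=X_2X_3X_4X_6$ by $X_2X_3X_4X_8X_9X_{10}$ (which commutes with $Z_6$), so that both branches of the conditional merge into $(aZ_L+bX_2X_3X_4X_8X_9X_{10})\wedge Z_6$, after which the closing reassignment $\svar_0\coloneqq X_2X_3X_4X_6$ and the final \textbf{correct} let you contract $Z_LZ_6=Z'_L$ and $X_2X_3X_4X_8X_9X_{10}\cdot(X_2X_3X_4X_6)=X'_L$. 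One bookkeeping point to tighten: that rewrite must already be in force when $\svar_{\stabnum+1}\coloneqq Z_6$ is executed, whereas your first paragraph files that assignment among the ``routine'' commuting ones even though the unrewritten $X_L$ anticommutes with $Z_6$ and would collapse the predicate to $I$ there.
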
 \nothmskip

While the implementation of the logical H gate requires isolating the defects of a logical qubit, the core of logical H operation is to perform local H gates for the area outlined in Figure~\ref{fig:surfop}(d), which alone is a distance-$3$ planar surface code. We will only verify that part of program for simplicity. Other parts of the logical H operation can be verified on top of verified operations above. Like in the case in quantum repetition code, we only need to verify that the logical X operator is transformed into the logical Z operator and the logical Z operator is transformed into the logical X operator. 
\nothmskip
\begin{proposition}[Logical H gate]
	For the program $\prog$ in Figure~\ref{fig:surfcode}(d) which implements the simplified logical H gate in Figure~\ref{fig:surfop}(d) , we have
	$\{Z_L\}\prog\{X_L' \}$, $\{X_L\}\prog\{Z_L'\}$, 
	where $Z_L = Z_{1}Z_{6}Z_{11}$, $X_L = X_{5}X_{6}X_{7}$, $Z_L' = Z_{5}Z_{6}Z_{7}$, $X_L' = X_{1}X_{6}X_{11}$ for the distance-$3$ planar surface code  outlined in Figure~\ref{fig:surfop}(d).
\end{proposition}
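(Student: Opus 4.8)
The plan is to prove both Hoare triples by a single forward pass through the proof rules of Figure~\ref{fig:pca}, exploiting the fact that the program $\prog$ of Figure~\ref{fig:surfcode}(d) is straight-line (no \textbf{if}/\textbf{while}), so only the (Assignment), (Unitary), and (Sequencing) rules are needed, plus possibly one appeal to (Consequence) if one wants to carry along the remaining active-stabilizer conjunct $A_S$ explicitly. First I would dispatch the opening block $\svar_0\coloneqq I;\ \svar_1\coloneqq I;\ \svar_2\cdots$: since $I$ commutes with every stabilizer expression, the (Assignment) rule gives $\{A\}\svar_i\coloneqq I\{A\}$ for any predicate $A$, so after this block the precondition ($Z_L$ in the first triple, $X_L$ in the second) is unchanged. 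Next I would apply the (Unitary) rule to $\bar q\coloneqq H\bar q$. The key computation is the Clifford conjugation $HZH^\dagger=X$ and $HXH^\dagger=Z$; applied qubitwise — the ``perform local H gates'' step applies $H$ to every data qubit of the planar patch, which is exactly the support of $Z_L$ and $X_L$ — this turns $Z_L=Z_1Z_6Z_{11}$ into $X_1X_6X_{11}=X_L'$ and $X_L=X_5X_6X_7$ into $Z_5Z_6Z_7=Z_L'$.

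It then remains to handle the closing block $\svar_0\coloneqq X_3X_5X_6X_8;\ \svar_1\coloneqq Z_1Z_3Z_4Z_6;\ \svar_2\cdots$, again via (Assignment). Here the relevant side condition is that the current predicate must commute with each stabilizer being assigned, since otherwise the rule only yields the trivial post-condition $I$. This commutation holds because a logical operator of a planar surface code overlaps every stabilizer generator in an even number of qubits — for instance $X_1X_6X_{11}$ shares qubits $1,6$ with $Z_1Z_3Z_4Z_6$, and $Z_5Z_6Z_7$ shares qubits $5,6$ with $X_3X_5X_6X_8$ — so each assignment in this block is predicate-preserving. Chaining all of the above with (Sequencing) yields $\{Z_L\}\prog\{X_L'\}$ and $\{X_L\}\prog\{Z_L'\}$, which is the claim.

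The main obstacle is not conceptual but bookkeeping. One must (i) write down the full generating set of stabilizers of the distance-$3$ planar patch of Figure~\ref{fig:surfop}(d) both before and after the Hadamard step — the two sets being related by the $X\leftrightarrow Z$ reflection induced by $H^{\otimes}$; (ii) verify that each of the two logical operators commutes with every generator on its own side, so that every $\svar_i\coloneqq s$ assignment in the closing block preserves the predicate rather than collapsing it to $I$; and (iii) confirm the geometric identification that the post-$H$ strings $X_1X_6X_{11}$ and $Z_5Z_6Z_7$ are genuine representatives of the logical $X$ and logical $Z$ of the rotated patch, i.e. that they equal the claimed $X_L'$ and $Z_L'$ up to multiplication by stabilizers (so that (Consequence) closes the gap if the chosen canonical representatives differ). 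Each of these is a finite, mechanical check over the fixed lattice layout, and crucially none of them requires manipulating any exponentially large matrix — which is exactly the efficiency the framework is meant to illustrate.
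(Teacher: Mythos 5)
Your proof is correct and is essentially the argument the paper intends: the paper's own proofs of the analogous logical X and Z gate propositions consist of exactly this kind of one-line Clifford conjugation (e.g.\ $(X_L)(Z_L)(X_L)^\dagger=-Z_L$), and here the corresponding computation $H^{\otimes}Z_1Z_6Z_{11}H^{\otimes}=X_1X_6X_{11}$, $H^{\otimes}X_5X_6X_7H^{\otimes}=Z_5Z_6Z_7$ combined with the (Assignment) and (Sequencing) rules is precisely what is needed; your extra care in checking that the re-enabled stabilizers commute with the transformed predicate (so the assignment rule does not collapse the postcondition to $I$) is a point the paper glosses over but is required by its own side condition. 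Note that although the paper claims the proof is postponed to the appendix, no explicit proof of this proposition actually appears there, so your write-up fills that gap rather than diverging from it.
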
 \nothmskip

\subsection{Verification on Noise Injection}
\begin{figure}
\captionsetup[subfigure]{labelformat=empty}
\begin{subfigure}[b]{0.23\textwidth}
$
q_{4}q_{9}q_{13} \coloneqq Z_{4}Z_{9}Z_{13} q_{4}q_{9}q_{13} \\
\text{// an Z error on qubit }q_9; \\
q_{9} \coloneqq Z_{9}q_{9} \\
\text{// assume } \svar_0,\svar_1,\cdots \\
\text{// be the active stabilizers;} \\
\textbf{correct}(\svar_0,\svar_1,\cdots)
$
\subcaption{(a) a Z error on $q_9$.}
\end{subfigure}
\hspace{-8pt}
\tikz{\draw[-,black, densely dashed, thick](-0.2,-1.05) -- (-0.2,1.90);}
\hspace{2pt}
\begin{subfigure}[b]{0.23\textwidth}
$
q_{4}q_{9}q_{13} \coloneqq Z_{4}Z_{9}Z_{13} q_{4}q_{9}q_{13} \\
\text{// Z errors on qubit } q_9, q_{13} \\
q_9 \coloneqq Z_9q_9; \\
q_{13} \coloneqq Z_{13}q_{13}; \\
\textbf{correct}(\svar_0,\svar_1,\cdots)
\\
$
\subcaption{(b) two Z errors on $q_9$, $q_{13}$.}
\end{subfigure}
    \caption{Noisy programs of the logical Z gate  in  Figure~\ref{fig:surfop}(b).  }
    \label{fig:noisysurfcode}%
\end{figure}

To reason the correctness when noise exists, we assume a minimal weight perfect matching (MWPM) decoder~\cite{Fowler2015MinimumWP} and error correction for the surface code array.
When the error only happens on one qubit, it can be easily detected by the decoder and be corrected, as indicated below:
\nothmskip
\begin{proposition}
For the program $\prog$ in Figure~\ref{fig:noisysurfcode}(a) which implements a noisy version of the logical Z gate in Figure~\ref{fig:surfop}(b), \\
$\{X_L \wedge A_S\}\prog\{-X_L \wedge A_S\}, \{-X_L \wedge A_S\}\prog\{X_L \wedge A_S\}$, \\
where $X_L = X_{0}X_{1}X_{2}X_{4}$.
\end{proposition}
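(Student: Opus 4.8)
The plan is to treat $\prog$ as the sequential composition of its three pieces --- the logical $Z$ gate $\bar{q}\coloneqq Z_4Z_9Z_{13}\bar{q}$, the injected phase error $q_9\coloneqq Z_9q_9$, and the call $\textbf{correct}(\svar_0,\svar_1,\cdots)$ --- apply the Unitary rule of Figure~\ref{fig:pca} to the first two, appeal to the validity of the decoder for the third, and glue the three sub-triples with the Sequencing rule. The symmetric triple $\{-X_L\wedge A_S\}\prog\{X_L\wedge A_S\}$ then follows from the identical computation with the sign of $X_L$ reversed, since every predicate transformation below is linear in the predicate.

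For the logical-$Z$ step the Unitary rule asks for $(Z_4Z_9Z_{13})X_L(Z_4Z_9Z_{13})$ and $(Z_4Z_9Z_{13})\,s\,(Z_4Z_9Z_{13})$ for each active stabilizer $s\in S$. Because $Z_4$ anticommutes with $X_4$ while $Z_9,Z_{13}$ act on qubits outside the support $\{0,1,2,4\}$ of $X_L$, we get $X_L\mapsto -X_L$; and because $Z_4Z_9Z_{13}$ is a logical operator it commutes with every active stabilizer, so $A_S\mapsto A_S$. Hence $\{X_L\wedge A_S\}\,\bar{q}\coloneqq Z_4Z_9Z_{13}\bar{q}\,\{-X_L\wedge A_S\}$. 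For the error step, $Z_9$ acts on a qubit outside the support of $X_L$, so $-X_L\mapsto -X_L$, whereas $Z_9$ anticommutes with precisely the (at most two) $X$-type stabilizers $s_j,s_k$ incident to $q_9$ and commutes with all the others; writing $A_S'$ for $A_S$ with $s_j,s_k$ replaced by $-s_j,-s_k$, the Unitary rule gives $\{-X_L\wedge A_S\}\,q_9\coloneqq Z_9q_9\,\{-X_L\wedge A_S'\}$. One should first check that $q_9$ sits in the interior of the logical-$Z$ chain so that $s_j,s_k$ are genuinely active --- this is the one place where the concrete path $\{4,9,13\}$ is used.

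For the correction step I would invoke the validity of the MWPM decoder, in the form of Proposition~\ref{prop:decodecorrect}: the state entering $\textbf{correct}$ is a definite $-1$ eigenstate of $s_j,s_k$ and a definite $+1$ eigenstate of every other active stabilizer, so the measured syndrome is deterministic and consists of a single defect pair at $s_j,s_k$. The unique minimum-weight $Z$-chain matching that syndrome is $Z_9$ itself, so on this input $\textbf{correct}$ acts by applying $Z_9$ and restoring the signs of the variables $\svar_j,\svar_k$; this sends $A_S'\mapsto A_S$ and keeps $-X_L\mapsto -X_L$. Thus $\{-X_L\wedge A_S'\}\,\textbf{correct}(\svar_0,\svar_1,\cdots)\,\{-X_L\wedge A_S\}$, and chaining the three triples (with a cosmetic use of the Consequence rule to re-collect signs if desired) yields $\{X_L\wedge A_S\}\prog\{-X_L\wedge A_S\}$.

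The main obstacle is this last step. Unlike the repetition-code decoder of Program~\ref{prog:rep-decoder}, the surface-code $\textbf{correct}$ routine is kept abstract in this section, so one cannot literally unfold its nested conditional measurements and show branch by branch that every ``wrong'' syndrome carries a false precondition. The clean resolution is to argue entirely through the validity hypothesis of Proposition~\ref{prop:decodecorrect}: a weight-$1$ error lies well inside a distance-$3$ code, and both the error $Z_9$ and its MWPM correction $Z_9$ commute with $X_L$, so no logical fault can be introduced and the logical eigenvalue is untouched. A secondary point worth stating explicitly is that correctability here is precisely the statement that $Z_9$ commutes with $X_L$ --- equivalently, that $q_9$ does not lie on a homologically nontrivial representative of the $X$-logical operator --- so that a qubit chosen on the competing logical representative would instead realize an uncorrected logical fault.
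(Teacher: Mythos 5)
Your proof is correct, and it follows the same predicate-tracking method the paper uses for the analogous repetition-code statement (Proposition~\ref{prop:rep-noise-x}); the paper in fact never spells out a proof of this particular surface-code proposition in Appendix~\ref{app:surf}, so there is no official argument to diverge from. Your computations of the two Unitary steps ($Z_4$ anticommuting with $X_4$ gives $X_L\mapsto -X_L$; $Z_9$ commuting with $X_L$ but flipping the signs of the incident X-type plaquettes, $A_S\mapsto A_S'$) are exactly what the Hoare rules require. The one genuine subtlety is the one you flag: Proposition~\ref{prop:decodecorrect} as literally stated only covers inputs already satisfying $A\wedge A_S$, so it cannot be applied verbatim to the syndrome-carrying state $-X_L\wedge A_S'$. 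In the repetition-code case the paper resolves this by unfolding the explicit decoder of Program~\ref{prog:rep-decoder} branch by branch (killing the wrong branches with a \textbf{False} precondition); since the surface-code \textbf{correct} is left abstract, your appeal to the assumed MWPM decoder --- deterministic two-defect syndrome, unique weight-one matching $Z_9$, which commutes with $X_L$ and restores $A_S'\mapsto A_S$ --- is the right (and essentially only available) substitute, and matches the paper's stated assumption of an MWPM decoder for this section. Your closing remark that correctability here hinges on $Z_9$ commuting with $X_L$ is a worthwhile observation that the paper does not make explicit.
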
 \nothmskip

When we increase the Z error location by one, the error correction protocol may fail.
\nothmskip
\begin{proposition}\label{prop:noisyfail}
For the program $\prog$ in Figure~\ref{fig:noisysurfcode}(b), \\
$\{X_L \wedge A_S\}\prog\{X_L \wedge A_S\}, \{-X_L \wedge A_S\}\prog\{-X_L \wedge A_S\}$,\\
which is not the desired behavior of the logical Z gate.
\end{proposition}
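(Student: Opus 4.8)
The plan is to show that the logical $Z$ gate and the two injected $Z$ errors collapse, on the data qubits, into a single weight-one Pauli, which the minimum-weight decoder then ``corrects away,'' thereby cancelling the intended logical operation and leaving $X_L$ with its original sign. So the first step is to compute the composite physical action. The first line of $\prog$ applies $Z_4Z_9Z_{13}$ to $\bar q$, and the error lines apply $Z_9$ and then $Z_{13}$; since these are $Z$-type Paulis on distinct qubits they all commute and square to $I$, so the program (before \textbf{correct}) acts on $\rho$ by conjugation with $Z_{13}Z_9(Z_4Z_9Z_{13})=Z_4$. Up to this bookkeeping, $\prog$ is equivalent to ``$\bar q\coloneqq Z_4\bar q$; $\textbf{correct}(\svar_0,\svar_1,\dots)$''.

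The second step is to determine what \textbf{correct} does to a state carrying a lone $Z_4$ error. Since $d=3$, the correction capability is $\lfloor (d-1)/2\rfloor = 1$, so $Z_4$ is in range; moreover its syndrome coincides with that of $Z_9Z_{13}$, because $Z_4\cdot Z_9Z_{13}=Z_4Z_9Z_{13}=Z_L$ is a logical operator and hence commutes with every stabilizer in $S$. A minimum-weight-perfect-matching decoder resolves this syndrome to the \emph{lighter} candidate, namely $Z_4$ (weight $1$) rather than $Z_9Z_{13}$ (weight $2$). Thus \textbf{correct} applies the Pauli $Z_4$ (possibly times a product of active stabilizers, which acts trivially on states satisfying $A_S$) and also restores all stabilizer signs, so the accumulated physical action of $\prog$ is $Z_4\cdot Z_4 = I$ modulo active stabilizers: for any $(\rho,\sigma)$ with $\rho\models A_S$ we get $\denot{\prog}(\rho,\sigma)=(\rho,\sigma')$ with $\sigma'$ having the active stabilizers restored.

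The third step just reads off the two triples. If $(\rho,\sigma)\models X_L\wedge A_S$, then $\prog$ returns $\rho$ unchanged together with a valid $\sigma'$, so the output still satisfies $X_L\wedge A_S$; the same argument works verbatim for $-X_L\wedge A_S$, giving both claimed Hoare triples. This is, however, not the specification of a logical $Z$ gate, which must send $X_L\mapsto -X_L$ (contrast the correct logical $Z$ of Figure~\ref{fig:surfop}(b) and Proposition on the noise-free gate): the gate's contribution on $q_4$ is cancelled by the decoder's ``correction,'' its contribution on $q_9,q_{13}$ is cancelled by the two errors, and the net logical map is trivial — the textbook failure of a distance-$3$ code under a weight-two $Z$ error.

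The main obstacle is the second step: we only assumed ``a valid MWPM decoder,'' so we must pin down its behaviour on a \emph{corrupted} state rather than the error-free state covered by Proposition~\ref{prop:decodecorrect}. I would discharge this either by invoking directly the correctness of MWPM on the distance-$3$ surface code (every weight-$\le 1$ $Z$ error is mapped to its own inverse up to a stabilizer), or by extending Proposition~\ref{prop:decodecorrect} to the statement ``\textbf{correct} on a state carrying a correctable error $E$ applies $E$ up to an active stabilizer'' and then propagating the predicate through the Hoare rules of Figure~\ref{fig:pca}; in the latter route one must observe that applying $Z_4$ inside \textbf{correct} flips the sign of $X_L$, exactly undoing the sign flip produced by the $Z_4Z_9Z_{13}$ gate in the first step, so that the precondition reappears unchanged as the post-condition.
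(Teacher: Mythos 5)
Your argument is correct, and it is more explicit than anything the paper actually provides for this statement: the paper gives no proof of Proposition~\ref{prop:noisyfail} (neither inline nor in Appendix~\ref{app:surf}), offering only the post-hoc remark that the failure is expected because a distance-$d$ code cannot correct more than $\lfloor d/2\rfloor$ errors. Your route — collapsing the gate and the two injected errors into the single effective Pauli $Z_4$, observing that $Z_4$ and $Z_9Z_{13}$ share a syndrome because their product is the logical $Z_L$, and letting MWPM pick the lighter representative so that the net action is trivial modulo active stabilizers — is exactly the intended physics, and the sign bookkeeping on $X_L$ (flipped by the gate's $Z_4$, untouched by $Z_9,Z_{13}$, flipped back by the decoder's $Z_4$) is right. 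You also correctly flag the one genuine formal gap: Proposition~\ref{prop:decodecorrect} only constrains $\textbf{correct}$ on error-free states, so the step ``the decoder applies $Z_4$'' rests on an unformalized assumption about the MWPM decoder's behaviour on corrupted states. The paper has the same gap — unlike the repetition code, where $\textbf{correct}$ is given as an explicit program (Program~\ref{prog:rep-decoder}) and the failure case is proved by a line-by-line Hoare derivation through it, the surface-code $\textbf{correct}$ is never spelled out, so your proposed fix (either axiomatize ``a correctable error $E$ is undone by applying $E$ up to an active stabilizer,'' or instantiate the decoder as a program and push the predicate through Figure~\ref{fig:pca}) is the right way to make the argument rigorous.
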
 \nothmskip

The proposition~\ref{prop:noisyfail} is expected because a distance-$d$ surface code cannot correct errors on more than $\lfloor\frac{d}{2} \rfloor$ qubits.
More complicated cases can be proved in a similar way with our verification framework.

\section{Conclusion}

Quantum error correction is the bedrock of fault-tolerant quantum computation, and its verification 
is of significant importance for the forthcoming large-scale quantum computing.
In this paper, we propose {\myFrameworkName}, an efficient verification framework for stabilizer codes. 
 {\myFrameworkName} first comes with a concise language, {\langname}, which incorporates  stabilizers to represent QEC programs.
Stabilizers together with stabilizer expressions are also used as predicates in our new assertion language, {\assnname}. 
We then derive a sound quantum Hoare logic to efficiently reason about the correctness of QEC programs.
Finally, We evaluate  {\myFrameworkName} by a theoretical complexity analysis and case studies on two QEC codes.
We believe this work will spark more interest in the verification of QEC programs which may become a prevalent programming paradigm in the near future.

\bibliographystyle{unsrt}
\bibliography{bib}

\begin{thebibliography}{10}

\bibitem{Fowler2012SurfaceCT}
Austin~G. Fowler, Matteo Mariantoni, John~M. Martinis, and Andrew~N Cleland.
\newblock Surface codes: Towards practical large-scale quantum computation.
\newblock {\em Physical Review A}, 86:032324, 2012.

\bibitem{Chamberland2020BuildingAF}
Christopher Chamberland, Kyungjoo Noh, Patricio Arrangoiz-Arriola, Earl~T.
  Campbell, Connor~T. Hann, Joseph Iverson, Harald Putterman, Thomas~C.
  Bohdanowicz, Steven~T. Flammia, Andrew~J. Keller, Gil Refael, John Preskill,
  Liang Jiang, Amir~H. Safavi-Naeini, Oskar~J. Painter, and Fernando G. S.~L.
  Brand{\~a}o.
\newblock Building a fault-tolerant quantum computer using concatenated cat
  codes.
\newblock {\em arXiv: Quantum Physics}, 2020.

\bibitem{Chamberland2020TopologicalAS}
Christopher Chamberland, Guanyu Zhu, Theodore~J. Yoder, Jared~B. Hertzberg, and
  Andrew~W. Cross.
\newblock Topological and subsystem codes on low-degree graphs with flag
  qubits.
\newblock {\em Physical Review X}, 10, 2020.

\bibitem{Preskill2018QuantumCI}
John Preskill.
\newblock Quantum computing in the nisq era and beyond.
\newblock {\em Quantum}, 2018.

\bibitem{Holmes2020NISQBQ}
Adam Holmes, Mohammad~Reza Jokar, Ghasem Pasandi, Yongshan Ding, Massoud
  Pedram, and Frederic~T. Chong.
\newblock Nisq+: Boosting quantum computing power by approximating quantum
  error correction.
\newblock {\em 2020 ACM/IEEE 47th Annual International Symposium on Computer
  Architecture (ISCA)}, pages 556--569, 2020.

\bibitem{nielsen2002quantum}
Michael~A Nielsen and Isaac Chuang.
\newblock Quantum computation and quantum information, 2002.

\bibitem{google50296}
Adam~Jozef Zalcman, Alan Derk, Alan Ho, Alex Opremcak, Alexander Korotkov,
  Alexandre Bourassa, Andre~Gregory Petukhov, Andreas Bengtsson, Andrew
  Dunsworth, Anthony Megrant, Austin Fowler, Bálint Pató, Benjamin Chiaro,
  Benjamin Villalonga, Brian Burkett, Brooks~Riley Foxen, Catherine Erickson,
  Charles Neill, Chris Quintana, Cody Jones, Craig~Michael Gidney, Daniel
  Eppens, Daniel Sank, Dave Landhuis, David~A Buell, Doug Strain, Dvir Kafri,
  Edward Farhi, Eric Ostby, Erik Lucero, Evan Jeffrey, Fedor Kostritsa,
  Frank~Carlton Arute, Hartmut Neven, Igor Aleiner, Jamie Yao, Jarrod~Ryan
  McClean, Jeremy~Patterson Hilton, Jimmy Chen, Jonathan~Arthur Gross, Joseph
  Bardin, Josh Mutus, Juan Atalaya, Julian Kelly, Kevin Miao, Kevin Satzinger,
  Kostyantyn Kechedzhi, Kunal Arya, Marco Szalay, Marissa Giustina, Masoud
  Mohseni, Matt McEwen, Matt Trevithick, Matthew Neeley, Matthew~P Harrigan,
  Michael Broughton, Michael Newman, Murphy~Yuezhen Niu, Nicholas Bushnell,
  Nicholas Redd, Nicholas Rubin, Ofer Naaman, Orion Martin, Paul~Victor Klimov,
  Pavel Laptev, Pedram Roushan, Ping Yeh, Rami Barends, Roberto Collins, Ryan
  Babbush, Sabrina Hong, Sean Demura, Sean Harrington, Seon Kim, Sergei Isakov,
  Sergio Boixo, Ted White, Thomas~E O'Brien, Trent Huang, Trevor Mccourt, Vadim
  Smelyanskiy, Vladimir Shvarts, William Courtney, Wojtek Mruczkiewicz, Xiao
  Mi, Yu~Chen, and Zhang Jiang.
\newblock Exponential suppression of bit or phase flip errors with repetitive
  quantum error correction.
\newblock {\em Nature}, 2021.

\bibitem{Chamberland2020}
Christopher Chamberland, Guanyu Zhu, Theodore~J. Yoder, Jared~B. Hertzberg, and
  Andrew~W. Cross.
\newblock Topological and subsystem codes on low-degree graphs with flag
  qubits.
\newblock {\em Phys. Rev. X}, 10:011022, Jan 2020.

\bibitem{Gottesman1997StabilizerCA}
Daniel Gottesman.
\newblock Stabilizer codes and quantum error correction.
\newblock {\em arXiv: Quantum Physics}, 1997.

\bibitem{Lao2020FaulttolerantQE}
Lingling Lao and Carmen~Garcia Almudever.
\newblock Fault-tolerant quantum error correction on near-term quantum
  processors using flag and bridge qubits.
\newblock {\em Physical Review A}, 101:032333, 2020.

\bibitem{Chao2019FlagFE}
Rui Chao and Ben Reichardt.
\newblock Flag fault-tolerant error correction for any stabilizer code.
\newblock {\em arXiv: Quantum Physics}, 2019.

\bibitem{Noh2020FaulttolerantBQ}
Kyungjoo Noh and Christopher Chamberland.
\newblock Fault-tolerant bosonic quantum error correction with the
  surface–gottesman-kitaev-preskill code.
\newblock {\em Physical Review A}, 101:012316, 2020.

\bibitem{Wu2019FullstateQC}
Xin-Chuan Wu, Sheng Di, Emma~Maitreyee Dasgupta, Franck Cappello, Hal Finkel,
  Yuri Alexeev, and Frederic~T. Chong.
\newblock Full-state quantum circuit simulation by using data compression.
\newblock {\em Proceedings of the International Conference for High Performance
  Computing, Networking, Storage and Analysis}, 2019.

\bibitem{Zhou2019AnAQ}
Li~Zhou, Nengkun Yu, and Mingsheng Ying.
\newblock An applied quantum hoare logic.
\newblock {\em Proceedings of the 40th ACM SIGPLAN Conference on Programming
  Language Design and Implementation}, 2019.

\bibitem{Li2020ProjectionbasedRA}
Gushu Li, Li~Zhou, Nengkun Yu, Yufei Ding, Mingsheng Ying, and Yuan Xie.
\newblock Projection-based runtime assertions for testing and debugging quantum
  programs.
\newblock {\em Proceedings of the ACM on Programming Languages}, 4:1 -- 29,
  2020.

\bibitem{Ying2012FloydhoareLF}
Mingsheng Ying.
\newblock Floyd--hoare logic for quantum programs.
\newblock {\em ACM Trans. Program. Lang. Syst.}, 33:19:1--19:49, 2012.

\bibitem{Ying2018ReasoningAP}
Mingsheng Ying and Yangjia Li.
\newblock Reasoning about parallel quantum programs.
\newblock {\em ArXiv}, abs/1810.11334, 2018.

\bibitem{Unruh2019QuantumHL}
Dominique Unruh.
\newblock Quantum hoare logic with ghost variables.
\newblock {\em 2019 34th Annual ACM/IEEE Symposium on Logic in Computer Science
  (LICS)}, pages 1--13, 2019.

\bibitem{DHondt2006QuantumWP}
Ellie D'Hondt and P.~Panangaden.
\newblock Quantum weakest preconditions.
\newblock {\em Mathematical Structures in Computer Science}, 16:429 -- 451,
  2006.

\bibitem{Selinger2004TowardsAQ}
Peter Selinger.
\newblock Towards a quantum programming language.
\newblock {\em Mathematical Structures in Computer Science}, 14:527 -- 586,
  2004.

\bibitem{Feng2020QuantumHL}
Yuan Feng and Mingsheng Ying.
\newblock Quantum hoare logic with classical variables.
\newblock {\em ArXiv}, abs/2008.06812, 2020.

\bibitem{Feng2021VerificationOD}
Yuan Feng, Sanjiang Li, and Mingsheng Ying.
\newblock Verification of distributed quantum programs.
\newblock {\em ArXiv}, abs/2104.14796, 2021.

\bibitem{Yu2021QuantumAI}
Nengkun Yu and Jens Palsberg.
\newblock Quantum abstract interpretation.
\newblock {\em Proceedings of the 42nd ACM SIGPLAN International Conference on
  Programming Language Design and Implementation}, 2021.

\bibitem{Rand2021StaticAO}
Robert Rand, Aarthi Sundaram, Kartik Singhal, and Brad Lackey.
\newblock Static analysis of quantum programs via gottesman types.
\newblock {\em ArXiv}, abs/2101.08939, 2021.

\bibitem{Rand2021ExtendingGT}
Robert Rand, Kartik Singhal, Brad Lackey, and Microsoft.
\newblock Extending gottesman types beyond the clifford group.
\newblock In {\em arXiv}, 2021.

\bibitem{Rand2021GottesmanTF}
Robert Rand, Aarthi Sundaram, Kartik Singhal, and Brad Lackey.
\newblock Gottesman types for quantum programs.
\newblock {\em ArXiv}, abs/2109.02197, 2021.

\bibitem{Birkhoff1936TheLO}
Garrett Birkhoff and John von Neumann.
\newblock The logic of quantum mechanics.
\newblock {\em Annals of Mathematics}, 37:1--26, 1936.

\bibitem{Winskel1993TheFS}
Glynn Winskel.
\newblock The formal semantics of programming languages - an introduction.
\newblock In {\em Foundation of computing series}, 1993.

\bibitem{Wilde2013QuantumIT}
Mark~M. Wilde.
\newblock Quantum information theory.
\newblock In {\em Quantum Information Theory}, 2013.

\bibitem{Horsman2012SurfaceCQ}
Clare Horsman, Austin~G. Fowler, Simon~J. Devitt, and Rodney~Van Meter.
\newblock Surface code quantum computing by lattice surgery.
\newblock {\em New Journal of Physics}, 14:123011, 2012.

\bibitem{Fowler2015MinimumWP}
Austin~G. Fowler.
\newblock Minimum weight perfect matching of fault-tolerant topological quantum
  error correction in average o(1) parallel time.
\newblock {\em Quantum Inf. Comput.}, 15:145--158, 2015.

\bibitem{ChenSatzingerAtalayaKorotkovDunsworthSankQui2}
Zijun Kevin J. Juan Alexander N. Andrew Daniel Chris Mat Chen Satzinger Atalaya
  Korotkov Dunsworth~Sank Qui, Zijun Chen, Kevin~J Satzinger, Juan Atalaya,
  Alexander~N. Korotkov, Andrew Dunsworth, Daniel~Thomas Sank, Chris Quintana,
  Matthew~J. McEwen, Rami Barends, Paul Klimov, Sabrina Hong, Cody Jones, Andre
  Petukhov, Dvir Kafri, Sean Demura, Brian Burkett, Craig Gidney, Austin~G.
  Fowler, Alexandru Paler, Harald Putterman, Igor~L. Aleiner, Frank Arute,
  Kunal Arya, Ryan Babbush, Joseph~C. Bardin, Andreas Bengtsson, Alexandre
  Bourassa, Mick Broughton, Bob~B. Buckley, David~A. Buell, Nicholas Bushnell,
  Benjamin Chiaro, Roberto Collins, William Courtney, Alan~R. Derk, Daniel
  Eppens, Catherine Erickson, Edward Farhi, Brooks Foxen, Marissa Giustina, Ami
  Greene, Jonathan~A. Gross, Matthew~P. Harrigan, Sean~D. Harrington, Jeremy~P.
  Hilton, Alan Ho, Trent Huang, William~J. Huggins, L~B Ioffe, Sergei~V.
  Isakov, Evan Jeffrey, Zhang Jiang, Kostyantyn Kechedzhi, Seon Kim, Alexei~Y.
  Kitaev, Fedor Kostritsa, David Landhuis, Pavel Laptev, Erik Lucero, Orion
  Martin, Jarrod~R. McClean, Trevor McCourt, Xiao Mi, Kevin~C. Miao, Masoud
  Mohseni, Shirin Montazeri, Wojciech Mruczkiewicz, Josh Mutus, Ofer Naaman,
  Matthew Neeley, Charles~J. Neill, Michael Newman, Murphy~Yuezhen Niu,
  Thomas~E. O’Brien, Alexander Opremcak, Eric~P. Ostby, B{\'a}lint Pat{\'o},
  N~Redd, Pedram Roushan, Nicholas~C Rubin, Vladimir Shvarts, Doug Strain,
  Marco Szalay, Matthew~D Trevithick, Benjamin Villalonga, Theodore White,
  Z.~Jamie Yao, P~Yeh, Juhwan Yoo, Adam Zalcman, Hartmut Neven, Sergio Boixo,
  Vadim~N. Smelyanskiy, Yu~Chen, Anthony Megrant, and Julian Kelly.
\newblock Exponential suppression of bit or phase errors with cyclic error
  correction.
\newblock {\em Nature}, 595:383 -- 387, 2021.

\end{thebibliography}

\newpage
\appendix
\onecolumn
\section{Appendix}

\subsection{Proof in {\assnname}}
\label{app:assn}

\implictrule*
\begin{proof} 
	For the first rule, note that  $(s_{e0}s_{e1})\rho = s_{e0}(s_{e1}\rho ) = s_{e0}\rho  = \rho$. 
	Also, $\forall \svar \in \sigma$, $s_{e0}s_{e1} \svar = s_{e0}\svar s_{e1} = \svar s_{e0}s_{e1}$. Thus, $(\rho, \sigma) \models s_{e0}s_{e1}$. $(\rho, \sigma) \models \lambda_0 s_{e0} + \lambda_1 s_{e1} $ can be proved similarly. \\
	For the second rule, note that $s_{e1}\rho = s_{e1}(s_{e0}\rho ) =
	(s_{e1}s_{e0})\rho = \rho$, and $\forall \svar \in \sigma$, $(\svar s_{e_1})s_{e_0} = s_{e_1}s_{e_0}\svar = (s_{e_1}\svar) s_{e_0}$. Since $s_{e_0}$ is not singular, we have $\svar s_{e_1}= s_{e_1}\svar$, thus $(\rho, \sigma) \models s_{e1}$. \\
	For the final rule, notice that $(a s_{e0} + bs_{e1}s_{e2})\rho = a s_{e0}\rho + bs_{e1}s_{e2}\rho = a s_{e0}\rho + bs_{e1}\rho = \rho$.\\
	Finally, it is easy to see in all these rules, the stabilizer in $\sigma$ is commutable with the target stabilizer expressions. 
\end{proof}

\boolassn*
\begin{proof}
	We first prove the conjunction rule. Since $A_0\wedge A_1 \Rightarrow A_0$, $A_0\wedge A_1 \Rightarrow A_1$, then by the consequence rule, we have $\{A_0\wedge A_1\}\prog\{B_0\}$ and $\{A_0\wedge A_1\}\prog\{B_1\}$, i.e., $\{A_0\wedge A_1\}\prog\{B_0\wedge B_1\}$. For the disjunction rule, notice that if $(\rho,\sigma)\models (A_0\vee A_1)$, then either $(\rho,\sigma)\models A_0$ or $(\rho,\sigma)\models A_1$. 
	Finally,
	$\{I\}\prog\{I\}$ always holds since any state $(\rho,\sigma)$ satisfies $I$. $\{0\}\prog\{B\}$ is true because $(\rho,\sigma)\models 0 \Rightarrow \denot{P}(\rho,\sigma)\models B$.
\end{proof}

\decodecorrect*
\begin{proof}
First, any valid correction function will project the state into one quiescent state of the QEC code. It's the definition of QEC code error correction. \\
Second, note that the assertion $A \wedge A_S$  represents error-free states in the QEC code, thus any valid \textbf{correct} protocol will place a \textbf{skip} statement for correcting the error-free state. 
Assume the \textbf{correct} protocol is implemented based on the look-up table, 
since $A \wedge A_S \wedge -\svar_i = 0$,
then by the condition rule and $\{0\}\prog\{A \wedge A_S\}$ (Lemma~\ref{lem:bool-assn}), we directly get $\{A \wedge A_S\}\textbf{correct}(\svar_0, \svar_1, \cdots) \{A \wedge A_S\}$.
\end{proof}

\soundness*
\begin{proof}
\setcounter{cnt}{0}
(\showcnt) Skip. Note than the skip rule does not change the program state.\\
(\showcnt) Initialization.
By the definition of the substitution rule, $(\rho, \sigma) \models A[\ket{0}/\rho]$ is equivalent to $(\rho_0^q, \sigma) \models A$, then the state after initialization $(\rho',\sigma) = (\rho_0^q,\sigma)$ also satisfies $A$. \\
(\showcnt) Unitary. Note that $(UAU^\dagger)(U\rho U^\dagger) = U A \rho U^\dagger$, so  \\$(UAU^\dagger)(U\rho U^\dagger) = (U\rho U^\dagger) \Leftrightarrow A\rho = \rho$.
\\
(\showcnt) Assignment. For the first rule, assume $(\rho, \sigma) \models A$, then $A$ is commutable with $\svar$. Then, $A$ is also commutable with $-\svar$. Thus, $(\rho,\sigma') = (\rho,\sigma[-\svar/\svar])$ also satisfies $A$. \\
The second rule is obviously correct, but it limits the selection of $A$. \\
(\showcnt) Sequencing. Assume $(\rho, \sigma) \models A$, then $\denot{P_0}(\rho, \sigma)\models C$ by the hypothesis $\{A\}P_0\{C\}$. On the other hand $\denot{P_0;P_1}(\rho,\sigma) = \denot{P_1}(\denot{P_0}(\rho, \sigma)) \models B$ by the hypothesis $\{C\}P_1\{B\}$.
\\
(\showcnt) Condition. 
First, $\sum A_i M_i$ is a legal stabilizer expression because $M_1 = \frac{I+\svar}{2}$ and $M_0 = \frac{I-\svar}{2}$ are legal stabilizer expressions. 
Assume $(\rho, \sigma) \models A$, then $\sigma(\svar)$ is commutable with $A$, so is $M_1$ and $M_0$. Thus, $A M_1\rho M_1^\dagger = M_1 A\rho M_1^\dagger = M_1\rho M_1^\dagger$. Likewise, we have $A M_0\rho M_0^\dagger = M_0\rho M_0^\dagger$. Let $A = \sum_i A_i M_i$, then $A M_1\rho M_1^\dagger = A_1M_1(M_1\rho M_1^\dagger) + A_0M_0(M_1\rho M_1^\dagger) = A_1(M_1\rho M_1^\dagger)$ since $M_1M_1 = M_1$, $M_1M_0 = 0$. Thus, we have $A_1M_1\rho M_1^\dagger = M_1\rho M_1^\dagger$. 
Since $\svar$ is commutable with both $A_1$ and $A_0$, we have $(M_1\rho M_1^\dagger,\sigma) \models A_1$ and $(M_0\rho M_0^\dagger,\sigma[-\svar/\svar]) \models A_0$. Also, $(M_1\rho M_1^\dagger,\sigma) \models \svar$ and $(M_0\rho M_0^\dagger,\sigma[-\svar/\svar]) \models -\svar$. Thus, if $(\rho,\sigma)\models \sum_i A_iM_i$, we have $(M_1\rho M_1^\dagger,\sigma) \models A_1\wedge \svar$ and $(M_0\rho M_0^\dagger,\sigma) \models A_0\wedge -\svar$.
Since $\{A_1 \wedge \svar\}P_1\{B\}$ and $\{A_0 \wedge -\svar\}P_0\{B\}$, by the semantics of the condition statement, we have $\{\sum A_i M_i\}\textbf{if}\,M[\svar, \bar{q}]\,\textbf{then}\, P_0\,\textbf{else}\, P_1\,\textbf{end}\{B\}$. \\
(\showcnt) While. The proof of the While rule is quite similar to that of the Condition rule. $\sum A_iM_i$ is called the invariant of the loop. If the execution enters the loop body, then by $\{ A_1 \wedge \svar \}P_0\{\sum A_i M_i\}$, we still have $(\rho, \sigma) \models \sum A_i M_i$ for the next loop. So, when the while loop terminates, we always have $(\rho, \sigma) \models  A_0 \wedge -\svar$. \\
To prove the While rule more formally, we only need to show the partial correctness holds for $\textbf{while}^{(k)}$, as $\textbf{while}$ is the disjunction of $\textbf{while}^{(k)}$, $k=0,1,2,\cdots$.
\\
(\showcnt) Consequence. Assume $(\rho,\sigma) \models A$, then $ (\rho,\sigma) \models A'$ by $\{A\Rightarrow A'\}$. Since $\{A'\}\prog\{B'\}$, we have $\denot{P}(\rho,\sigma) \models B'$. Then $\denot{P}(\rho,\sigma) \models B$ by $B'\Rightarrow B$. Thus, $\{A\}\prog\{B\}$.
\end{proof}

\subsection{Verification of Quantum Repetition Code}
\label{app:rep}
\repcnot*
\begin{proof}
First, for control qubit $a$ and target qubit $b$, $\text{CNOT}_{ab} = \frac{1}{2}(I + X_b + Z_a - Z_aX_b)$. Then\\
(1) $\{Z_{L0} I_{L1}\}\prog\{Z_{L0}I_{L1}\}$. Note that both $\text{CNOT}_{03}$, $\text{CNOT}_{14}$ and $\text{CNOT}_{25}$ are commutable with $Z_{L0}$, so $\text{CNOT}_{03}Z_{L0}\text{CNOT}_{03} = Z_{L0}\text{CNOT}_{03}\text{CNOT}_{03} = Z_{L0}$, , $\text{CNOT}_{14}Z_{L0}\text{CNOT}_{14}= Z_{L0}$ and $\text{CNOT}_{25}Z_{L0}\text{CNOT}_{25}= Z_{L0}$. \\
(2) $\{X_{L0} I_{L1}\}\prog\{X_{L0}X_{L1}\}$. Note that $\text{CNOT }_{03}X_{L0}\text{CNOT}_{03} = X_{L0}X_3$. Since $X_3$ is commutable with $\text{CNOT}_{14}$, $\text{CNOT}_{14}X_{L0} X_3 \text{CNOT}_{14} = (\text{CNOT }_{14}X_{L0}\text{CNOT}_{14})X_3 = X_{L0}X_4X_3$. Finally, $\text{CNOT}_{25}X_{L0}X_4X_3\text{CNOT}_{25} = X_{L0}X_5X_4X_3 = X_{L0}X_{L1}$. \\
(3) $\{I_{L0} X_{L1}\}\prog\{I_{L0}X_{L1}\}$. Note that both $\text{CNOT}_{03}$, $\text{CNOT}_{14}$ and $\text{CNOT}_{25}$ are commutable with $X_{L1}$. \\
(4) $\{I_{L0} Z_{L1}\}\prog\{Z_{L0}Z_{L1}\}$. Note that  $\text{CNOT}_{03}Z_{L1}\text{CNOT}_{03} = Z_{0}Z_{L1}$, $\text{CNOT}_{14}Z_{0}Z_{L1}\text{CNOT}_{14} = Z_{0}\text{CNOT}_{14}Z_{L1}\text{CNOT}_{14} = Z_{0}Z_{1}Z_{L1}$, and $\text{CNOT}_{25}Z_{0}Z_{1}Z_{L1}\text{CNOT}_{25} = Z_{0}Z_{1}\text{CNOT}_{25}Z_{L1}\text{CNOT}_{25} = Z_{0}Z_{1}Z_{2}Z_{L1} = Z_{L0}Z_{L1}$. \\
Finally, We can prove that $\{Z_0Z_1\}\prog\{Z_0Z_1\}$, $\{Z_1Z_2\}\prog\{Z_1Z_2\}$, $\{Z_3Z_4\}\prog\{Z_0Z_1Z_3Z_4\}$, $\{Z_4Z_5\}\prog\{Z_1Z_2Z_4Z_5\}$ in a similar way. Combing all these facts, we can prove the desired partial correctness on the logical CNOT gate.
\end{proof}

\subsection{Verification of the Surface Code}
\label{app:surf}

\begin{program}[Initialize $\ket{0_L}$]
For the initialization operation in the  figure below, which initializes an X-cut logical qubit to $\ket{0_L}$,\\
\includegraphics[width=0.5\textwidth]{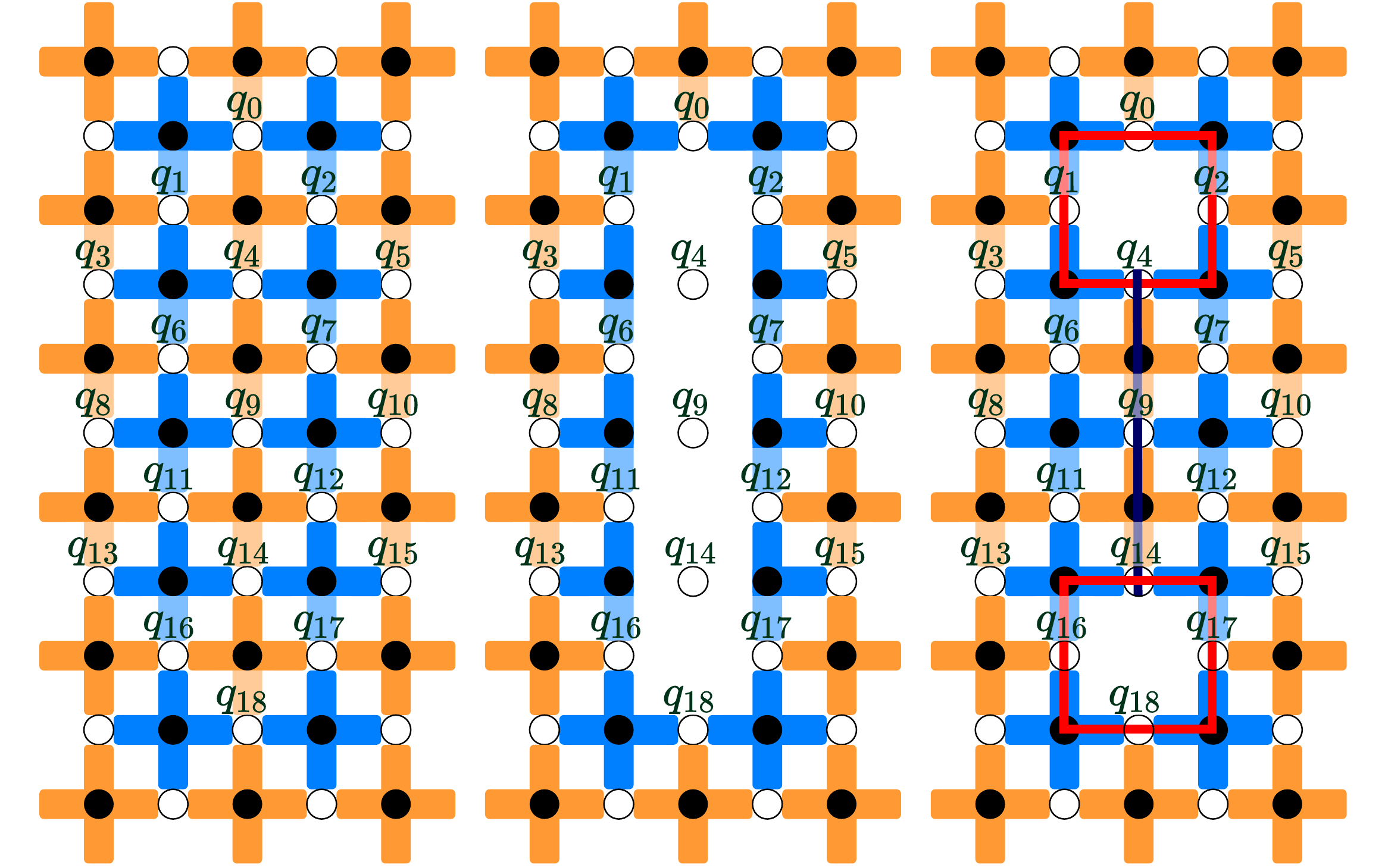}\\
\label{prog:initzero}
we have $\prog \Coloneqq
\bar{q} \coloneqq \ket{0};
\svar_0 \coloneqq X_0X_1X_2X_4;
\svar_1 \coloneqq X_4X_6X_7X_9;
\svar_2 \coloneqq X_9X_{11}X_{12}X_{14};
\svar_3 \coloneqq X_{14}X_{16}X_{17}X_{18}; \\
\svar_4 \coloneqq Z_1Z_3Z_4Z_6;
\svar_5 \coloneqq Z_2Z_4Z_5Z_7;
\svar_6 \coloneqq Z_{6}Z_{8}Z_{9}Z_{11};
\svar_7 \coloneqq Z_{11}Z_{13}Z_{14}Z_{16};
\svar_8 \coloneqq Z_{7}Z_{9}Z_{10}Z_{12};
\svar_{9} \coloneqq Z_{12}Z_{14}Z_{15}Z_{17};
\svar_{10} \coloneqq \cdots \\
\textbf{correct}(\svar_0,\svar_1,\cdots);
\svar_0 \coloneqq I; 
\svar_1 \coloneqq I;
\svar_2 \coloneqq I; 
\svar_3 \coloneqq I;
\svar_4 \coloneqq Z_1Z_3Z_6; 
\svar_5 \coloneqq Z_2Z_5Z_7; 
\svar_6 \coloneqq Z_{6}Z_{8}Z_{11}; 
\svar_7 \coloneqq Z_{11}Z_{13}Z_{16}; \\
\svar_8 \coloneqq Z_{7}Z_{10}Z_{12}; 
\svar_{9} \coloneqq Z_{12}Z_{15}Z_{17}; 
\svar_{10} \coloneqq \cdots \\
\svar_{\stabnum +1} \coloneqq Z_{4}; 
\svar_{\stabnum +2} \coloneqq Z_{9};
\svar_{\stabnum +3} \coloneqq Z_{14}; \\
\text{// set }q_4, q_9, q_{14}\text{ to }\ket{0}; \\
\qif{\svar_{\stabnum +1}, q_4 }{\textbf{skip}}{\bar{q} \coloneqq X_4X_6X_7X_9\bar{q}; \svar_{\stabnum +1} \coloneqq Z_{4}} %
\\
\textbf{if}\ M[\svar_{\stabnum +2}, q_9]\ \textbf{then}
\myquad \textbf{skip}\ 
\textbf{else}
\myquad \bar{q} \coloneqq X_9X_{11}X_{12}X_{14}\bar{q}; 
\svar_{\stabnum +2} \coloneqq Z_{9}\ 
\textbf{end} \\
\qif{\svar_{\stabnum +3}, q_{14}}{\textbf{skip}}{\bar{q} \coloneqq X_{14}X_{16}X_{17}X_{18}\bar{q}; \svar_{\stabnum +1} \coloneqq Z_{14}}; \\
\svar_1 \coloneqq X_4X_6X_7X_9;
\svar_2 \coloneqq X_9X_{11}X_{12}X_{14}$;
$\textbf{correct}(\svar_0,\svar_1,\cdots)$.
\end{program}

\begin{restatable}[Initialize $\ket{0_L}$]{proposition}{surfinitzerol}
For the program $\prog$ in Program~\ref{prog:initzero} which initializes a X-cut logical qubit to $\ket{0_L}$, \\
$\{I\}\prog\{Z_4Z_9Z_{14}\}$. Here $Z_4Z_9Z_{14}$ is the logical Z operator $Z_{L}$.
\end{restatable}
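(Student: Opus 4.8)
The plan is to verify the program line by line using the Hoare rules of Figure~\ref{fig:pca}, together with Proposition~\ref{prop:decodecorrect} for the two \textbf{correct} calls and the implication Lemma~\ref{lem:implicitrule} to discharge the side conditions. The overall structure mirrors the proof of the planar-code initialization (Proposition ``Initialize to $\ket{0}$'' in the repetition-code section): start from $\{I\}$, push it through the reset and the stabilizer assignments, invoke error correction to land in the common $+1$-eigenspace of all active X- and Z-type stabilizers, and then track how the subsequent reconfiguration of the stabilizer variables and the three measurement-and-fix blocks produce the logical operator $Z_4Z_9Z_{14}$.

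First I would handle the prefix $\bar q \coloneqq \ket{0}; \svar_0 \coloneqq X_0X_1X_2X_4; \dots$. By the Initialization rule (in the $n$-qubit form) we get $\{I\}\bar q\coloneqq\ket{0}\{\bigwedge_j Z_j\}$. Each Assignment $\svar_k \coloneqq s$ with $s$ a product of $Z$'s commutes with $\bigwedge_j Z_j$, so the precondition is preserved; by Lemma~\ref{lem:implicitrule}(1) the state $\bigwedge_j Z_j$ implies every Z-type stabilizer $s_4,\dots$ as well as the $Z$-only logical operator data, but \emph{not} the X-type stabilizers $X_0X_1X_2X_4$ etc. Here I would apply Proposition~\ref{prop:decodecorrect}: after $\textbf{correct}(\svar_0,\svar_1,\dots)$ the state satisfies $A_S = \bigwedge_{s_i\in S}s_i$, the conjunction of all active stabilizers, which now includes all the X-type ones. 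I would then push this $A_S$ through $\svar_0\coloneqq I;\dots;\svar_3\coloneqq I$ (weakening: turning off stabilizers only drops conjuncts), through the truncation of the boundary Z-stabilizers $\svar_4\coloneqq Z_1Z_3Z_6$ etc.\ (these new shorter operators are implied via Lemma~\ref{lem:implicitrule}(2), since the product of the truncated stabilizer with the single-qubit $Z_4,Z_9,Z_{14}$ gives the old four-body stabilizer), and through the fresh assignments $\svar_{w+1}\coloneqq Z_4$, $\svar_{w+2}\coloneqq Z_9$, $\svar_{w+3}\coloneqq Z_{14}$.

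Next comes the heart of the argument: the three \textbf{if} blocks that measure $Z_4$, $Z_9$, $Z_{14}$ and, on the $-1$ outcome, apply the corresponding X-type stabilizer generator ($X_4X_6X_7X_9$, etc.) to flip the qubit back. For each block I would use the Condition rule with $A_1=A_0=A$ (the derived Lemma after the Condition rule): on the ``then'' branch $\textbf{skip}$ preserves $A\wedge\svar_{w+i}$; on the ``else'' branch I must check that conjugating $A$ by the X-type generator and the sign-flip $\svar_{w+i}\coloneqq Z$ gives back $A\wedge\svar_{w+i}$ — this is the standard fact that applying $X_4X_6X_7X_9$ anticommutes with $Z_4$ but commutes with the surrounding stabilizers, so it maps the $-1$ eigenspace of $Z_4$ into the $+1$ eigenspace while fixing $A_S$. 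After the three blocks the state satisfies $Z_4\wedge Z_9\wedge Z_{14}\wedge(\text{all other active stabilizers})$, hence by Lemma~\ref{lem:implicitrule}(1) it satisfies $Z_4Z_9Z_{14}$. The trailing $\svar_1\coloneqq X_4X_6X_7X_9;\svar_2\coloneqq X_9X_{11}X_{12}X_{14}$ re-enables two X stabilizers that commute with $Z_4Z_9Z_{14}$, so the Assignment rule keeps the postcondition, and the final $\textbf{correct}$ call preserves it by the second clause of Proposition~\ref{prop:decodecorrect} (the error-free invariant is stable under correction). Chaining all these Hoare triples with the Sequencing rule and one application of the Consequence rule yields $\{I\}\prog\{Z_4Z_9Z_{14}\}$.

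\textbf{Main obstacle.} The routine parts are the commutation bookkeeping; the delicate step is the correctness of the three measure-and-fix blocks, specifically verifying that the chosen X-type generator in each ``else'' branch genuinely restores $A_S$ rather than introducing an uncorrected sign on some other stabilizer. This requires knowing the precise incidence structure of the distance-$3$ layout in the figure — which qubits each stabilizer touches — and checking that $X_4X_6X_7X_9$ commutes with every active Z-type stabilizer except those already truncated away, and similarly for the other two generators. I expect this to be where the proof's real content lies; once the geometry is pinned down, each branch reduces to a one-line Pauli commutation check fed into the Condition rule, and the rest is mechanical application of Sequencing and Consequence.
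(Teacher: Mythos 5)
Your overall route is the same as the paper's: reset to $\{\bigwedge_j Z_j\}$, invoke Proposition~\ref{prop:decodecorrect} to land in $A_S$, forward the assertion through the stabilizer reassignments, analyse the three measure-and-fix blocks with the Condition rule, collapse $Z_4\wedge Z_9\wedge Z_{14}$ to $Z_4Z_9Z_{14}$ via Lemma~\ref{lem:implicitrule}, and finish with Consequence. So the skeleton is right. But there is one genuine gap, and it is not the one you flag as the "main obstacle."

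The problem is what happens to the X-type facts $X_0X_1X_2X_4,\ X_4X_6X_7X_9,\ X_9X_{11}X_{12}X_{14},\ X_{14}X_{16}X_{17}X_{18}$ that the first \textbf{correct} call puts into the assertion. Each of these anticommutes with one of $Z_4$, $Z_9$, $Z_{14}$. Under the Assignment rule of Figure~\ref{fig:pca}, $\svar_{\stabnum+1}\coloneqq Z_4$ only preserves an assertion that \emph{commutes} with $Z_4$ (otherwise the postcondition degrades to $I$), and in the Condition rule the branch preconditions $A\wedge Z_4$ and $A\wedge -Z_4$ are both $0$ whenever $A$ contains an anticommuting conjunct (Proposition~\ref{lem:bool-assn}). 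So you cannot simply "push $A_S$ through the fresh assignments": as written, your invariant dies at $\svar_{\stabnum+1}\coloneqq Z_4$. The paper's resolution — and the actual content of its proof — is to apply the Consequence rule \emph{before} each such assignment, weakening $\svar_0\wedge\svar_1$ to the product $X_0X_1X_2X_6X_7X_9$ (the $X_4$'s cancel), then absorbing $\svar_2$ and $\svar_3$ similarly, so that the surviving X-type fact $X_0X_1X_2X_6X_7X_{11}X_{12}X_{16}X_{17}X_{18}$ avoids qubits $4,9,14$ entirely and commutes with everything downstream. (A cruder repair — discarding all X-type conjuncts after the first \textbf{correct} — would still reach the stated postcondition $Z_4Z_9Z_{14}$, but your sketch does neither explicitly, and your else-branch analysis, which speaks of "restoring $A_S$", suggests you intend to carry the full conjunction, which is inconsistent.) By contrast, the commutation check you single out as delicate — that $X_4X_6X_7X_9$ commutes with the active truncated Z-stabilizers — is the mechanical part; once the X-facts have been combined into qubit-$4,9,14$-free products, each \textbf{if} block is handled exactly as you describe.
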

\begin{proof}
By Proposition~\ref{prop:decodecorrect}, after \textbf{correct} function, $(\rho,\sigma)\models (\svar_0 \wedge \svar_1 \wedge \svar_2 \cdots)$. The following stabilizer assignments which turn off X-stabilizers will just forward the precondition. \\
For simplicity, assume there are {\stabnum} stabilizers in the surface code array. let $\Lambda = \{0,\cdots,w-1\}$, then $(\svar_0 \wedge \svar_1 \wedge \svar_2 \cdots) = \wedge_{i\in \Lambda}\svar_i$. Since $\svar_0 \wedge \svar_1 \Rightarrow X_0X_1X_2X_6X_7X_9$ and $X_0X_1X_2X_6X_7X_9$ is commutable with $Z_4$, $\{\wedge_{i\in \Lambda}\svar_i\}\svar_{\stabnum +1} \coloneqq Z_{4}\{(\wedge_{i\in \Lambda\setminus \{0,1\}}\svar_i) \wedge X_0X_1X_2X_6X_7X_9\}$. Likewise, we know that after $\svar_{\stabnum +2} \coloneqq Z_{9}$, the precondition will become \\
$\{(\wedge_{i\in \Lambda\setminus \{0,1,2,3\}}\svar_i) \wedge X_0X_1X_2X_6X_7X_{11}X_{12}X_{16}X_{17}X_{18}\}$. \\
Note that $(\wedge_{i\in \Lambda\setminus \{0,1,2,3\}}\svar_i) \wedge X_0X_1X_2X_6X_7X_{11}X_{12}X_{16}X_{17}X_{18} \Rightarrow X_0X_1X_2X_6X_7X_{11}X_{12}X_{16}X_{17}X_{18} $. \\
Let $A = X_0X_1X_2X_6X_7X_{11}X_{12}X_{16}X_{17}X_{18}$, \\
$c = \qif{\svar_{\stabnum +1}, q_4}{\textbf{skip}}{\bar{q} \coloneqq X_4X_6X_7X_9\bar{q}; \svar_{\stabnum +1} \coloneqq Z_{4}}$.
It's easy to see that $\{A \wedge Z_4\} \textbf{skip} \{A \wedge Z_4\}$, and $\{A \wedge -Z_4\} q_4 \coloneqq Xq_4; \svar_{\stabnum +1}\coloneqq Z_4 \{A \wedge Z_4\}$. Thus, $\{A\}c\{A \wedge Z_4\}$. Then, after reset $q_{14}$ to $\ket{0}$, the precondition will become: \\
$\{A \wedge Z_4 \wedge Z_9 \wedge Z_{14}\}$.
Again, the following stabilizer assignments will just forward the precondition. By the implication rule, we have that $A \wedge Z_4 \wedge Z_9 \wedge Z_{14} \Rightarrow A \wedge Z_4Z_9Z_{14}$. Since $Z_4Z_9Z_{14}$ and all assertions in $A$ are commutable with stabilizers $\svar_0, \svar_1, \cdots$, we have $\{ A \wedge Z_4Z_9Z_{14}\} \textbf{correct}(\svar_0,\svar_1,\cdots) \{\wedge_{i\in \Lambda} \svar_i \wedge Z_4Z_9Z_{14}  \wedge X_0X_1X_2X_6X_7X_{11}X_{12}X_{16}X_{17}X_{18} \}$. Then by applying the consequence rule, we get $\{I\}\prog\{Z_4Z_9Z_{14}\}$.
\end{proof}

\begin{program}[Logical X gate] For the logical X gate $X_L$ in the Figure below:\\
\includegraphics[width=0.2\textwidth]{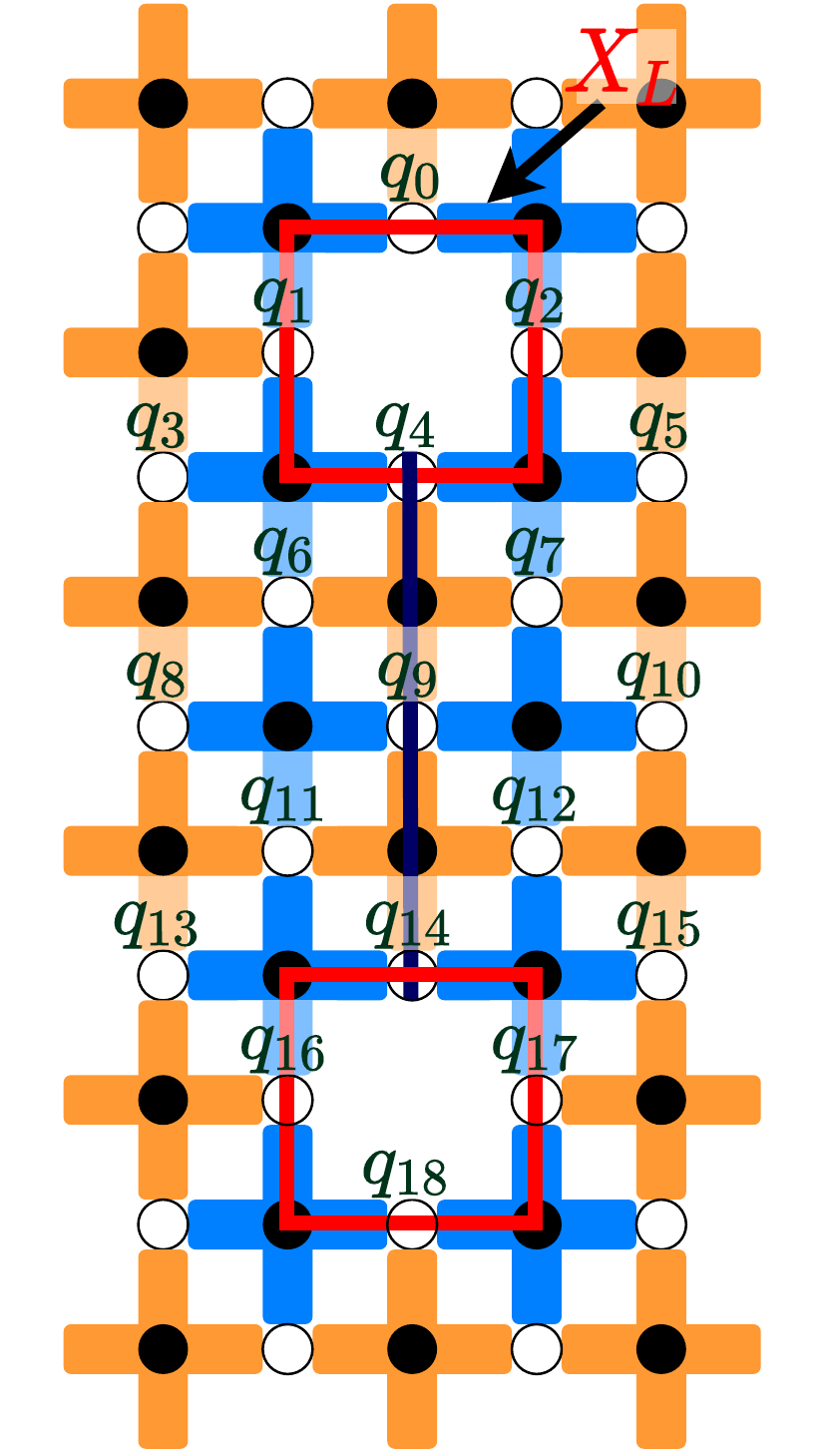} \\
we have $\prog \Coloneqq q_0q_1q_2q_4 \coloneqq X_{0}X_{1}X_{2}X_{4}q_0q_1q_2q_4$.
\end{program}

\begin{proposition}[Logical X gate]
For program $\prog$ in Figure~\ref{fig:surfcode}(d), we have  $\{Z_L\}\prog\{-Z_L\}$ and $\{-Z_L\}\prog\{Z_L\}$, \\
where $Z_L = Z_4Z_9Z_{14}$.
\end{proposition}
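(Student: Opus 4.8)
The plan is to apply the Unitary rule of Figure~\ref{fig:pca} directly, since $\prog$ is the single statement $q_0q_1q_2q_4 \coloneqq X_0X_1X_2X_4\,q_0q_1q_2q_4$. Taking $U = X_0X_1X_2X_4$ and precondition $A = Z_L = Z_4Z_9Z_{14}$, the rule yields postcondition $U Z_L U^\dagger$, so the whole argument collapses to one Pauli-string commutation computation. Concretely, I would note that $X_0, X_1, X_2$ act on qubits disjoint from $\{4,9,14\}$ and hence commute with $Z_4Z_9Z_{14}$, whereas $X_4$ anticommutes with $Z_4$ and commutes with $Z_9, Z_{14}$; therefore $X_4 Z_L X_4 = -Z_L$ and thus $U Z_L U^\dagger = -Z_L$, giving $\{Z_L\}\prog\{-Z_L\}$. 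Re-running the identical computation with precondition $-Z_L$ gives $U(-Z_L)U^\dagger = Z_L$, hence $\{-Z_L\}\prog\{Z_L\}$.

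One point worth stating explicitly, though routine, is the side condition on $\sigma$ built into $(\rho,\sigma)\models s_e$: the postcondition must still be commutable with every active stabilizer variable so that the Unitary rule fires in its non-degenerate form rather than collapsing to $\{I\}$. This holds because $Z_L$ is a logical operator of the X-cut patch and commutes with all active stabilizers (the $A_S$ suppressed from the notation in this section), and $-Z_L$ does so as well. I would also add a sentence explaining why the two triples certify the logical X gate: $\ket{0_L}$ and $\ket{1_L}$ are precisely the $+1$ and $-1$ eigenstates of $Z_L$, so the pair of triples says $X_L$ interchanges the logical basis states, and because $\rho \mapsto U\rho U^\dagger$ is linear an arbitrary logical input decomposes accordingly.

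There is essentially no obstacle here; the only genuine content is the sign bookkeeping $X_4 Z_4 X_4 = -Z_4$ and the mild check of the $\sigma$-commutation side condition. I would remark that this argument parallels the logical X proof for the repetition code, where the corresponding identity is $X_0X_1X_2\,Z_L\,X_0X_1X_2 = -Z_L$.
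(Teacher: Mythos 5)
Your proposal is correct and takes essentially the same route as the paper, whose entire proof is the one-line observation that $(X_L)(Z_L)(X_L)^\dagger = -Z_L$; your anticommutation bookkeeping ($X_4 Z_4 X_4 = -Z_4$, all other factors commuting) is exactly how that identity is verified. The extra remarks about the $\sigma$-commutation side condition and the linearity argument are sound but go beyond what the paper records.
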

\begin{proof}
	Notice that $(X_L)(Z_L)(X_L)^\dagger = -Z_L$.
\end{proof}

\statestabilizer*
\begin{proof}
For the first part, we can get $a = \frac{\alpha^2 - \beta^2}{\alpha^2 + \beta^2}$ and $b = \frac{2\alpha\beta}{\alpha^2 + \beta^2}$ simply by solving the equation $(a Z_L + b X_L)\ket{\psi} = \ket{\psi}$. \\
For the second part, assume $\ket{\psi_0} = \alpha_0 Z_L + \beta_0 X_L$ and $\ket{\psi_1} = \alpha_1 Z_L + \beta_1 X_L$, if there is a $aZ_L + bX_L$ s.t. $(aZ_L + bX_L)\ket{\psi_0} = \ket{\psi_0}$ and $(aZ_L + bX_L)\ket{\psi_1} = \ket{\psi_1}$. Then, we have $a = (\frac{\alpha_0^2 - \beta_0^2}{\alpha_0^2 + \beta_0^2} = (\frac{\alpha_1^2 - \beta_1^2}{\alpha_1^2 + \beta_1^2}$, which is equivalent to $1 - \frac{2}{1 + (\frac{\alpha_0}{\beta_0})^2} = 1 - \frac{2}{1 + (\frac{\alpha_1}{\beta_1})^2}$. Thus, $(\frac{\alpha_0}{\beta_0})^2 = (\frac{\alpha_1}{\beta_1})^2$. On the other hand, $b = \frac{2\alpha_0\beta_0}{\alpha_0^2 + \beta_0^2} = \frac{2\alpha_1\beta_1}{\alpha_1^2 + \beta_1^2}$, which is equivalent to $\frac{\frac{\alpha_0}{\beta_0}}{1 + (\frac{\alpha_0}{\beta_0})^2} = \frac{\frac{\alpha_1}{\beta_1}}{1 + (\frac{\alpha_1}{\beta_1})^2}$. Thus, $\frac{\alpha_0}{\beta_0} = \frac{\alpha_1}{\beta_1}$, i.e., $\ket{\psi_0} = \ket{\psi_1}$ up to a global phase.
\end{proof}
\surfvqmov*
\begin{proof}
After the first \textbf{correction} function, the precondition is transformed into: $(a Z_L + b X_L)\wedge_i \svar_i$.
The three following stabilizer assignments will forward the precondition. Then by the implication rule, $(a Z_L + b X_L)\wedge_i \svar_i \Rightarrow (aZ_L + bX_{2}X_{3}X_{4}X_{8}X_{9}X_{10})\wedge_{i \ne 1}\svar_i$. so for the next stabilizer assignment $\svar_{\stabnum +1} = Z_6$, precondition $(aZ_L + bX_{2}X_{3}X_{4}X_{8}X_{9}X_{10}) \wedge_{i \ne 1}\svar_i$ will be forwarded. Note that $(aZ_L + b X_{2} X_{3} X_{4} X_{8} X_{9}X_{10}) \wedge_{i \ne 1}\svar_i \Rightarrow aZ_L + b X_{2} X_{3} X_{4} X_{8} X_{9}X_{10}$,
let $A = aZ_L + b X_{2} X_{3} X_{4} X_{8} X_{9}X_{10}$, \\
$c = \qif{\svar_{\stabnum +1}, \bar{q}}{\textbf{skip}}{\bar{q}\coloneqq X_6X_{8}X_{9}X_{10}\bar{q}; \svar_{\stabnum +1}=Z_6}$. \\
For the if statement, $\{A \wedge \svar_{w+1}\}\textbf{skip}\{A \wedge \svar_{w+1}\}$ and $\{A \wedge -\svar_{w+1}\}\bar{q}\coloneqq X_6X_{8}X_{9}X_{10}\bar{q}; \svar_{\stabnum +1}=Z_6\{A \wedge \svar_{w+1}\}$, then $\{A\}c\{A \wedge \svar_{w+1} \}$.
By implication rule, $A \wedge \svar_{w+1} \Rightarrow aZ_LZ_6 + bX_{2} X_{3} X_{4} X_{8} X_{9}X_{10}$. The next three stabilizer assignment will forward $aZ_LZ_6 + bX_{2} X_{3} X_{4} X_{8} X_{9}X_{10}$. Then with the \textbf{correct} function and the consequence rule, we get that $\{aZ_L + bX_L\}\prog\{aZ_L' + bX'_L\}$, i.e., the logical state is not changed by the qubit moving operation.
\end{proof}

A braiding operation involves many data qubits, and at least 51 data qubits will be referenced in the problem. To simplify the program, we will use the qubit moving as primitive. 
$\textbf{qmov}(X_L, X_L')$ means to move the defect that changes the logical X operation of a X-cut qubit from $X_L$ to $X_L'$, and $\textbf{qmov}(Z_L, Z_L')$ to move the defect that changes the logical Z operation of a Z-cut qubit from $Z_L$ to $Z_L'$.

\begin{program}[Braiding]\label{prog:surf-braid} In the figure below, we braid a Z-cut qubit with a X-cut qubit: \\
\includegraphics[width=0.43\textwidth]{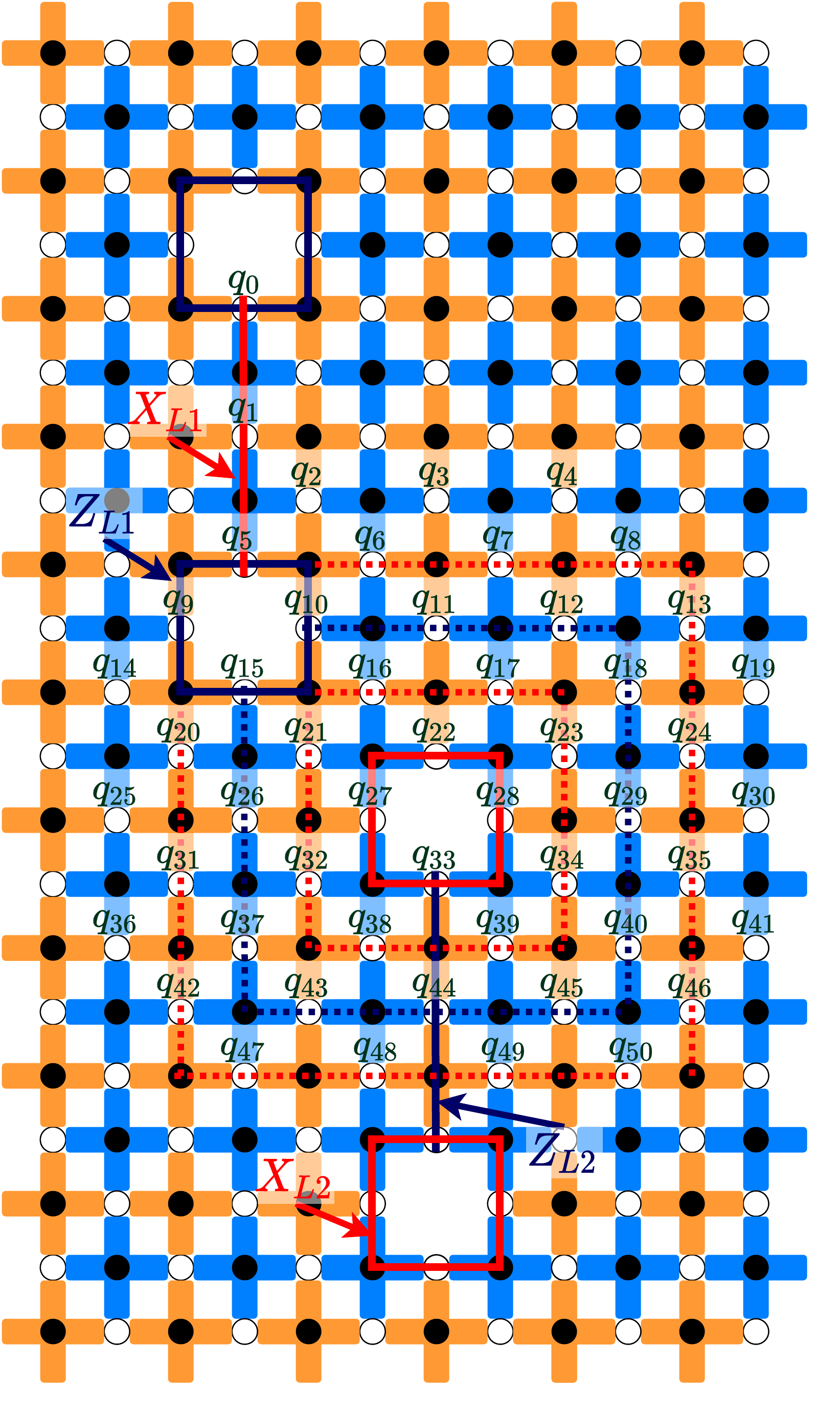}
\\
The associated program is $\prog \Coloneqq$ \\
$
\textbf{qmov}(Z_{5}Z_{9}Z_{10}Z_{15}, Z_{15}Z_{20}Z_{21}Z_{26}) \\
\textbf{qmov}(Z_{15}Z_{20}Z_{21}Z_{26}, Z_{26}Z_{31}Z_{32}Z_{37}) \\
\textbf{qmov}(Z_{26}Z_{31}Z_{32}Z_{37}, Z_{37}Z_{42}Z_{43}Z_{47}) \\
\textbf{qmov}(Z_{37}Z_{42}Z_{43}Z_{47}, Z_{38}Z_{43}Z_{44}Z_{48}) \\
\textbf{qmov}(Z_{38}Z_{43}Z_{44}Z_{48}, Z_{39}Z_{44}Z_{45}Z_{49}) \\
\textbf{qmov}(Z_{39}Z_{44}Z_{45}Z_{49}, Z_{40}Z_{45}Z_{46}Z_{50}) \\
\textbf{qmov}(Z_{40}Z_{45}Z_{46}Z_{50}, Z_{29}Z_{34}Z_{35}Z_{40}) \\
\textbf{qmov}(Z_{29}Z_{34}Z_{35}Z_{40}, Z_{18}Z_{23}Z_{24}Z_{29}) \\
\textbf{qmov}(Z_{18}Z_{23}Z_{24}Z_{29}, Z_{8}Z_{12}Z_{13}Z_{18}) \\
\textbf{qmov}(Z_{8}Z_{12}Z_{13}Z_{18}, Z_{7}Z_{11}Z_{12}Z_{17}) \\
\textbf{qmov}(Z_{7}Z_{11}Z_{12}Z_{17}, Z_{6}Z_{10}Z_{11}Z_{16}) \\
\textbf{qmov}(Z_{6}Z_{10}Z_{11}Z_{16}, Z_{5}Z_{9}Z_{10}Z_{15})
$
\end{program}

According to Fowler, the verification of the braiding operation only need to focus on four configurations of logical states on a pair of logical qubits: $X_{L1}\otimes I_{L2}$, $I_{L1}\otimes X_{L2}$, $I_{L1}\otimes Z_{L2}$ and $Z_{L1}\otimes I_{L2}$.
\begin{restatable}[Braiding]{proposition}{surfbraid} 
For the program $\prog$ in Program~\ref{prog:surf-braid}, \\
$\{X_{L1} I_{L2}\}\prog\{X_{L1}X_{L2}\}$, $\{I_{L1} Z_{L2}\}\prog\{Z_{L1} Z_{L2}\}$, $\{I_{L1}X_{L2}\}\prog\{I_{L1} X_{L2}\}$ and $\{Z_{L1} I_{L2}\}\prog\{Z_{L1} I_{L2}\}$.
\end{restatable}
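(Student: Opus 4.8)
The plan is to reduce the braid program $\prog$ of Program~\ref{prog:surf-braid} to a composition of the already-verified qubit-moving primitive. Unlike the transversal CNOT of the repetition code (Proposition~\ref{prop:rep-cnot}), where $\mathrm{CNOT}$ can be expanded as a short sum of stabilizers and conjugated directly, the braid has no compact unitary description; instead I would chain the twelve $\textbf{qmov}$ calls with the Sequencing rule of Figure~\ref{fig:pca}, applying the Vertical qubit moving proposition (and its horizontal mirror, proved identically) to each call. The four claimed triples have preconditions $X_{L1}I_{L2}$, $I_{L1}X_{L2}$, $I_{L1}Z_{L2}$, $Z_{L1}I_{L2}$, which generate the logical Pauli group on the two-logical-qubit code space, so---exactly as in the four-precondition bookkeeping already used for the repetition-code CNOT---establishing each of these triples and then invoking linearity (Lemma~\ref{lem:implicitrule}(1)) pins down the whole logical channel of the braid.

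First I would dispatch the two ``inert'' triples $\{I_{L1}X_{L2}\}\prog\{I_{L1}X_{L2}\}$ and $\{Z_{L1}I_{L2}\}\prog\{Z_{L1}I_{L2}\}$. Here $Z_{L1}$ is the $Z$-chain joining the two defects of the X-cut qubit, and $X_{L2}$ is the $X$-loop around the stationary defect of the Z-cut qubit; in both cases the data qubits touched by each $\textbf{qmov}$ and the stabilizers it reconfigures commute with the operator in question (the $\textbf{qmov}$ trajectory never crosses $Z_{L1}$, and $X_{L2}$ is deformed only within its own homology class). Hence each $\textbf{qmov}$ forwards the predicate verbatim up to multiplication by enclosed stabilizers, which I absorb using Lemma~\ref{lem:implicitrule}(3); Sequencing and Consequence then close both cases.

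The substance lies in $\{X_{L1}I_{L2}\}\prog\{X_{L1}X_{L2}\}$ and $\{I_{L1}Z_{L2}\}\prog\{Z_{L1}Z_{L2}\}$. I would argue inductively along the twelve moves, maintaining the loop invariant that after the $j$-th $\textbf{qmov}$ the running predicate equals $X_{L1}$ times the product of the stabilizers swept out by the partial trajectory of the moved defect; the inductive step rewrites this predicate against the single stabilizer turned off and the single one turned on in that move, via Lemma~\ref{lem:implicitrule}(1),(3), exactly as the qubit-moving proof rewrites a deforming support. When the trajectory closes into the full braid loop, this accumulated stabilizer product reduces, modulo the final stabilizer group, to exactly the tensor factor $X_{L2}$ demanded by the post-condition---a homological identity read off from the ``product of plaquettes in a region equals its boundary loop'' relation of the surface code---and the Consequence rule then yields $X_{L1}X_{L2}$. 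The case $I_{L1}Z_{L2}\to Z_{L1}Z_{L2}$ is identical with the roles of the two qubits' operators exchanged.

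The main obstacle is precisely this closing step: proving that the product of the stabilizers enclosed by the closed defect trajectory reduces, modulo the final stabilizer group, to exactly the relevant tensor factor of the other qubit's logical operator. This is where the fact that braiding realizes an entangling CNOT-type gate genuinely enters, and it must be turned into an explicit finite stabilizer computation over the enumerated qubit indices of Program~\ref{prog:surf-braid}. I expect to discharge it by a frontier/induction argument tracking which plaquettes lie inside the partial loop---analogous to the deforming-support bookkeeping already used in the qubit-moving proof and in Lemma~\ref{lemma:state-stabilizer}---followed by a direct check of the final closing equation. The commutation checks for the two inert triples and the linearity step are routine.
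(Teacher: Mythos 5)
Your proposal matches the paper's proof in all essentials: decompose the braid into the twelve \textbf{qmov} calls via Sequencing, check the four logical Pauli preconditions, forward the two inert predicates, and for $\{X_{L1}I_{L2}\}$ accumulate the swept X-chain and close it against the product of enclosed plaquettes ($A_S \Rightarrow$ the boundary-loop identity yielding $X_{L2}$), which is exactly the paper's computation. The only cosmetic difference is your framing of the $Z_{L1}I_{L2}$ case (``trajectory never crosses $Z_{L1}$''), whereas the paper argues that $Z_{L1}$ is deformed around the closed loop and returns to $Z_5Z_9Z_{10}Z_{15}$ --- but your homology-class phrasing covers this, so the argument is the same.
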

\begin{proof} To simplify the proof, we let $A_S = \wedge_i \svar_i$, i.e., the assertion generated by current active stabilizers in the surface code array. Note that $A_S$ may change at different time-step. The proof of the braiding operation involves tedious computation and we only give a sketch of the proof here. \\
(1) Prove $\{X_{L1} I_{L2}\}\prog\{X_{L1} X_{L2}\}$. Since $X_{L1} I_{L2} = X_{L1}$, we only need to focus on the reasoning on $X_{L1}$ only. From the verification of the qubit moving, 
$\{X_{L1} I_{L2}\} \textbf{qmov}(Z_{5}Z_{9}Z_{10}Z_{15}, Z_{15}Z_{20}Z_{21}Z_{26}) \{X_{L1}X_{15}  I_{L2}\}$ (after \textbf{correct} function, $\{X_{L1}I_{L2} \wedge X_{15} \wedge A_S$ becomes $\{X_{L1}X_{15}I_{L2} \wedge A_S$). Then, after all these qubit moving operations, we will get \\ $\{X_{L1}I_{L2}\}\prog\{X_{L1} X_{15}X_{26}X_{37}X_{43}X_{44}X_{45}X_{29}X_{18}X_{12}X_{11}X_{10}I_{L2} \wedge A_S\}$. Apply implication rule on $A_S$, we get \\ $A_S \Rightarrow (X_{10}X_{15}X_{16}X_{21}) (X_{21}X_{26}X_{27}X_{32})(X_{32}X_{37}X_{38}X_{43})(X_{33}X_{38}X_{39}X_{44})(X_{34}X_{39}X_{40}X_{45})\\
(X_{23}X_{28}X_{29}X_{34})(X_{12}X_{17}X_{18}X_{23})(X_{11}X_{16}X_{17}X_{22}) = (X_{15}X_{26}X_{37}X_{43}X_{44}X_{45}X_{29}X_{18}X_{12}X_{11}X_{10})(X_{27}X_{33}X_{28}X_{22})
$. Then, by the consequence rule, we have $\{X_{L1}I_{L2}\}\prog\{X_{L1}X_{L2}\}$.
\\
(2) Prove $\{I_{L1} Z_{L2}\}\prog\{Z_{L1} Z_{L2}\}$.
Before the qubit moving operation involves qubits in $Z_{L2}$, the precondition $\{I_{L1}Z_{L2}\}$ will be forwarded by the qubit moving operation. So, we only need to elaborate on $\textbf{qmov}(Z_{38}Z_{43}Z_{44}Z_{48}, Z_{39}Z_{44}Z_{45}Z_{49})$. \\
Before measuring $q_{44}$ in X basis, the assignment statement about $X_{44}$ will turn the precondition $\{I_{L1} Z_{L2}\}$ into \\
$Z_{L2}(Z_{39}Z_{44}Z_{45}Z_{49})$, following the previous verification steps of qubit moving. The if statement on $X_{44}$ and $q_{44}$ will then transform the precondition into $Z_{L2}(Z_{39}Z_{44}Z_{45}Z_{49}) \wedge X_{44}$. The following assignment statement about $Z_{38}Z_{43}Z_{44}Z_{48}$ will turn the precondition $Z_{L2}(Z_{39}Z_{44}Z_{45}Z_{49}) \wedge X_{44}$ into  $Z_{L2}(Z_{39}Z_{44}Z_{45}Z_{49})$. Likewise, the remaining qubit moving operations will change the precondition $Z_{L2}(Z_{39}Z_{44}Z_{45}Z_{49})$ to $Z_{L2}(Z_{40}Z_{45}Z_{46}Z_{50})$, $\cdots$, until $Z_{L2}(Z_{5}Z_{9}Z_{10}Z_{15})$, which is just $Z_{L1}Z_{L2}$. Thus, $\{I_{L1} Z_{L2}\}\prog\{Z_{L1} Z_{L2}\}$.
\\
(3) Prove $\{I_{L1} X_{L2}\}\prog\{I_{L1} X_{L2}\}$. Recall the verification of the qubit moving operation. It is easy to see that $\{I_{L1}\}\textbf{qmov}\{I_{L1}\}$ for any qubit moving operation in $P$. On the other hand, the qubit moving operations in $P$ does not involve any qubits in $X_{L2}$, so precondition $I_{L1}X_{L2}$ will be forwarded by all qubit moving operations, i.e.,
$\{I_{L1} X_{L2}\}\prog\{I_{L1} X_{L2}\}$.
\\
(4) Prove $\{Z_{L1} I_{L2}\}\prog\{Z_{L1}I_{L2}\}$. Since $Z_{L1} I_{L2} = Z_{L1}$, we only focus on the reasoning of $Z_{L1}$ here. It is obvious that starting from $Z_5Z_9Z_{10}Z_{15}$, the logical Z operator finally returns to $Z_5Z_9Z_{10}Z_{15}$ by a series of qubit moving operations. Thus, $\{Z_{L1} I_{L2}\}\prog\{Z_{L1}I_{L2}\}$.
\end{proof}

\end{document}